\documentclass[
 reprint,
 amsmath,amssymb,
 pra,superscriptaddress, nofootinbib
]{revtex4-2}
\usepackage[utf8]{inputenc}
\usepackage{graphicx}
\usepackage{dcolumn}
\usepackage{bm}
\usepackage{stmaryrd}
\usepackage{dsfont}
\usepackage{tikz}
\usepackage{tkz-euclide}
\usepackage[colorlinks=true, pdfstartview=FitV, linkcolor=blue, citecolor=blue, urlcolor=blue]{hyperref}
\usepackage{braket}
\usepackage{pgfplots}
\usepackage{enumerate}
\usepackage[caption=false]{subfig}
\usepackage{booktabs}
\usepackage{xcolor}
\usepackage{stmaryrd}
\usepackage{microtype}
\usepackage{bm}
\usepackage{amsmath,amssymb,amsthm,mathtools}
\usepackage{array}
\usepackage{enumitem}
\usepackage{mdframed}

\newtheorem{definition}{Definition}
\newtheorem{lemma}{Lemma}
\newtheorem{proposition}{Proposition}
\newtheorem{theorem}{Theorem}

\theoremstyle{definition}
\newmdtheoremenv[
  linewidth=0.5pt,
  innertopmargin=8pt,
  innerbottommargin=8pt,
  innerleftmargin=8pt,
  innerrightmargin=8pt
]{algorithm}{Algorithm}
\theoremstyle{plain}

\newcommand{\cP}{\mathcal P}
\newcommand{\cC}{\mathcal C}
\newcommand{\CNOT}{\operatorname{CNOT}}
\begin{document}

\title{Exact efficient simulation of noisy logical magic states using Clifford stabilizers}

\author{Yugo Takada}
 \email{yugo.takada1@gmail.com}
\affiliation{%
Graduate School of Engineering Science, The University of Osaka, 1--3 Machikaneyama, Toyonaka, Osaka 560--8531, Japan
}%
\author{Stephen~D.~Bartlett}
\affiliation{%
School of Physics, The University of Sydney, Sydney, NSW 2006, Australia
}%
\author{Dominic~J.~Williamson}
\affiliation{%
School of Physics, The University of Sydney, Sydney, NSW 2006, Australia
}%

\date{\today}

\begin{abstract}
The preparation of high-fidelity logical magic states is a crucial subroutine for universal fault-tolerant quantum computation (FTQC). 
Predicting the performance of FTQC and developing improved protocols rely on numerical methods to classically simulate logical magic state preparation in the presence of noise.  
Clifford logic on Pauli-stabilizer codes with circuit-level Pauli errors can be efficiently simulated using Pauli-stabilizer formalism, but the non-Clifford operations required to prepare logical magic states render generic simulation inefficient. 
We introduce \emph{Clifford-stabilizer simulation}, an exact and efficient algorithm based on updating a Clifford-stabilizer group to simulate noisy preparation protocols for a broad class of logical magic states used to implement non-Clifford gates in the third level of the Clifford hierarchy under circuit-level Pauli errors. 
Clifford-stabilizer simulation applies to a range of operations that commonly appear in preparation protocols for such logical magic states, including Pauli-stabilizer measurements, logical Clifford measurements, and transversal non-Clifford gates.
Our algorithm for Clifford-stabilizer simulation maps a non-Clifford circuit with sampled circuit-level Pauli errors to a Clifford circuit that exactly reproduces its measurement outcome distribution, achieving time and space complexities polynomial in relevant protocol parameters.
We perform exact simulation of magic state cultivation up to fault distance 7 by Clifford-stabilizer simulation. 
Our method provides a route to perform exact benchmarking of large-scale logical magic state preparation protocols required for useful FTQC.
\end{abstract}

\maketitle
\makeatletter
\renewcommand{\l@subsection}[2]{}
\renewcommand{\l@subsubsection}[2]{}
\makeatother
\tableofcontents

\newpage
\section{Introduction}\label{sec:intro}

Quantum computers are capable of performing important calculations for quantum chemistry~\cite{cao2019quantum} and cryptography~\cite{365700,shor1999polynomial} exponentially faster than any known classical method. 
However, quantum devices are inherently noisy, and in order to run large-scale quantum algorithms reliably on faulty components,  fault-tolerant quantum computation (FTQC) makes use of quantum error correction (QEC) to suppress errors~\cite{548464,10.1145/258533.258579,kitaev1997quantum} together with the careful design of fault-tolerant quantum gates to execute logic. 

Among quantum gates, Clifford gates are relatively straightforward to implement fault tolerantly. However, Clifford gates alone are not computationally universal, and universal quantum computation therefore requires non-Clifford gates. A standard approach to implementing such gates is to prepare magic states and consume them through gate teleportation~\cite{PhysRevA.62.052316}. 
Thus, the preparation of high-fidelity logical magic states is crucial in FTQC, but it often constitutes a major part of its space-time overhead, particularly in conventional approaches based on magic state distillation~\cite{PhysRevA.71.022316,PhysRevA.86.032324,PhysRevA.86.052329,Litinski2019magicstate}. Reducing this overhead has motivated the development of alternative protocols, including recent low-overhead schemes such as magic state cultivation~\cite{gidney2024magicstatecultivationgrowing,thxx-njr6,9kys-3whh,gpvl-lg4c,p8tw-6kq9,claes2025cultivatingtstatessurface,ch5r-cnfq,hirano2026efficientmagicstatecultivation,hetenyi2026constantdepthmagicstate}.

Constructing low-overhead logical magic state preparation protocols and evaluating their performance rely heavily on the classical simulation of noisy non-Clifford circuits.
Clifford circuits, which consist of Pauli-stabilizer state preparations, Clifford gates, and Pauli measurements, can be efficiently simulated via the stabilizer formalism~\cite{gottesman1998heisenbergrepresentationquantumcomputers,PhysRevA.70.052328} since they only produce Pauli-stabilizer states whose stabilizers are Pauli operators. This efficient simulation extends to noisy circuits with Pauli errors.
However, exact classical simulation of non-Clifford circuits generally incurs an exponential computational cost, even with a Pauli error model.
Specifically, state-vector simulation can exactly simulate generic non-Clifford circuits, but its time and memory costs grow exponentially with the number of physical qubits.
To enable simulation of larger-scale protocols, prior work has often replaced the target logical magic state preparation protocol with a computationally tractable Clifford analog and used the performance of the latter to estimate that of the target protocol.
In particular, several studies of magic state cultivation~\cite{gidney2024magicstatecultivationgrowing,thxx-njr6,9kys-3whh,gpvl-lg4c,p8tw-6kq9,claes2025cultivatingtstatessurface,ch5r-cnfq,hirano2026efficientmagicstatecultivation,hetenyi2026constantdepthmagicstate} simulate the preparation of the $S|+\rangle$ state as a proxy for the target magic state $T|+\rangle$. However, such an approach provides no general guarantee that the estimated performance accurately reflects that of the target non-Clifford protocol.

Stabilizer-decomposition
methods~\cite{PhysRevLett.116.250501,PhysRevX.6.021043,Bravyi2019simulationofquantum,kissinger2022simulating,aziz2026classicalsimulationslowmagic,kuyanov2026efficientclassicalsimulationlowrankwidth} can avoid an exponential dependence on the physical qubit count, but their costs generally increase exponentially with the number of non-Clifford gates.
Thus, exact simulation of noisy logical magic state preparation protocols is not scalable with these existing methods.
Although recently proposed simulation methods for universal fault-tolerant circuits~\cite{li2025softhighperformancesimulatoruniversal,haenel2026tsimfastuniversalsimulator,chase2026clifftfastexactsimulation,fang2026symftuniversalfaulttolerantquantum,wan2025cuttingstabiliserdecompositionsmagic} have improved the practical reach of near-Clifford simulation, their worst-case complexities are still exponential in parameters characterizing the non-Clifford complexity. 
A notable exception to the generic exponential scaling for logical magic state preparation protocols is the method recently proposed in Ref.~\cite{fby6-xjbm}. For a broad family of noisy protocols that prepare logical magic states for implementing non-Clifford gates in the third level of the Clifford hierarchy, its simulation complexity is polynomial in the number of physical qubits, the number of physical non-Clifford gates, and the stabilizer or Pauli rank of the target logical magic state. However, it still represents the target logical magic state through stabilizer- or Pauli-rank decompositions, so its cost can grow exponentially when the corresponding rank grows exponentially. 
Another efficient method for sampling syndrome statistics under circuit-level Pauli errors has also been developed for a certain class of error-correcting circuits with diagonal gates from the third level of the Clifford hierarchy~\cite{delafuente2026highthresholddecodingnonpaulicodes}. However, the proposed benchmarking procedure yields only upper bounds on logical error rates.

In this work, we introduce \textit{Clifford-stabilizer simulation}, an efficient method for the exact simulation of noisy preparation protocols for a broad class of logical magic states used to implement non-Clifford gates in the third level of the Clifford hierarchy, and with circuit-level Pauli noise.
The starting observation is that many commonly used logical magic states, including $|\bar{T}\rangle$, $|\overline{CCZ}\rangle$, and $|\bar{H}\rangle$ states, are unique common $+1$ eigenstates of Pauli stabilizers together with non-Pauli Clifford operators, which we refer to as \emph{Clifford stabilizers}. We formalize this description using a \textit{Clifford-stabilizer group} and identify sufficient conditions under which its evolution can be systematically tracked under circuit-level Pauli errors and the operations typically appearing in the non-Clifford circuits for preparing logical magic states.
Exploiting this structure, we develop an efficient algorithm to exactly simulate such non-Clifford circuits under circuit-level Pauli errors.
A summary of our main results is provided below.

\subsection{Summary of Main Results}
For a logical magic state described by a Clifford-stabilizer group satisfying suitable algebraic conditions (Definition~\ref{def:admissible}), our Clifford-stabilizer simulation can exactly and efficiently simulate non-Clifford circuits implementing Pauli- and Clifford-stabilizer measurements under circuit-level Pauli noise.
This class includes logical magic states used to implement practically important diagonal, and more general, non-Clifford logical gates in the third level of the Clifford hierarchy. 
We also show that the applicability of the Clifford-stabilizer simulation extends beyond Pauli- and Clifford-stabilizer measurements. In particular, it can also be used to simulate several other operations that commonly appear in logical magic state preparation protocols, including certain injection circuits, logical non-Clifford gates such as transversal non-Clifford gates, code block growth, more general measurements, midcircuit measurements with correction, and ancilla resets.

Our algorithm implementing the Clifford-stabilizer simulation is summarized in the following. For a non-Clifford circuit in the class considered here with sampled circuit-level Pauli errors, we show that there exists a Clifford circuit that exactly reproduces its measurement probabilities.
This equivalence exploits the structure of the evolution of the Clifford-stabilizer groups.
We provide an efficient algorithm for constructing the Clifford circuit, so the simulation of the noisy non-Clifford circuit is reduced to a Clifford simulation preceded by an efficient preprocessing step.
This enables exact and efficient simulation in which the measurement outcomes and the occurrence of a logical error for a sampled noisy non-Clifford circuit are obtained through the simulation of \emph{a single} Clifford circuit, without relying on stabilizer- or Pauli-rank decompositions.
The time and space complexities of our Clifford-stabilizer simulation are thus polynomial in relevant protocol parameters, such as the number of physical qubits, the number of logical qubits, and the number of physical non-Clifford gates, with no direct dependence on the stabilizer or Pauli rank of the target logical magic state.
    
We apply the proposed Clifford-stabilizer simulation method to perform exact simulation of magic state cultivation with fault distances up to $f=7$, reaching regimes that were previously inaccessible in the literature. Note that, although the simulation time per circuit run remains tractable as the protocol size increases, the logical error rate of the $f=7$ protocol is too low to obtain sufficiently many logical-error events by Monte Carlo sampling to make a statistically precise determination of the logical error rate. This limitation is separate from the cost of our simulation method and instead reflects the intrinsic difficulty of estimating extremely low logical error rates in large-scale protocols even if each run is efficiently simulable.

\subsection{Section Outline}
The remainder of this paper is organized as follows.
In Sec.~\ref{sec:prelims}, we review background material relevant to this work, including the Pauli-stabilizer group, its tableau representation, and several standard magic states. 
In Sec.~\ref{sec:clifford_stabilizer}, we introduce the notion of a Clifford-stabilizer group to represent logical magic states.
In Sec.~\ref{sec:clifford_stb_update_diagonal}, we establish systematic rules for updating Clifford-stabilizer groups under circuit-level Pauli errors and Pauli- or Clifford-stabilizer measurements for logical magic states associated with diagonal logical gates.
In Sec.~\ref{sec:simulation}, we apply this framework to exactly and efficiently simulate magic state cultivation, while deferring the description of an efficient algorithm that implements the framework to Sec.~\ref{sec:algorithm}.
In Sec.~\ref{sec:clifford_stb_update}, we generalize the Clifford-stabilizer simulation framework beyond the diagonal case and provide rigorous proofs of the update rules.
In Sec.~\ref{sec:algorithm}, we construct an efficient algorithm for updating the Clifford-stabilizer group.
In Sec.~\ref{sec:scope}, we show that the algorithm can simulate a broader class of operations appearing in logical magic state protocols beyond Pauli- or Clifford-stabilizer measurements.
Finally, in Sec.~\ref{sec:discuss}, we discuss the implications and possible extensions of our results.
Appendix~\ref{app:cultivation} provides further details on the magic state cultivation protocols simulated in Sec.~\ref{sec:simulation}.
Appendix~\ref{app:proofs} provides supplementary proofs of the claims presented in Sec.~\ref{sec:clifford_stb_update}.

\section{Preliminaries}\label{sec:prelims}

In this section, we review relevant background material on the Pauli-stabilizer formalism and magic states. 

\subsection{Pauli-stabilizer group}
\label{subsec:prelims_pauli}
We first define the $n$-qubit Pauli group as
\begin{equation}
\cP_n=\{\pm1, \pm i\} \times \{I,X,Y,Z\}^{\otimes n}.
\end{equation}
Then the Clifford hierarchy is defined recursively by
\begin{equation}
\cC^{(1)}=\cP_n,\qquad
\cC^{(k)}=\{U:UPU^\dagger\in \cC^{(k-1)} \quad \forall P\in\cP_n\}.
\end{equation}
The second level $\cC^{(2)}$ is called the Clifford group, and the gates in the Clifford group are Clifford gates. Common Clifford gates are $H$, $S$, $CZ$, and $\CNOT$ gates.
Gates outside $\cC^{(2)}$ are called non-Clifford gates. Typical non-Clifford gates in the third level $\cC^{(3)}$ are $T$ and $CCZ$ gates.

A Pauli-stabilizer group on $n$ qubits is an abelian subgroup $\mathcal{G}_\mathrm{P}\subset\cP_n$ that does not contain $-I$. A pure $n$-qubit Pauli-stabilizer state is the unique simultaneous $+1$ eigenstate of $n$ independent commuting Pauli generators
\begin{equation}
\mathcal{G}_\mathrm{P}=\langle G_1,\ldots,G_n\rangle,
\qquad G_j|\psi\rangle=|\psi\rangle.
\end{equation}

Now we describe how the Pauli-stabilizer group of a Pauli-stabilizer state is updated by Clifford gates and Pauli measurements.
A Clifford gate $C$ updates a Pauli-stabilizer group by conjugating every generator $G_j$:
\begin{equation}
G_j\leftarrow CG_jC^{\dagger}.
\end{equation}
When a Pauli operator $M$ is measured, there are two cases. If $M$ commutes with all current generators, then the measurement outcome is deterministic and the group remains unchanged. The outcome is found by expressing $M$ as
\begin{equation}
M=\pm\prod_{j}G_j.
\label{eq:pauli_group_update_0}
\end{equation}
If $M$ anticommutes with at least one generator, the outcome is uniformly random. The group is updated by choosing a pivot generator $G_p$ that anticommutes with $M$, and for every other generator $G_j$ that anticommutes with $M$, replace it by
\begin{equation}
G_j\leftarrow G_jG_p.
\label{eq:pauli_group_update_1}
\end{equation}
The product $G_jG_p$ commutes with $M$ because both $G_j$ and $G_p$ anticommute with $M$. Finally, replace the pivot $G_p$ by $M$ with the obtained random sign:
\begin{equation}
G_p\leftarrow (-1)^b M,
\qquad b\in\{0,1\}.
\label{eq:pauli_group_update_2}
\end{equation}

\subsection{Tableau representation}
The tableau representation~\cite{PhysRevA.70.052328} represents a Pauli-stabilizer state using binary representations of its stabilizer and destabilizer generators. This allows the group to be updated efficiently. For $n$ qubits, it contains binary variables $x_{ij},z_{ij}$ for $i\in\{1,\ldots,2n\}$ and $j\in\{1,\ldots,n\}$, together with phase bits $r_i$:
\begin{equation}
\resizebox{\columnwidth}{!}{$
\left(\begin{array}{ccc|ccc|c}
 x_{11}&\cdots&x_{1n}&z_{11}&\cdots&z_{1n}&r_1\\
 \vdots&\ddots&\vdots&\vdots&\ddots&\vdots&\vdots\\
 x_{n1}&\cdots&x_{nn}&z_{n1}&\cdots&z_{nn}&r_n\\\hline
 x_{(n+1)1}&\cdots&x_{(n+1)n}&z_{(n+1)1}&\cdots&z_{(n+1)n}&r_{n+1}\\
 \vdots&\ddots&\vdots&\vdots&\ddots&\vdots&\vdots\\
 x_{(2n)1}&\cdots&x_{(2n)n}&z_{(2n)1}&\cdots&z_{(2n)n}&r_{2n}
\end{array}\right)$}.
\end{equation}
Rows $1$ to $n$ represent destabilizer generators, and rows $n+1$ to $2n$ represent stabilizer generators.  We denote the Pauli operator represented by row $i$ by $R_i$. The destabilizer generators are mutually commuting, and each $R_i$, $i\in\{1,\ldots,n\}$, anticommutes only with the corresponding stabilizer generator $R_{n+i}$. If
\begin{equation}
R_i=\pm P_1P_2\cdots P_n,
\end{equation}
then the local bits determine $P_j$ by
\begin{equation}
(x_{ij},z_{ij})=(0,0),(1,0),(1,1),(0,1)
\mapsto I,X,Y,Z.
\end{equation}
The sign bit $r_i$ is $0$ for a positive sign and $1$ for a negative sign.

In this representation, updating the tableau under a Clifford gate, such as $H$, $S$, or
$\CNOT$, requires $O(n)$ time. 
For a single-qubit $Z$-basis measurement, if the outcome is random, the update costs $O(n^2)$.
If the outcome is deterministic, we can determine the outcome in $O(n^2)$ time using the destabilizers.
Note that, by maintaining an inverse stabilizer tableau~\cite{Gidney2021stimfaststabilizer}, the deterministic outcome can be determined in $O(n)$. 

\subsection{Magic states}
Magic states are resource states used to implement non-Clifford gates through gate teleportation.
More formally, an $n$-qubit magic state is a pure state that is not the unique simultaneous eigenstate of any $n$ independent commuting Pauli operators.
In this work, we focus on magic states that are the $+1$ eigenstates of non-Pauli Clifford operators, a class of resource states used to implement non-Clifford gates in the third level of the Clifford hierarchy.
A canonical single-qubit example is the $|T\rangle$ state:
\begin{equation}
|T\rangle=T|+\rangle=\frac{|0\rangle+e^{i\pi/4}|1\rangle}{\sqrt2},
\end{equation}
which is the unique $+1$ eigenstate of the non-Pauli Clifford operator
\begin{equation}
H_{XY}=\frac{X+Y}{\sqrt2}=TXT^\dagger=e^{-i\pi/4}SX.
\end{equation}
The $|T\rangle$ state is used to implement the $T$ gate through gate teleportation as depicted in Fig.~\ref{fig:teleportation}.
\begin{figure}[b]
    \centering
    \includegraphics[width=0.8\linewidth]{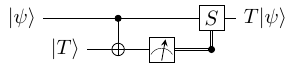}
    \caption{Gate-teleportation circuit for implementing the $T$ gate using the $|T\rangle$ state as a resource state.}
    \label{fig:teleportation}
\end{figure}
A standard three-qubit example is the $|CCZ\rangle$ state:
\begin{equation}
|CCZ\rangle=CCZ|+++\rangle,
\end{equation}
which is the unique simultaneous $+1$ eigenstate of the three non-Pauli Clifford operators
\begin{align}
    &CCZ \, X_1\, CCZ=X_1\, CZ_{2,3}, \\
    &CCZ \, X_2\, CCZ=X_2\, CZ_{1,3}, \\
    &CCZ \, X_3\, CCZ=X_3\, CZ_{1,2}.
\end{align}
Another example is the $|H\rangle$ state:
\begin{equation}
|H\rangle=e^{-i\pi Y/8}|0\rangle=\cos\left(\frac{\pi}{8}\right)|0\rangle+\sin\left(\frac{\pi}{8}\right)|1\rangle.
\end{equation}
The state is the unique $+1$ eigenstate of the non-Pauli Clifford operator
\begin{equation}
H=\frac{X+Z}{\sqrt2}=e^{-i\pi Y/8}Z e^{i\pi Y/8}.
\end{equation}

\section{Clifford-stabilizer group}
\label{sec:clifford_stabilizer}

In this section, we define the notion of a \emph{Clifford-stabilizer group}, which provides a useful representation for logical magic states used to implement non-Clifford gates in the third level of the Clifford hierarchy. This representation plays a crucial role in our simulation method. 
The idea behind this representation is that, for magic states that are $+1$ eigenstates of non-Pauli Clifford operators, these operators can be regarded as stabilizers even though they are not Pauli operators.
We refer to these generalized stabilizer operators as \emph{Clifford stabilizers}. In analogy with a Pauli-stabilizer state, which is characterized as a unique $+1$ eigenstate of its Pauli-stabilizer group, we describe a \emph{Clifford-stabilizer state} as a unique $+1$ eigenstate of a Clifford-stabilizer group, with generators given by a combination of Pauli stabilizers together with Clifford stabilizers. 
Note that the formulations of beyond-Pauli stabilizer group are also introduced in Refs.~\cite{ni2015non,Webster2022xpstabiliser} for diagonal non-Pauli cases and in Refs.~\cite{nest2011monomial,erew2026extremizingmeasuresmagicpure} for more general cases.

More precisely, a Clifford-stabilizer group $\mathcal{G}_\mathrm{C}$ is defined to be a finite subgroup of the Clifford group $\cC^{(2)}$ such that the common $+1$ eigenspace of all elements of $\mathcal{G}_\mathrm{C}$, which we call the stabilizer subspace, is nontrivial. This definition automatically excludes $-I$ from $\mathcal{G}_\mathrm{C}$. 
In general, the dimension of the stabilizer subspace can be more than one. However, to describe a Clifford-stabilizer state, we seek a 1-dimensional trivial representation of the Clifford-stabilizer group.  
Equivalently, we seek a Clifford-stabilizer group with a 1-dimensional stabilizer subspace.
Unlike a Pauli-stabilizer group, a Clifford-stabilizer group can be non-Abelian.  Nevertheless, the elements of the group commute on the stabilizer subspace.

We can see how this is possible even for non-commuting elements.  Consider the multiplicative group commutator for unitaries $A$ and $B$ defined by
\begin{equation}
[A,B]:=ABA^\dagger B^\dagger.
\label{eq:group_commutator_def}
\end{equation}
In the situations considered in this work, when evaluating the commutation of unitaries $A$ and $B$, the current state is either a common $+1$ eigenstate of both $A$ and $B$, or an eigenstate of one of the two operators while the other is a Hermitian unitary. Under either condition, if $[A,B]$ acts as $I$ on a state $|\psi\rangle$, i.e.,
\begin{equation}
[A,B]|\psi\rangle=|\psi\rangle,
\label{eq:commutator_i}
\end{equation}
then 
\begin{equation}
AB|\psi\rangle=BA|\psi\rangle,
\end{equation}
and we say that $A$ and $B$ \emph{effectively commute} on $|\psi\rangle$ even if they do not commute as operators. Similarly, if $[A,B]$ acts as $-I$ on $|\psi\rangle$, namely,
\begin{equation}
[A,B]|\psi\rangle=-|\psi\rangle,
\label{eq:commutator_minus_i}
\end{equation}
then
\begin{equation}
AB|\psi\rangle=-BA|\psi\rangle,
\end{equation}
and we say that $A$ and $B$ \textit{effectively anticommute} on $|\psi\rangle$. 
For the Clifford-stabilizer group $\mathcal{G}_\mathrm{C}$, two elements $G_i, G_j \in \mathcal{G}_\mathrm{C}$ may have the group commutator $[G_i,G_j] \neq I$ as operators, but they satisfy 
\begin{equation}
[G_i,G_j]|\phi\rangle=|\phi\rangle
\end{equation}
for a state $|\phi\rangle$ in the stabilizer subspace.

While some aspects of Clifford-stabilizer states and groups appear very similar to their Pauli-stabilizer counterparts, we remark that Clifford-stabilizer groups generally have other notable properties that are distinct from those of Pauli-stabilizer groups. In particular, in contrast to a Pauli-stabilizer group, for which the number of independent generators determines the dimension of the stabilizer subspace, there is no analogous counting rule based solely on the number of generators.
Moreover, a Clifford-stabilizer group that uniquely stabilizes a state need not contain every Clifford operator that stabilizes that state, as later illustrated by an example in Sec.~\ref{subsec:deterministic}.

An example of a Clifford-stabilizer state is a logical magic state of a Pauli-stabilizer code, which is the unique $+1$ eigenstate of a Clifford-stabilizer group generated by the Pauli stabilizers of the code together with global Clifford stabilizers that fix the logical degrees of freedom.

In what follows, the term \emph{stabilizer} denotes either a Pauli stabilizer or a Clifford stabilizer, and Clifford refers to non-Pauli Clifford, unless explicitly specified otherwise. 

\subsection{Example:  Surface code \texorpdfstring{$|\bar{T}\rangle$ state}{|Tbar> state}}
\label{subsec:clifford_stabilizer_ex_surface}
We consider the $|\bar{T}\rangle$ state of the distance-3 regular (unrotated) surface code shown in Fig.~\ref{fig:surface_code}(a) as an example of a logical magic state uniquely specified by a Clifford-stabilizer group. Thus, this state is a Clifford-stabilizer state.
\begin{figure}[t]
    \centering
    \includegraphics[width=0.95\linewidth]{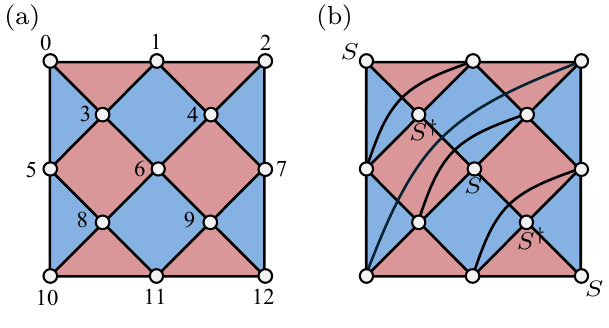}
    \caption{The distance-3 regular (unrotated) surface code. (a) The definition of the code. The circles represent qubits. The red and blue faces correspond to $X$ and $Z$ stabilizers, respectively. (b) The fold-transversal $\bar{S}$ gate. The $S$ and $S^\dagger$ gates act on the diagonal qubits alternately, and $CZ$ gates act on the rest of the qubits. Each $CZ$ gate, depicted as a black curved line, acts on a pair of qubits at mirrored positions.}
    \label{fig:surface_code}
\end{figure}
The surface code has 6 $X$-type and 6 $Z$-type stabilizers:
\begin{align}
G^X_1&=X_0X_1X_3,&
G^X_2&=X_1X_2X_4,\nonumber\\
G^X_3&=X_3X_5X_6X_8,&
G^X_4&=X_4X_6X_7X_9,\nonumber\\
G^X_5&=X_8X_{10}X_{11},&
G^X_6&=X_9X_{11}X_{12},
\end{align}
\begin{align}
G^Z_1&=Z_0Z_3Z_5,&
G^Z_2&=Z_5Z_8Z_{10},\nonumber\\
G^Z_3&=Z_1Z_3Z_4Z_6,&
G^Z_4&=Z_6Z_8Z_9Z_{11},\nonumber\\
G^Z_5&=Z_2Z_4Z_7,&
G^Z_6&=Z_7Z_9Z_{12}.
\end{align}
The $|\bar{T}\rangle$ state of the code is also stabilized by the Clifford operator
\begin{align}
\bar{H}_{XY}=
{}&e^{-i\pi/4}\bar{S}\bar{X}\nonumber\\
={}&e^{-i\pi/4}CZ_{1,5}CZ_{4,8}CZ_{7,11}CZ_{2,10}\nonumber\\
&\times S_0X_0S_3^\dagger X_3S_6X_6S_9^\dagger X_9S_{12}X_{12},
\label{eq:fold_h_operator}
\end{align}
where $\bar{X}$ is the logical $X$ operator and $\bar{S}$ is the fold-transversal logical $S$ operator shown in Fig.~\ref{fig:surface_code}(b). 

The Clifford-stabilizer group $\mathcal{G}_\mathrm{C}$ that uniquely defines the $|\bar{T}\rangle$ state is thus generated by all Pauli stabilizers together with the Clifford stabilizer $\bar{H}_{XY}$:
\begin{equation}
\mathcal{G}_\mathrm{C}=\langle G^X_1,\, G^X_2,\,\ldots, G^X_6,G^Z_1,\, G^Z_2,\,\ldots, G^Z_6, \,\bar{H}_{XY}\rangle.
\end{equation}
The Clifford stabilizer $\bar{H}_{XY}$ effectively commutes with every other generator of the group as follows. For $Z$ stabilizers, \begin{equation}
[G^Z_i,\bar{H}_{XY}]=I,
\qquad i=1,\ldots,6,
\end{equation}
because Pauli $Z$ operators commute with the diagonal Clifford operators. For $X$ stabilizers,
\begin{equation}
[G^X_i,\bar{H}_{XY}]=G^Z_i,
\qquad i=1,\ldots,6.
\label{eq:surface_commutation_x}
\end{equation}
Since $G^Z_i$ is a stabilizer of the $|\bar{T}\rangle$ state for every $i$, each $[G^X_i,\bar{H}_{XY}]$ acts as $I$ on the state, showing that every $X$ stabilizer $G^X_i$ effectively commutes with $\bar{H}_{XY}$ on the state.

\subsection{Example:  The \texorpdfstring{$\llbracket 8,3,2 \rrbracket$ code $|\overline{CCZ}\rangle$ state}{[[8,3,2]] code logical CCZ state}}

\begin{figure}[t]
    \centering
    \includegraphics[width=0.5\linewidth]{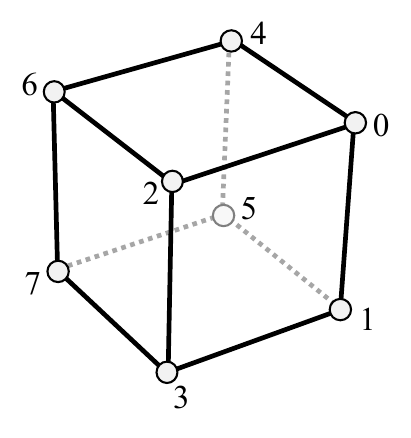}
    \caption{Cubic lattice on which the $\llbracket 8,3,2 \rrbracket$ code is defined. The circles represent qubits.}
    \label{fig:ccz_code}
\end{figure}

The $\llbracket 8,3,2 \rrbracket$ code $|\overline{CCZ}\rangle$ state provides a simple example with multiple Clifford stabilizers. 
The $\llbracket 8,3,2 \rrbracket$ code can be defined on a cubic lattice shown in Fig.~\ref{fig:ccz_code}. It has one $X$ stabilizer
\begin{equation}
G^X_1=X_0X_1X_2X_3X_4X_5X_6X_7
\end{equation}
and four $Z$ stabilizers
\begin{align}
G^Z_1&=Z_0Z_1Z_2Z_3,&
G^Z_2&=Z_0Z_1Z_4Z_5,\nonumber\\
G^Z_3&=Z_0Z_2Z_4Z_6,&
G^Z_4&=Z_0Z_1Z_2Z_3Z_4Z_5Z_6Z_7.
\end{align}
Three pairs of logical Pauli operators are
\begin{align}
\bar{X}_1&=X_0X_1X_2X_3,&
\bar{Z}_1&=Z_0Z_4,\nonumber\\
\bar{X}_2&=X_0X_1X_4X_5,&
\bar{Z}_2&=Z_0Z_2,\nonumber\\
\bar{X}_3&=X_0X_2X_4X_6,&
\bar{Z}_3&=Z_0Z_1.
\end{align}
The code has a transversal realization of the $\overline{CCZ}$ gate
\begin{equation}
\overline{CCZ}=T_0T^\dagger_1T^\dagger_2T_3T^\dagger_4T_5T_6T^\dagger_7.
\end{equation}
The $|\overline{CCZ}\rangle$ state, defined by  $|\overline{CCZ}\rangle=\overline{CCZ}|\overline{+++} \rangle$ of the $\llbracket 8,3,2 \rrbracket$ code, is thus a simultaneous $+1$ eigenstate of all Pauli stabilizers of the code and of the following three Clifford stabilizers:
\begin{align}
G^\mathrm{C}_1&=\overline{CCZ}\bar{X}_1\overline{CCZ}^\dagger=S_0X_0S^\dagger_1X_1S^\dagger_2X_2S_3X_3,\nonumber\\
G^\mathrm{C}_2&=\overline{CCZ}\bar{X}_2\overline{CCZ}^\dagger=S_0X_0S^\dagger_1X_1S^\dagger_4X_4S_5X_5,\nonumber\\
G^\mathrm{C}_3&=\overline{CCZ}\bar{X}_3\overline{CCZ}^\dagger=S_0X_0S^\dagger_2X_2S^\dagger_4X_4S_6X_6.
\end{align}
Hence, the Clifford-stabilizer group $\mathcal{G}_\mathrm{C}$ that uniquely defines the $|\overline{CCZ}\rangle$ state is
\begin{equation}
\mathcal{G}_\mathrm{C}=\langle G^X_1,\, G^Z_1,\, G^Z_2,\,G^Z_3,\, G^Z_4, \, G^C_1, \, G^C_2, \, G^C_3\rangle.
\end{equation}
The group commutators between the $Z$ stabilizers and the Clifford stabilizers are 
\begin{equation}
[G^Z_i,G^\mathrm{C}_j]=I,
\qquad i=1,2,3,4, \quad j=1,2,3,
\end{equation}
showing they commute with each other.
For $X$ stabilizers, the group commutators between them and the Clifford stabilizers are given by
\begin{equation}
[G^X_1,G^\mathrm{C}_i]=G^Z_i,
\qquad i=1,2,3.
\end{equation}
Because the $Z$ stabilizers act as the identity on the $|\overline{CCZ}\rangle$ state, this shows that the $X$ stabilizers and the Clifford stabilizers effectively commute with each other on the state.

\subsection{Example:  Steane code \texorpdfstring{$|\bar{H}\rangle$ state}{|Hbar> state}}
\begin{figure}[t]
    \centering
    \includegraphics[width=0.5\linewidth]{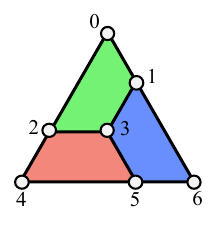}
    \caption{Steane code. The circles denote qubits. The $X$ and $Z$ stabilizer generators are defined to act on the qubits around each face.}
    \label{fig:steane_code}
\end{figure}

The Steane code $|\bar{H}\rangle$ state is an example whose Clifford stabilizers are not products of Pauli and diagonal Clifford operators. The Pauli stabilizers of the Steane code, shown in Fig.~\ref{fig:steane_code}, are
\begin{align}
G^X_1&=X_0X_1X_2X_3,&\nonumber
G^X_2&=X_2X_3X_4X_5,&\\
G^X_3&=X_1X_3X_5X_6,
\end{align}
\begin{align}
G^Z_1&=Z_0Z_1Z_2Z_3,&\nonumber
G^Z_2&=Z_2Z_3Z_4Z_5,&\\
G^Z_3&=Z_1Z_3Z_5Z_6.
\end{align}
The $|\bar{H}\rangle$ state of the code is defined as the simultaneous $+1$ eigenstate of all stabilizers and a logical $\bar{H}$ operator of the code.
The logical $\bar{H}$ operator can be implemented as the following transversal physical Hadamard operators:
\begin{equation}
\bar{H}=H_0H_1\cdots H_6.
\end{equation}
The Clifford-stabilizer group $\mathcal{G}_\mathrm{C}$ that uniquely specifies the $|\bar{H}\rangle$ state is given by
\begin{equation}
\mathcal{G}_\mathrm{C}=\langle G^X_1,\, G^X_2,\, G^X_3,\,G^Z_1,\, G^Z_2, \, G^Z_3, \, \bar{H}\rangle.
\end{equation}
The group commutators between the Pauli stabilizers and the Clifford stabilizer $\bar{H}$ are 
\begin{equation}
[G^X_i,\bar{H}]=G^X_iG^Z_i,
\qquad
[G^Z_i,\bar{H}]=G^X_iG^Z_i,
\end{equation}
where $G^X_iG^Z_i$ is a $Y$-type Pauli stabilizer. 
We can see that the generators in the group effectively commute with each other on the $|\bar{H}\rangle$ state because both $G^X_i$ and $G^Z_i$ are stabilizers of the state for every $i$.

\section{Clifford-stabilizer simulation: diagonal case}\label{sec:clifford_stb_update_diagonal}

We now introduce our Clifford-stabilizer simulation method for exactly simulating noisy logical magic state protocols under a standard circuit-level Pauli noise model. We first recall how ordinary Clifford simulation works. As described in Sec.~\ref{subsec:prelims_pauli}, efficient Clifford simulation is based on tracking the Pauli-stabilizer group that uniquely specifies the current state by updating this group under Clifford gates, state initialization, and measurements.

The core idea of our simulation method for noisy logical magic states is similar in spirit to the Pauli-stabilizer group tracking.
In Sec.~\ref{sec:clifford_stabilizer}, we showed that certain logical magic states can be represented as Clifford-stabilizer states, uniquely specified as the common $+1$ eigenstates of Clifford-stabilizer groups. Building on this observation, we present a systematic procedure for updating the Clifford-stabilizer group of a broad class of logical magic states under circuit-level Pauli errors and operations that appear in typical preparation protocols.
This procedure enables exact and efficient simulation of such protocols in the presence of circuit-level Pauli errors.

Typically, logical magic state protocols such as magic state cultivation~\cite{gidney2024magicstatecultivationgrowing,thxx-njr6,9kys-3whh,gpvl-lg4c,p8tw-6kq9,claes2025cultivatingtstatessurface,ch5r-cnfq,hirano2026efficientmagicstatecultivation,hetenyi2026constantdepthmagicstate} consist of the measurements of Hermitian Pauli or Clifford stabilizers in the Clifford-stabilizer group that defines the target logical magic state. 
These measurements prepare and verify the logical magic states. Accordingly, in this section we focus on the evolution of the Clifford-stabilizer group under circuit-level Pauli errors and Pauli- or Clifford-stabilizer measurements.
Later, in Sec.~\ref{sec:scope}, we show that our algorithm can also accommodate additional operations that commonly appear in more general protocols, such as unitary gates for injection, logical non-Clifford gates including transversal non-Clifford gates, and $\CNOT$ gates and new stabilizer measurements used for code growth.

Below, we show that for a broad class of logical magic states, the Clifford-stabilizer group can be updated in a manner closely analogous to the update of an ordinary Pauli-stabilizer group. In this section, we focus on the Clifford-stabilizer group update for logical magic states that are used for diagonal logic gates, which we call \emph{diagonal logical magic states}.
For such magic states, their Clifford stabilizers can be expressed as a product of diagonal Clifford and Pauli-$X$ operators. This class is particularly simple from the perspective of group updates while also encompassing practically important examples.
Later, in Sec.~\ref{sec:clifford_stb_update}, we generalize our simulation method to a broader class of logical magic states.

\subsection{Example:  Surface code \texorpdfstring{$|\bar{T}\rangle$ state}{|Tbar> state}}
\label{subsec:clifford_stb_update_ex_surface}
We first illustrate Clifford-stabilizer simulation with an explicit example that demonstrates the update rules. Specifically, we consider the surface code $|\bar{T}\rangle$ state introduced in Sec.~\ref{subsec:clifford_stabilizer_ex_surface}.  

We first discuss how the Clifford-stabilizer group can be updated when Pauli errors occur on the data qubits.  Suppose an error $X_0X_5$ occurs on the $|\bar{T}\rangle$ state. It flips the sign of $G^Z_2$, and it also conjugates $\bar{H}_{XY}$ as 
\begin{equation}
\widetilde{\bar{H}}_{XY}:=X_0X_5\bar{H}_{XY}(X_0X_5)^\dagger=iZ_0Z_1\bar{H}_{XY}.
\label{eq:h_xy_conjugate}
\end{equation}
Thus, the Clifford-stabilizer group after the error is 
\begin{equation}
\langle G^X_1,\, G^X_2,\,\ldots, G^X_6,G^Z_1,\, -G^Z_2,\,\ldots, G^Z_6, \,\widetilde{\bar{H}}_{XY}\rangle,
\end{equation}
which uniquely specifies the post-error state.
Now, suppose we measure the Clifford stabilizer $\bar{H}_{XY}$ on this state. Note that this Clifford measurement is a type of measurement performed in magic state cultivation~\cite{gpvl-lg4c}.
This operator commutes with all $Z$-type stabilizers $G^Z_1,\, -G^Z_2,\,\ldots, G^Z_6$.
For the $X$ stabilizers, Eq.~\eqref{eq:surface_commutation_x} implies that $\bar{H}_{XY}$ effectively commutes with every $G^X_i$ except $G^X_2$. It effectively anticommutes with $G^X_2$ because
\begin{equation}
[\bar{H}_{XY},G^X_2]=G^Z_2,
\end{equation}
and $G^Z_2$ acts as $-I$ on the current state, since $-G^Z_2$ is one of the stabilizers.
For $\widetilde{\bar{H}}_{XY}$, the group commutator is given by
\begin{align}
[\bar{H}_{XY},\widetilde{\bar{H}}_{XY}]&=[\bar{H}_{XY},D\bar{H}_{XY}]\\
&=[\bar{H}_{XY},D]D[\bar{H}_{XY},\bar{H}_{XY}](D)^{-1}\\
&=[\bar{H}_{XY},D]\\
&=-I,
\label{eq:surface_comm_h}
\end{align}
where $D:=iZ_0Z_1$.
Thus, they anticommute with each other. 
Since the $\bar{H}_{XY}$ measurement effectively anticommutes with some of the current generators, the probabilities of obtaining the $+1$ or $-1$ outcomes are both $50\%$.

We now show how the Clifford-stabilizer group can be updated under this measurement. A key property is that the measurement either effectively commutes or effectively anticommutes with each generator of the Clifford-stabilizer group that uniquely specifies the pre-measurement state. Owing to this structure, the group can be updated analogously to the Pauli-stabilizer group updates in Eqs.~\eqref{eq:pauli_group_update_1} and \eqref{eq:pauli_group_update_2}.  (The correctness of this generalized update rule is established rigorously in Proposition~\ref{prop:clifford_stabilizer_update} below.)
Thus the post-measurement state is uniquely stabilized by the Clifford-stabilizer group
\begin{equation}
\langle G^X_1,\, G^X_2\widetilde{\bar{H}}_{XY},\,\ldots, G^X_6,G^Z_1,\, -G^Z_2,\,\ldots, G^Z_6, \,\pm \bar{H}_{XY}\rangle,
\end{equation}
where the sign of $\bar{H}_{XY}$ is random.

The key observation in Eq.~\eqref{eq:h_xy_conjugate} is that the effect of the error is to dress $\bar{H}_{XY}$ by only Pauli $Z$ terms with a phase, and these $Z$ terms commute with the diagonal Clifford operators $S$, $S^\dagger$, and $CZ$ appearing in $\bar{H}_{XY}$. Hence it is ensured that $\bar{H}_{XY}$ and $\widetilde{\bar{H}}_{XY}$ either commute or anticommute with each other as in Eq.~\eqref{eq:surface_comm_h}.
If there exist errors that dress a generator by Pauli $X$ or $Y$, the measured operator would neither commute nor anticommute with the Pauli-dressed generator. 
However, in the following, we show that every possible error that arises on the data qubits under circuit-level Pauli errors dresses the generators only by $I$ or Pauli $Z$ up to phases.
Possible errors on the data qubits are Pauli errors and also diagonal $S$, $S^\dagger$, and $CZ$ Clifford errors. These Clifford errors arise even under circuit-level Pauli errors due to hook errors when measuring the Clifford operator $\bar{H}_{XY}$ as shown in Fig.~\ref{fig:error_prop}. Here, we consider the standard circuit to measure $\bar{H}_{XY}$ by jointly measuring physical Hermitian operators appearing in Eq.~\eqref{eq:fold_h_operator}, where $SX$ and $S^\dagger X$ are measured as Hermitian operators $e^{-i\pi/4}SX$ and $e^{i\pi/4}S^\dagger X$, respectively.
\begin{figure}[t]
    \centering
    \includegraphics[width=0.95\linewidth]{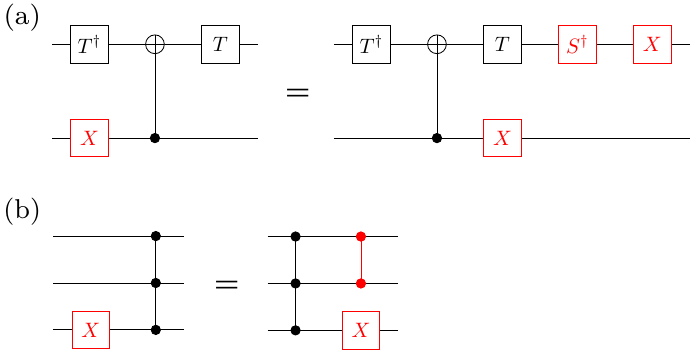}
    \caption{Clifford errors on the data qubits arising from hook errors when measuring the Clifford operator $\bar{H}_{XY}$ by jointly measuring physical Hermitian operators. (a) Circuit for measuring the physical Hermitian operator $T X T^\dagger=e^{-i\pi/4}SX=e^{i\pi/4}XS^\dagger$. The upper and lower wires represent a data and ancilla qubit, respectively. The $X$ error on the ancilla qubit produces a Clifford $S^\dagger$ error on the data qubit. (b) Circuit for measuring the physical Hermitian operator $CZ$. The lower wire corresponds to an ancilla qubit, and the other two wires are data qubits. The $X$ error on the ancilla qubit propagates to a Clifford $CZ$ error on the data qubits.}
    \label{fig:error_prop}
\end{figure}
Now we see how these errors conjugate the operators contained in the generators.
Pauli $Z$ errors commute or anticommute with Pauli operators and commute with the diagonal Clifford operators contained in $\bar{H}_{XY}$. Pauli $X$ errors also commute or anticommute with Pauli operators, but conjugate the diagonal Clifford operators in the following way:
\begin{align}
XSX&=iZS,
\label{eq:diagonal_dressing_ex_1}
\\
XS^\dagger X&=-iZS^\dagger,
\\
X_iCZ_{i,j}X_i&=Z_jCZ_{i,j}.
\end{align}
The diagonal Clifford errors $S$, $S^\dagger$, and $CZ$ commute with the diagonal operators and conjugate the $X$ terms as follows:
\begin{align}
SXS^\dagger&=-iZX,
\\
S^\dagger X S&=iZX,
\\
CZ_{i,j}X_iCZ_{i,j}&=Z_jX_i.
\label{eq:diagonal_dressing_ex_2}
\end{align}
Therefore, all possible errors only dress the generators by $I$ or Pauli $Z$ terms up to phases. Because a phase does not affect commutation, it is guaranteed that a Pauli- or Clifford-stabilizer measurement either commutes or anticommutes with every error-conjugated generator. 

To summarize, we have explicitly demonstrated that the Clifford-stabilizer group of the surface code $|\bar{T}\rangle$ state can be updated in a manner analogous to an ordinary Pauli-stabilizer group under one error and one Pauli- or Clifford-stabilizer measurement. 
This means that we can simulate the evolution of the state by this systematic update of the Clifford-stabilizer group under these error and measurement.

\subsection{Update rules for the Clifford-stabilizer group}
\label{subsec:rules_diagonal}
We now extend this discussion to other diagonal logical magic states and also describe the update rules in greater detail. In particular, we consider commutation and group updates under arbitrary sequences of circuit-level Pauli errors and Pauli- or Clifford-stabilizer measurements, as well as determining measurement outcomes for Pauli- or Clifford-stabilizer measurements that are deterministic. We do not provide rigorous proofs of these update rules in this section.
Later, in Sec.~\ref{sec:clifford_stb_update}, we establish the relevant conditions and properties in a more general setting that includes diagonal and non-diagonal magic states and provide rigorous proofs of the update rules.

We first describe the structure of a Clifford-stabilizer group that uniquely defines an ideal diagonal logical magic state.
In what follows, we use the term \emph{protocol measurements} to refer to measurements of Hermitian Pauli- or Clifford-stabilizer generators in the Clifford-stabilizer group that defines the target logical magic state.
For simplicity, in this section, we assume that each Pauli stabilizer is either an $X$-type or $Z$-type operator. For Pauli-stabilizer codes, this corresponds to logical magic states of Calderbank-Shor-Steane (CSS) codes~\cite{PhysRevLett.77.793,PhysRevA.54.1098}, including the surface code $|\bar{T}\rangle$ state and $\llbracket 8,3,2 \rrbracket$ $|\overline{CCZ}\rangle$ state given in Sec.~\ref{sec:clifford_stabilizer}.
Many previously studied Clifford-stabilizer codes~\cite{vrty-qs5h} also satisfy this requirement.
A pair of generators $G_i$ and $G_j$, where $G_i$ and $G_j$ are either Clifford stabilizers or Pauli $X$-type stabilizers, may not commute as operators, but in this case they effectively commute on the ideal logical magic state since their group commutator is equal to a Pauli $Z$-type stabilizer $G_k^Z$:
\begin{equation}
    [G_i, G_j]=G_k^Z,
    \label{eq:group_commutator_diag}
\end{equation}
where $G_k^Z$ can be a product of Pauli $Z$-type stabilizer generators.
Note that the right-hand side of Eq.~\eqref{eq:group_commutator_diag} must be a Pauli-$Z$-type operator because any Clifford stabilizer is the product of diagonal Clifford and Pauli-$X$ operators, and thus the group commutator can leave only Pauli $Z$ as nontrivial residuals. Hence, the iterated group commutator $[[G_i,G_j],G_k^Z]$ between a triple of stabilizers is trivial. 

In the following, we analyze the evolution of Clifford-stabilizer groups under errors and measurements. 
Throughout, we use the word \emph{original} to refer to entities associated with noiseless target logical magic states.
For simplicity, we suppose that every original Pauli- or Clifford-stabilizer generator $G$ is transversal~\cite{PhysRevLett.102.110502} or 2-fold transversal~\cite{Breuckmann2024foldtransversal,y14y-7kp3}.\footnote{In Sec.~\ref{sec:clifford_stb_update}, we describe how our analysis generalizes to gates that are not transversal.} 
That is, $G$ can be expressed as a tensor product of Pauli and Clifford operators with disjoint support:
\begin{equation}
  G = \bigotimes_j U_j,
    \label{eq:factor_diag}
\end{equation}
where $U_j \in \cC^{(2)}$ is a one- or two-qubit physical Pauli or Clifford operator.
If $U_j \in\cC^{(1)}$, $U_j$ is called a Pauli factor. If $U_j \in \cC^{(2)} \setminus \cC^{(1)}$, $U_j$ is called a Clifford factor.
In this section, for simplicity, we restrict each Clifford factor to a Hermitian diagonal operator or a Hermitian operator written as the product of a diagonal Clifford operator and a Pauli $X$, specifically, $e^{-i\pi/4}SX$, $e^{i\pi/4}S^\dagger X$, and $CZ$.
This covers the Clifford stabilizers of many typical diagonal logical magic states. 
When measuring $G$ in a protocol, we assume that the measurement is implemented as a sequence of controlled-$U_j$ operations, each realized by a controlled gate that may be conjugated by non-Clifford gates acting on the data qubits, as exemplified in Fig.~\ref{fig:error_prop}.
Let $\mathcal L_{\mathrm{P}}$ and $\mathcal L_{\mathrm{C}}$ denote the set of physical Pauli and Clifford factors appearing in any of the original  generators, respectively. 
We assume that all Clifford factors commute with one another for simplicity in this section.

We now describe how errors update the Pauli- or Clifford-stabilizer generators under conjugation. On data qubits, in addition to Pauli errors, Clifford errors $U_j$ in Eq.~\eqref{eq:factor_diag} can effectively occur due to hook errors under circuit-level Pauli errors as shown in Fig.~\ref{fig:error_prop_ancilla_herm}. 
We use the term \emph{Pauli-or-hook errors} to refer to errors on the data qubits that are either Pauli errors arising from the error model or Pauli or Clifford errors induced by hook errors under circuit-level Pauli errors. 
We note that circuit-level Pauli errors can lead to the measurement of an incorrect Clifford operator when the measured Clifford generator is not (fold-)transversal or one of the factors is not Hermitian as detailed in Sec.~\ref{subsubsec:effective}.
However, this does not happen (up to measurement readout errors) in the setting considered in this section, and Pauli-or-hook data errors can be regarded as phenomenological Pauli or Clifford data errors applied to the states between protocol measurements.

\begin{figure}[t]
    \centering
    \includegraphics[width=0.85\linewidth]{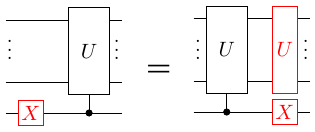}
    \caption{Propagation of a hook error through a controlled-$U$ gate, where $U$ is a Hermitian unitary. The bottom wire represents the ancilla qubit, while the remaining wires represent the data qubits.}
    \label{fig:error_prop_ancilla_herm}
\end{figure}

\subsubsection{Pauli dressings}

In our setting where Clifford factors in $\mathcal L_{\mathrm{C}}$ commute with one another, the only way an error can conjugate a Pauli or Clifford factor is through a \emph{Pauli dressing}. 
That is, the conjugated factor can be written as a product of a Pauli operator (including the identity) and the original factor multiplied by a phase in $\{\pm 1,\pm i\}$. 
In the following, we explain how Pauli and Clifford errors create Pauli dressings. 
When a single-qubit Pauli error $E_\mathrm{P}$ occurs, each Clifford factor $C\in\mathcal{L}_\mathrm{C}$ is conjugated by $E_\mathrm{P}$ and a Pauli dressing is obtained:
\begin{equation}
  E_\mathrm{P} CE_\mathrm{P}^\dagger = D_C(E_\mathrm{P})C,
  \label{eq:dressed_clifford_factor_diag}
\end{equation}
where $D_C(E_\mathrm{P})$ is a Pauli dressing, i.e., a trivial or non-trivial Pauli operator multiplied by a phase in $\{\pm 1,\pm i\}$, defined by
\begin{equation}
  \label{eq:dressing_Pauli_error_diag}
  D_C(E_\mathrm{P}):=E_\mathrm{P} CE_\mathrm{P}^\dagger C^\dagger=[E_\mathrm{P}, C].
\end{equation}
When a Clifford error $E_\mathrm{C}$ occurs, each Pauli factor $P \in \mathcal{L}_\mathrm{P}$ is conjugated by $E_\mathrm{C}$ and a Pauli dressing is acquired:
\begin{equation}
  E_\mathrm{C} P E_\mathrm{C}^\dagger=D_P(E_\mathrm{C})P, 
  \label{eq:dressed_pauli_factor_diag}
\end{equation}
where $D_P(E_\mathrm{C})$ is a Pauli dressing
\begin{equation}
  \label{eq:dressing_Clifford_error_diag}
  D_P(E_\mathrm{C})=E_\mathrm{C}PE_\mathrm{C}^\dagger P^\dagger=[E_\mathrm{C},P].
\end{equation}
Let $\mathcal D$ denote the set of possible Pauli dressings induced by any single-qubit Pauli error or any Clifford error on the data qubits under circuit-level Pauli errors.
A nontrivial Pauli operator contained in each Pauli dressing $D \in \mathcal D$ is a single-qubit Pauli $Z$, as explicitly shown in Eqs.~\eqref{eq:diagonal_dressing_ex_1}--\eqref{eq:diagonal_dressing_ex_2}.

Now it is guaranteed that every possible Pauli dressing either commutes or anticommutes with any original stabilizer generator $G$:
\begin{equation}
  [G,D]=\pm I, \qquad  \forall D\in\mathcal{D}.
  \label{eq:commute_anti_condition_stab_diag}
\end{equation}  
This is because the possible nontrivial Pauli dressings are only Pauli $Z$ up to phases, which commute with all diagonal Clifford operators contained in the Clifford factors. Since other components contained in the factors are Pauli, Eq.~\eqref{eq:commute_anti_condition_stab_diag} holds.

\subsubsection{Commutation relation for the first measurement}

Below, we explain that it is guaranteed that the first protocol measurement performed after Pauli-or-hook errors either effectively commutes or effectively anticommutes with any Pauli-dressed original stabilizer generators due to Eq.~\eqref{eq:commute_anti_condition_stab_diag}. 
Let $G$ be an original stabilizer generator and let $M$ be a protocol measurement, i.e., one of the original stabilizer generators. Suppose a current generator is Pauli-dressed due to Pauli-or-hook errors:
\begin{equation}
  \tilde{G}=DG,
\end{equation}
where $D$ is an accumulated Pauli dressing.
The group commutator between $M$ and $\tilde{G}$ can be written as
\begin{align}
  [M,\tilde{G}]
  &=
  [M,DG]\\
  &=
  [M,D]\,D[M,G]D^\dagger.
  \label{eq:commutator_decompose_diag}
\end{align}
In Eq.~\eqref{eq:commutator_decompose_diag}, $[M,G]$ is either the identity $[M,G]=I$ or a Pauli $Z$-type stabilizer $[M,G]=G^Z$. 
Since the Pauli dressing $D$ is $I$ or a $Z$-type Pauli up to a phase, $D$ and $[M,G]$ commute with each other, i.e., 
\begin{equation}
  D[M,G]D^\dagger=[M,G].
\end{equation}
Also, because $M$ is one of the original stabilizer generators, due to Eq.~\eqref{eq:commute_anti_condition_stab_diag}, $[M,D]=\pm I$.
Thus, it is guaranteed that $[M,\tilde{G}]$ is either $[M,\tilde{G}]=\pm I$ or $[M,\tilde{G}]=\pm G^Z$. 
Furthermore, possible Pauli-or-hook errors on the data qubits are either Pauli errors, which commute or anticommute with $G^Z$, or diagonal Clifford errors, which commute with $G^Z$. Accordingly, either $G^Z$ or $-G^Z$ is a current stabilizer.
Therefore, it is guaranteed that $M$ and $\tilde{G}$ either effectively commute or effectively anticommute on the current state $|\psi\rangle$:
\begin{equation}
  [M,\tilde{G}]|\psi\rangle=\pm|\psi\rangle.
  \label{eq:M-dressed-G_diag}
\end{equation}
Note that any Pauli-$Z$ stabilizer remains a stabilizer of the post-measurement state up to a sign since a protocol measurement and a Pauli-$Z$ stabilizer commute with each other by definition of the Clifford-stabilizer group.

\subsubsection{Anticommuting measurements}

When a measurement effectively commutes with every current stabilizer generator, the Clifford-stabilizer group remains unchanged. We now explain that, when a measurement effectively anticommutes with some current stabilizer generators, the group can be updated in a manner analogous to the update rules for an ordinary Pauli-stabilizer group.
Suppose a protocol measurement $M$ effectively anticommutes with at least one current generator.
Then, the group is updated by first choosing a pivot generator $G'_p$ that effectively anticommutes with $M$. Next, every other generator $G'_j$ that effectively anticommutes with $M$ is replaced by
\begin{equation}
G'_j\leftarrow G'_jG'_p.
\label{eq:anti_update_diag}
\end{equation}
Finally, replace the pivot $G'_p$ by the measured operator $M$ with a uniformly random sign:
\begin{equation}
G'_p\leftarrow (-1)^b M,
\qquad b\in\{0,1\}.
\label{eq:anti_update_diag_meas}
\end{equation}
The reason this update rule is correct is that, it is guaranteed that $G'_jG'_p$ effectively commutes with the measurement $M$ on the post-measurement state if both pre-measurement stabilizers $G'_j$ and $G'_p$ effectively anticommute with $M$ on the pre-measurement state. See Proposition~\ref{prop:clifford_stabilizer_update} for a rigorous proof of this.

\subsubsection{Commutation relations for later measurements}

We next explain that a protocol measurement either effectively commutes or effectively anticommutes with the current stabilizer generators on the current state after an arbitrary sequence of Pauli-or-hook errors and protocol measurements. This ensures that we can always update the current Clifford-stabilizer group in a way analogous to the update rules for an ordinary Pauli-stabilizer group.
After one protocol measurement, any tracked generator of the current Clifford-stabilizer group can be written as a product of at most two Pauli-dressed original stabilizer generators using Eqs.~\eqref{eq:anti_update_diag} and~\eqref{eq:anti_update_diag_meas}. Subsequent Pauli-or-Clifford errors only change the Pauli dressings.

We now consider how the group commutator between a protocol measurement $M$ and the product $\tilde{G}_1\cdots \tilde{G}_s$ of Pauli-dressed original stabilizer generators for some $s$ can be expressed. It is expressed as 
\begin{align}
  [M,\tilde{G}_1\cdots \tilde{G}_s]
  &=
  \prod_{k=1}^{s}
  (\tilde{G}_1 \cdots \tilde{G}_{k-1})
  [M,\tilde{G}_{k}]
  (\tilde{G}_1 \cdots \tilde{G}_{k-1})^\dagger \\
  &=
  \prod_{k=1}^{s}
  [M,\tilde{G}_{k}],
  \label{eq:general_product_comm_prod_diag}
\end{align}
since each $[M,\tilde{G}_{k}]$ is either the identity or a Pauli-$Z$ stabilizer up to a $\pm 1$ sign due to Eq.~\eqref{eq:commutator_decompose_diag}, which commutes with the Pauli-$Z$-dressed original generators $\tilde{G}_1, \ldots, \tilde{G}_{k-1}$. Note that a Pauli-$Z$ stabilizer commutes with every original generator.

By Eq.~\eqref{eq:general_product_comm_prod_diag} and the fact that every Pauli-$Z$ stabilizer remains a current stabilizer up to a sign, it follows that, for the next protocol measurement $M$ and any current stabilizer generator $G'$, 
\begin{equation}
  [M,G']|\psi\rangle=\pm|\psi\rangle.
  \label{eq:M-S-state_diag}
\end{equation}
By repeatedly applying this argument, Eq.~\eqref{eq:M-S-state_diag} holds for a protocol measurement and any current generator after an arbitrary sequence of Pauli-or-hook errors and protocol measurements.
Note that $M$ can be any original generator for each protocol measurement.

\begin{figure*}[t]
    \centering
    \includegraphics[width=0.9\linewidth]{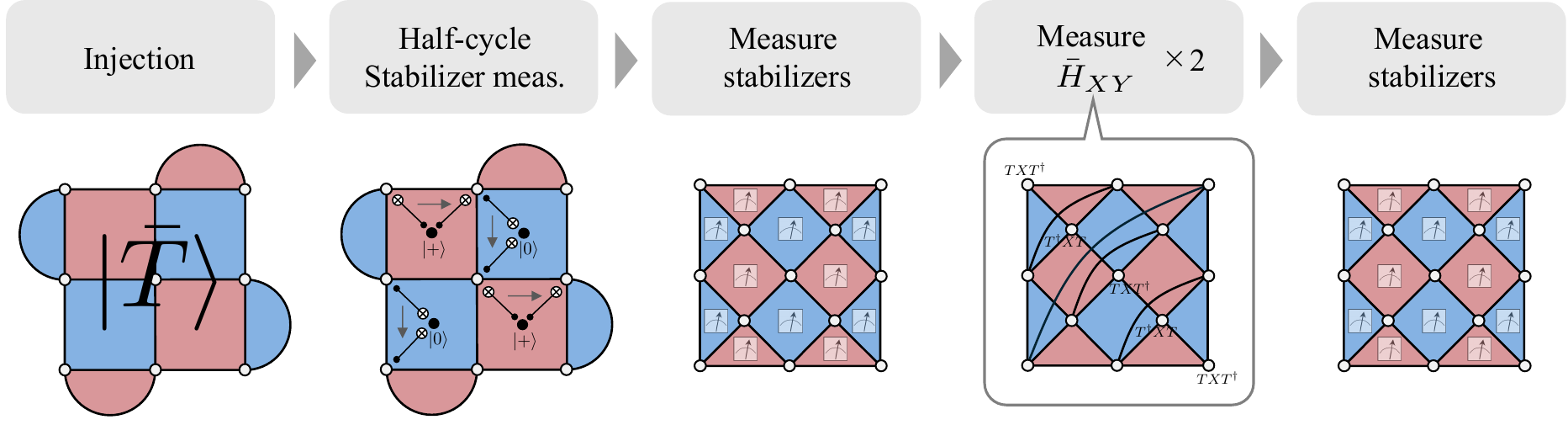}
    \caption{Fold-transversal magic state cultivation protocol with fault distance $3$ up to the end of the cultivation stage. ``Stabilizer'' in this figure refers to standard Pauli stabilizers. At the injection stage, we prepare the $|\bar{T}\rangle$ state encoded in the $d=3$ rotated surface code in a non-fault-tolerant manner. The code is then grown to the $d=3$ regular surface code by applying $\CNOT$ gates with appropriate ancilla qubits, which corresponds to a half-cycle of stabilizer measurements. On the regular surface code, all stabilizers indicated by the measurement symbols are measured once. Then $\bar{H}_{XY}$ is measured twice, followed by a round of stabilizer measurements. We post-select on runs in which all measurement outcomes are trivial.}
    \label{fig:cultivation}
\end{figure*}

\subsubsection{Deterministic measurements}

So far, we have seen that a protocol measurement either effectively commutes or effectively anticommutes with every generator of the current Clifford-stabilizer group after any sequence of Pauli-or-hook errors and protocol measurements. 
When the measurement effectively anticommutes with at least one current generator, the group is updated according to the rules given in Eqs.~\eqref{eq:anti_update_diag} and ~\eqref{eq:anti_update_diag_meas}. On the other hand, when the measurement effectively commutes with every current stabilizer, its outcome is deterministic and the Clifford-stabilizer group remains unchanged.

We have not yet addressed whether, in the deterministic case, the measurement outcome can be computed directly from the currently tracked generators.
For a standard Pauli-stabilizer group, the deterministic outcome can be computed from the current Pauli stabilizer generators. 
However, for a general Clifford-stabilizer group, this is no longer automatic as detailed in Sec.~\ref{subsec:deterministic}.
Nevertheless, we show that the deterministic outcome can be computed from the current generators for typical logical magic states including the examples in Sec.~\ref{sec:clifford_stabilizer} due to their favorable algebraic structure specified in Definition~\ref{def:primary_support_exactness}, which we assume to hold in this section.
Specifically, if the outcome of a protocol measurement $M$ is deterministic, a product of current generators is equal to $(-1)^m M$, where $(-1)^m$ is the deterministic measurement outcome.
The proof of this is given later in Proposition~\ref{prop:deterministic}, which establishes the corresponding result in a more general setting.

\subsubsection{From update rules to efficient simulation}

In conclusion, the Clifford-stabilizer group defining logical magic states used for diagonal logic gates can be updated in a manner analogous to the update rules for an ordinary Pauli-stabilizer group under circuit-level Pauli errors and protocol measurements.
This provides a systematic framework for tracking the evolving states. However, the update rules alone do not yield an efficient simulation algorithm, since an explicit algorithm for implementing the Clifford-stabilizer group update itself is required. In Sec.~\ref{sec:algorithm}, we provide such an algorithm, with time and space complexities that are polynomial in parameters relevant to logical magic state protocols, such as the number of physical qubits, the number of logical qubits, and the number of physical non-Clifford gates, with no direct dependence on the stabilizer or Pauli rank of the target logical magic state.
Specifically, in Sec.~\ref{sec:algorithm}, we show that there exists a Pauli-stabilizer group update whose measurement probabilities exactly reproduce those of the Clifford-stabilizer group update. Equivalently, a noisy non-Clifford circuit subject to circuit-level Pauli errors that implements protocol measurements for a logical magic state preparation can be mapped to a Clifford circuit that exactly reproduces its measurement probabilities. 
Our algorithm therefore reduces the simulation of a non-Clifford circuit to the simulation of a Clifford circuit. 
The details of the algorithm and its complexity analysis are provided in Sec.~\ref{sec:algorithm}.

\section{Application to magic state cultivation}\label{sec:simulation}
\label{sec:numerics}
\begin{figure}[t]
    \centering
    \includegraphics[width=0.9\linewidth]{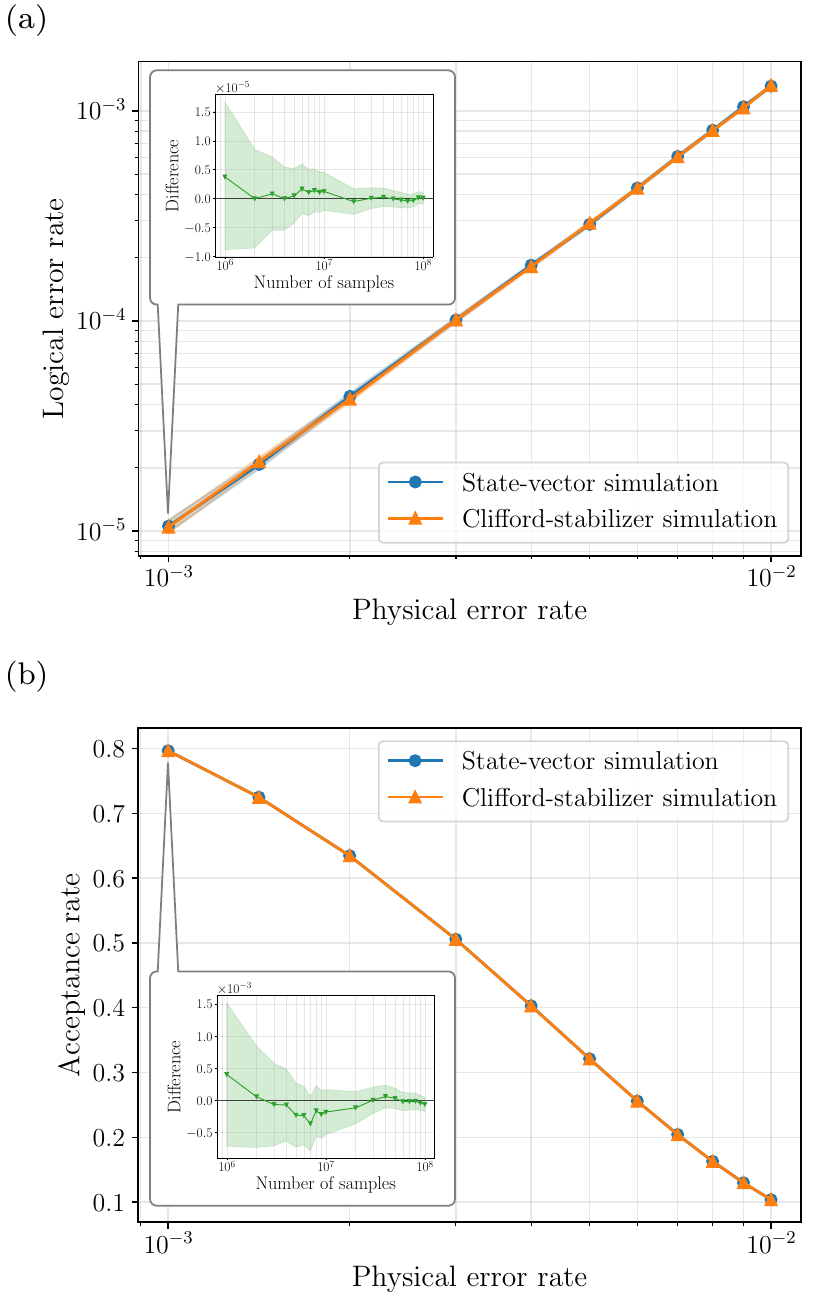}
    \caption{Comparison of the logical error rate and acceptance rate for the $f=3$ fold-transversal magic state cultivation obtained using state-vector and the proposed Clifford-stabilizer simulations. The number of Monte Carlo samples is $10^8$. (a) Logical error rate. (b) Acceptance rate. The insets show the corresponding differences of the values from the two simulation methods as functions of the number of Monte Carlo samples at the physical error rate $p=10^{-3}$. The differences are computed as state-vector simulation minus Clifford-stabilizer simulation.}
    \label{fig:numerics_statevector}
\end{figure}
\begin{figure}[t]
    \centering
    \includegraphics[width=0.9\linewidth]{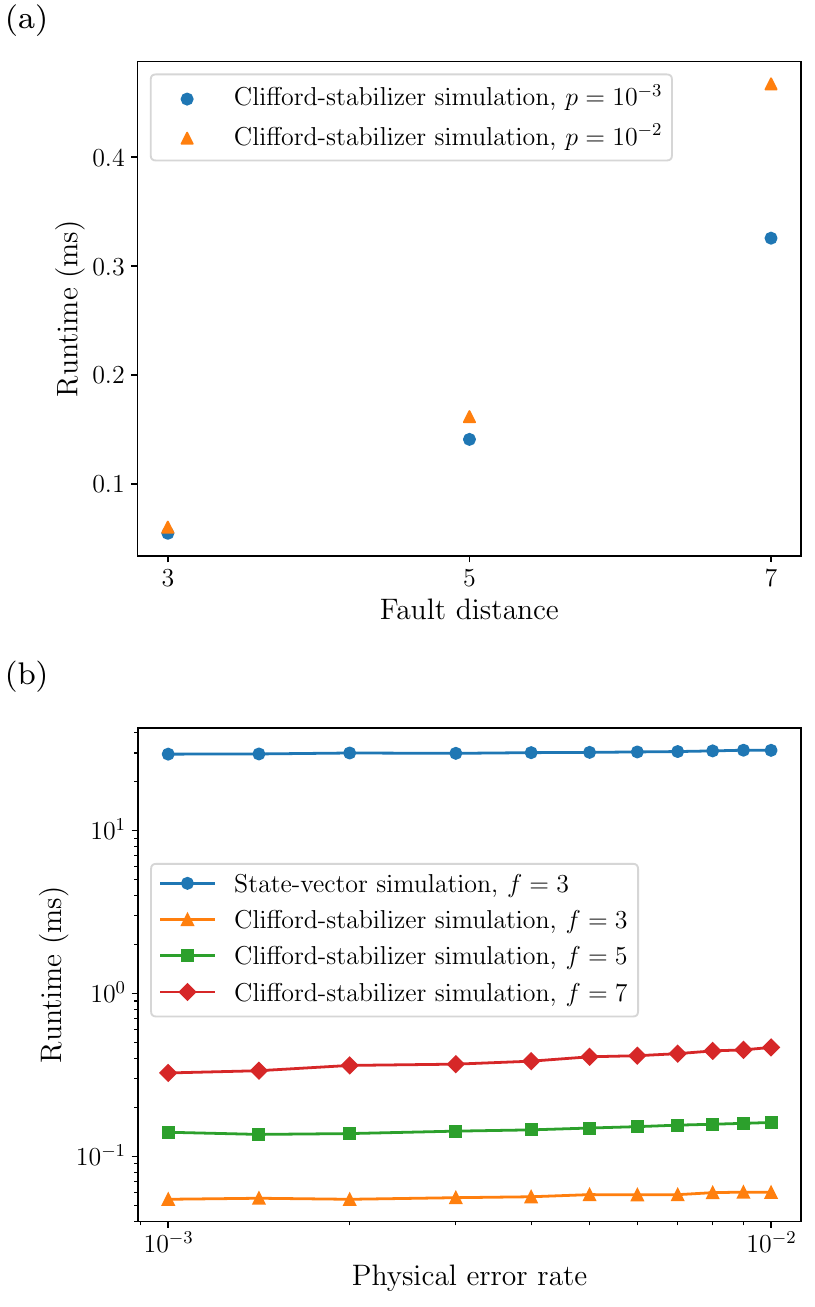}
    \caption{Runtime per Monte Carlo shot for the fold-transversal magic state cultivation. The runtime is measured as the average over $1000$ runs. (a) Runtime of the proposed Clifford-stabilizer simulation as a function of the fault distance $f$ with fixed physical error rates $p=10^{-3}$ and $p=10^{-2}$. (b) Runtime of the state-vector simulation with $f=3$ and of Clifford-stabilizer simulation with $f=3,5,$ and $7$, shown as a function of the physical error rate.}
    \label{fig:runtime}
\end{figure}
In this section, we demonstrate that the update rules for the diagonal case, together with the algorithm presented later in Sec.~\ref{sec:algorithm}, can be used for an exact and efficient simulation of important FTQC protocols. Specifically, we simulate the fold-transversal surface code cultivation~\cite{gpvl-lg4c}, one of the promising magic state cultivation protocols.
This protocol prepares the high-fidelity $|\bar{T}\rangle$ state based on the regular surface code described in Secs.~\ref{subsec:clifford_stabilizer_ex_surface} and~\ref{subsec:clifford_stb_update_ex_surface} with the measurements of the Pauli and Clifford stabilizers. 
Magic state cultivation~\cite{gidney2024magicstatecultivationgrowing,thxx-njr6,9kys-3whh,gpvl-lg4c,p8tw-6kq9,claes2025cultivatingtstatessurface,ch5r-cnfq,hirano2026efficientmagicstatecultivation,hetenyi2026constantdepthmagicstate} has attracted considerable interest because of its potential to substantially reduce the cost of logical magic state preparation in FTQC. However, exact simulation at larger, practically relevant fault distances, which is crucial for reliably assessing the performance and cost of the protocols, remains challenging. Due to this challenge, previous studies~\cite{gidney2024magicstatecultivationgrowing,thxx-njr6,9kys-3whh,gpvl-lg4c,p8tw-6kq9,claes2025cultivatingtstatessurface,ch5r-cnfq,hirano2026efficientmagicstatecultivation,hetenyi2026constantdepthmagicstate} have often relied on approximations or exact simulations restricted to small instances.

We simulate the fold-transversal surface code cultivation protocols with fault distances $f\in \{3, 5, 7\}$ up to the end of the cultivation stage under circuit-level Pauli noise.  While we do not simulate the escape stage, we note that our algorithm can also simulate the escape stage as detailed in Sec.~\ref{sec:scope}.
The protocols we simulated with fault distances $f=3$ and $f=5$ are based on those in Ref.~\cite{gpvl-lg4c} with slight modifications detailed in Appendix~\ref{app_subsec:cultivation_protocol}. 
While a fault distance $f=7$ protocol was not developed in previous work, we construct it by extending the $f=3$ and $f=5$ protocols.
We illustrate the procedures of the $f=3$ protocol in Fig.~\ref{fig:cultivation}, and the details of the procedures including the $f=5$ and $f=7$ protocols and the simulation setup are provided in Appendices~\ref{app_subsec:cultivation_protocol} and~\ref{app_subsec:cultivation_setup}.
Although we present the Clifford-stabilizer group update only for protocol measurements in Sec.~\ref{sec:clifford_stb_update_diagonal}, we note that our framework can also simulate the injection implemented via a unitary encoding circuit, as detailed in Sec.~\ref{sec:scope}.
To execute a Clifford circuit that exactly reproduces the measurement probabilities of a target non-Clifford circuit, we use the Clifford simulator \texttt{Stim}~\cite{Gidney2021stimfaststabilizer}. 

First, to validate the exactness of our efficient simulation method, we compare the logical error rates and acceptance rates of the $f=3$ protocol obtained using our Clifford-stabilizer simulation method with those obtained using state-vector simulation.
Those values are estimated by performing Monte Carlo simulation.
For state-vector simulation, we use \texttt{Qulacs}~\cite{Suzuki2021qulacsfast}.
Simulation results are shown in Fig.~\ref{fig:numerics_statevector}.  
The results agree within the error bars.
As further evidence of the exactness, the insets in Fig.~\ref{fig:numerics_statevector} show the corresponding differences between the state-vector and Clifford-stabilizer simulations at the physical error rate $p=10^{-3}$ as functions of the number of Monte Carlo samples. 
As expected, the results from the two simulation methods converge to the same values as the number of samples increases.

\begin{figure}[t]
    \centering
    \includegraphics[width=0.9\linewidth]{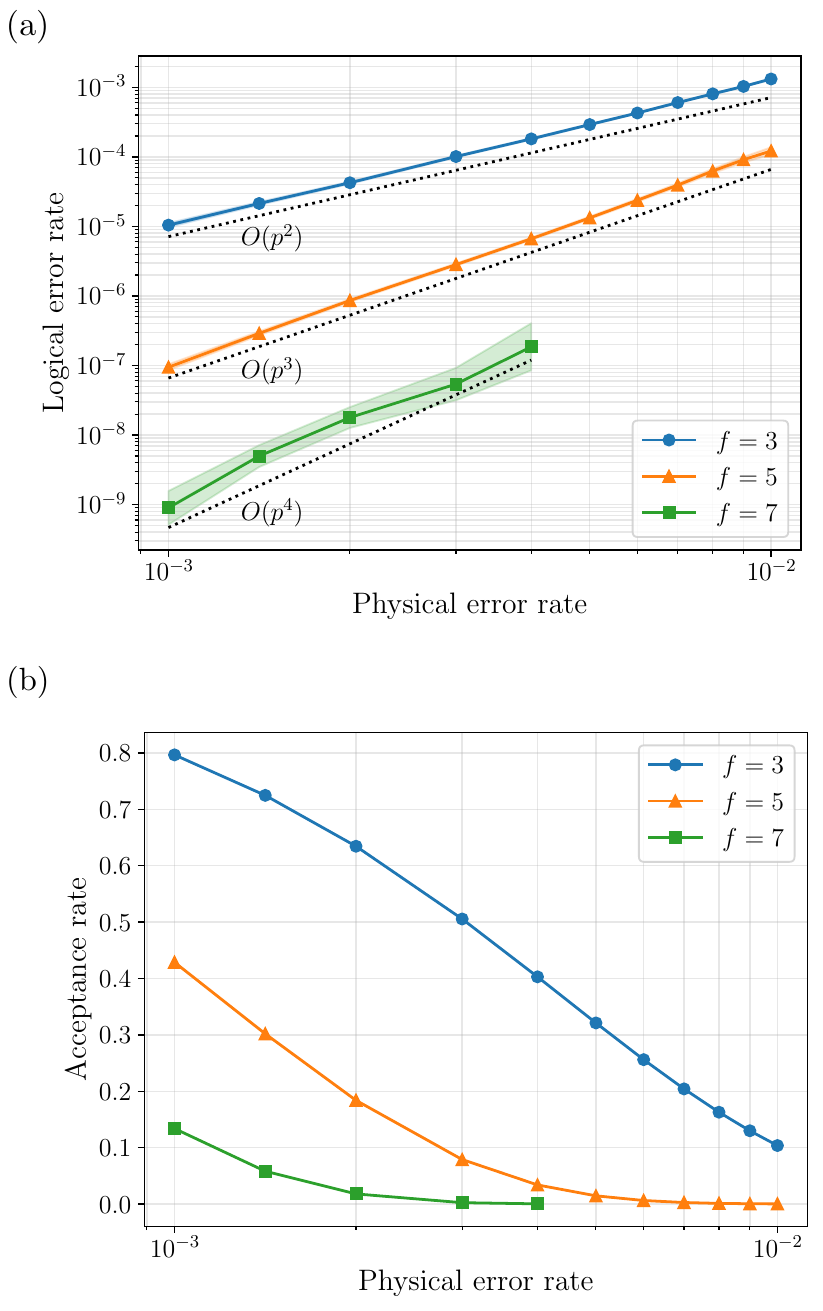}
    \caption{Logical error rate and acceptance rate for the fold-transversal magic state cultivation with FEC performed at the end of the protocol. The numbers of Monte Carlo samples are $10^{8}$ for $f=3$, $10^{10}$ for $f=5$, and $10^{11}$ for $f=7$. (a) Logical error rate. The dotted lines are reference guides indicating $O(p^{(f+1)/2})$ scaling for each $f$ and are not fits to the data. (b) Acceptance rate.}
    \label{fig:numerics_ec}
\end{figure}
\begin{figure}[t]
    \centering
    \includegraphics[width=0.9\linewidth]{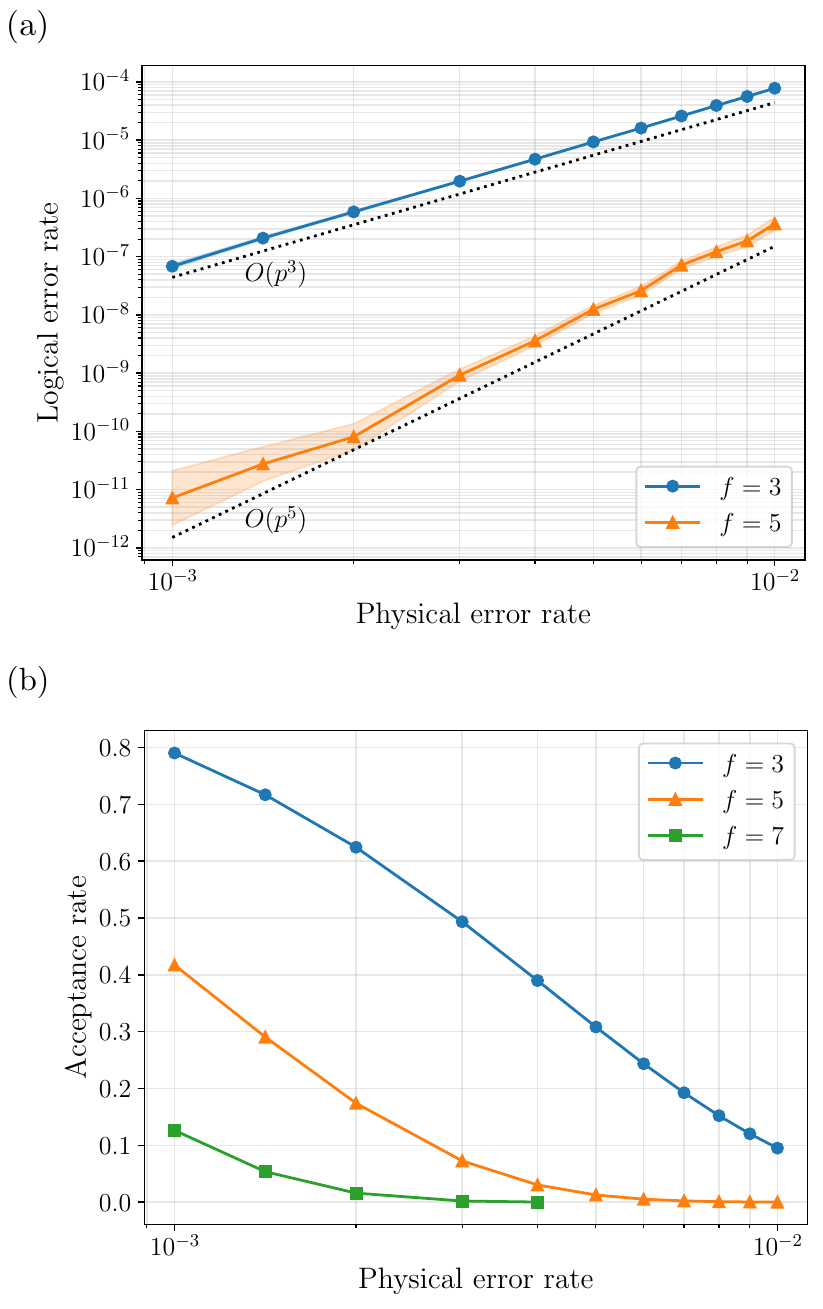}
    \caption{Logical error rate and acceptance rate for the fold-transversal magic state cultivation with FED performed at the end of the protocol. The numbers of Monte Carlo samples are $10^{10}$ for $f=3$, $10^{12}$ for $f=5$, and $10^{10}$ for $f=7$. (a) Logical error rate. The dotted lines are reference guides indicating $O(p^{f})$ scaling for each $f$ and are not fits to the data. (b) Acceptance rate. }
    \label{fig:numerics_ed}
\end{figure}
We highlight one subtle but important aspect of the simulations used for Fig.~\ref{fig:numerics_statevector}.
To evaluate whether the output state has a logical error for each Monte Carlo shot, we perform a fictitious (i.e., noiseless) round of stabilizer measurements at the end of the protocol to remove potentially corrected or detected errors.
The state is guaranteed to lie within the code space after discarding the state or applying a correction based on the noiseless syndrome.
This is a common procedure in numerical simulations of QEC protocols to evaluate whether a logical error has occurred.
For the simulation up to the end of the cultivation stage, it is common in the literature to post-select on the trivial syndrome of the fictitious measurement round. We refer to this as fictitious error detection (FED).
However, for the purpose of comparing the results obtained by Clifford-stabilizer simulation with those obtained by the state-vector simulation, this convention leads to such low logical error rates that the state-vector simulation cannot obtain enough samples with sufficiently small error bars even for the $f=3$ protocol. 
Thus, in Fig.~\ref{fig:numerics_statevector}, we instead apply a correction based on the noiseless syndrome, which we refer to as fictitious error correction (FEC). 
This allows the state-vector simulation to estimate the logical error rates with sufficient precision.
We emphasize that, as long as the same setting is used for both simulation methods, the choice between FEC and FED is unimportant for demonstrating the exactness of our efficient simulation method. 

Having validated the exactness of our simulation method, we can now analyze the performance of magic state cultivation under circuit-level Pauli noise through simulation in regimes that were not previously accessible. 
In Fig.~\ref{fig:runtime}(a), we first show the runtime of the simulation at fixed physical error rates as a function of fault distance of the protocol.
We observe that the runtime increases only modestly with the size of the protocol. 
Note that the numbers of qubits used are $16$ for $f=3$, $47$ for $f=5$, and $99$ for $f=7$. Also, the numbers of non-Clifford gates in the circuits are $29$ for $f=3$, $97$ for $f=5$, and $221$ for $f=7$. 
Indeed, as detailed in Sec.~\ref{subsec:complexity}, the runtime scales polynomially with relevant parameters, such as the number of qubits and the number of non-Clifford gates. 
We observe in Fig.~\ref{fig:runtime}(a) that the runtime also increases with the physical error rate. This is because the runtime depends on the number of sampled errors, but this dependence is only linear as explained in Sec.~\ref{subsec:complexity}.
Regarding the comparison with the state-vector simulation, as shown in Fig.~\ref{fig:runtime}(b), the runtime of Clifford-stabilizer simulation is over two orders of magnitude smaller than that of the state-vector simulation even for $f=3$. 
The runtimes of the state-vector simulation for $f=5$ and $f=7$ are not measurable since the numbers of qubits used in the protocols are beyond the memory feasibility of state-vector simulation in our classical computers. 

In Fig.~\ref{fig:numerics_ec}, we show the logical error rates and acceptance rates of the $f=3,5,$ and $7$ protocols with FEC. The numbers of Monte Carlo samples are $10^{8}$ for $f=3$, $10^{10}$ for $f=5$, and $10^{11}$ for $f=7$. For the $f=7$ protocol, we only simulate physical error rates up to a maximum value of $p=4\times10^{-3}$ because the acceptance rate is too low to estimate the logical error rate when $p$ is higher. 
In Fig.~\ref{fig:numerics_ed}, we also show the logical error rates and acceptance rates with FED, which is a more common convention in the setting of magic state cultivation up to the cultivation stage.
The numbers of Monte Carlo samples are $10^{10}$ for $f=3$, $10^{12}$ for $f=5$, and $10^{10}$ for $f=7$. 
For the $f=7$ protocol, the logical error rates are too low for logical error events to be observed within the sampled data, so the logical error rates are not shown. 
We emphasize that this limitation is separate from the cost of the proposed simulation method and is instead the limitation of Monte Carlo simulations in general for reliably estimating very low logical error rates even if each run is efficiently simulable.

These results demonstrate that we can efficiently and exactly simulate the magic state cultivation protocols up to fault distance $7$.
This is a regime that prior work could not simulate; in Ref.~\cite{gpvl-lg4c}, even an $f=5$ protocol is computationally too expensive to exactly simulate.

\section{Clifford-stabilizer simulation:  general case}\label{sec:clifford_stb_update}
In Sec.~\ref{sec:clifford_stb_update_diagonal}, we showed that, for protocol measurements, Clifford-stabilizer groups for diagonal logical magic states can be updated in an analogous way to ordinary Pauli-stabilizer groups under circuit-level Pauli errors. 
We also introduced several simplifying assumptions. In this section, we generalize the Clifford-stabilizer group update to logical magic states beyond the diagonal setting, while also relaxing these simplifying assumptions. 
In particular, we allow Clifford operators appearing in the Clifford factors to be non-diagonal, Clifford factors to not commute with one another, individual factors to be non-Hermitian, and protocol measurements to be non-transversal.

\subsection{Sufficient conditions on commutation relations}
We first provide sufficient conditions under which protocol measurements always either effectively commute or effectively anticommute with every generator of the current Clifford-stabilizer group under circuit-level Pauli errors.

\subsubsection{Pauli dressings}
\label{subsubsec:commutation}

A Pauli or Clifford generator $G$ of an original Clifford-stabilizer group is expressed as a product of physical Pauli and Clifford operators:
\begin{equation}
  \label{eq:factor}
  G = \prod_j U_j,
\end{equation}
where $U_j \in \cC^{(2)}$ is a physical Pauli or Clifford operator.
If $U_j \in\cC^{(1)}$, $U_j$ is called a Pauli factor. If $U_j \in \cC^{(2)} \setminus \cC^{(1)}$, $U_j$ is called a Clifford factor.
Each Pauli factor is a one-qubit operator, and each Clifford factor is a one- or two-qubit operator.
When measuring $G$ in a protocol, the measurement is implemented as a sequence of controlled-$U_j$ operations, each realized by a controlled gate that may be conjugated by non-Clifford gates acting on the data qubits.
Let $\mathcal L_{\mathrm{P}}$ and $\mathcal L_{\mathrm{C}}$ denote the set of physical Pauli and Clifford factors appearing in any of the original generators, respectively. 

In a circuit performing protocol measurements under circuit-level Pauli errors, possible errors on data qubits are Pauli operators or certain Clifford operators. Clifford errors on the data qubits arise from hook errors, as shown in Fig.~\ref{fig:error_prop_ancilla}. Thus, each possible Clifford error is equal to a Clifford factor having the same support.
Note that when $C$ is not Hermitian, a controlled-$(C^\dagger)^2$ gate is also produced, changing the \textit{effectively measured} operator. For the sake of clarity in our proofs, we set aside this effect for the moment and account for it in Sec.~\ref{subsubsec:effective}.

\begin{figure}[t]
    \centering
    \includegraphics[width=0.85\linewidth]{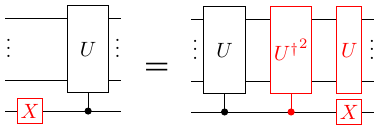}
    \caption{Hook-error propagation through a controlled-$U$ gate, with $U$ an arbitrary unitary. The bottom wire denotes the ancilla qubit, and the remaining wires denote the data qubits.}
    \label{fig:error_prop_ancilla}
\end{figure}

We first consider the cases where any two Clifford factors $C, C' \in\mathcal L_{\mathrm{C}}$ either commute or anticommute with each other:
\begin{equation}
  \label{eq:Clifford-Clifford-commutation}
  [C,C']=\pm I.
\end{equation}
In this setting, an error can conjugate a Pauli or Clifford factor only by introducing a \emph{Pauli dressing}.
That is, the conjugated factor can be expressed as the product of a Pauli operator (possibly the identity), the original factor, and a phase in $\{\pm 1, \pm i\}$.
This includes the case where the factor remains unchanged and the case where the sign is flipped, which occur when the error and the factor commute or anticommute with each other, respectively. 
In the following, we describe how Pauli and Clifford errors give rise to Pauli dressings specifically. When a single-qubit Pauli error $E_\mathrm{P}$ occurs, conjugating each Clifford factor $C\in\mathcal{L}_\mathrm{C}$ by $E_\mathrm{P}$ produces a Pauli dressing:
\begin{equation}
  E_\mathrm{P} CE_\mathrm{P}^\dagger = D_C(E_\mathrm{P})C,
  \label{eq:dressed_clifford_factor}
\end{equation}
where $D_C(E_\mathrm{P})$ denotes a Pauli dressing, namely, a trivial or non-trivial Pauli operator multiplied by a phase in $\{\pm 1,\pm i\}$, defined by
\begin{equation}
  \label{eq:dressing_Pauli_error}
  D_C(E_\mathrm{P}):=E_\mathrm{P} CE_\mathrm{P}^\dagger C^\dagger=[E_\mathrm{P}, C].
\end{equation}
We assume that the possible Pauli dressings $D_C(E_\mathrm{P})$ for any single-qubit Pauli error $E_\mathrm{P}$ are Pauli operators of weight at most one up to phases.
Similarly, when a Clifford error $E_\mathrm{C}$ occurs, conjugating each Pauli factor $P \in \mathcal{L}_\mathrm{P}$ by $E_\mathrm{C}$ produces a Pauli dressing:
\begin{equation}
  E_\mathrm{C} P E_\mathrm{C}^\dagger=D_P(E_\mathrm{C})P, 
  \label{eq:dressed_pauli_factor}
\end{equation}
where $D_P(E_\mathrm{C})$ is a Pauli dressing defined by
\begin{equation}
  \label{eq:dressing_Clifford_error}
  D_P(E_\mathrm{C})=E_\mathrm{C}PE_\mathrm{C}^\dagger P^\dagger=[E_\mathrm{C},P].
\end{equation}
A Pauli dressing is non-identity up to a phase when an error and a factor neither commute nor anticommute with each other.
Under the current assumption of Eq.~\eqref{eq:Clifford-Clifford-commutation}, dressings of Clifford factors induced by Clifford errors are $\pm I$.

Let $\mathcal D$ denote the set of possible Pauli dressings produced by any single-qubit Pauli error or any Clifford error on the data qubits under circuit-level Pauli errors.
For a protocol measurement to either effectively commute or effectively anticommute with every current generator, the corresponding measured original generator is required to either commute or anticommute with the Pauli dressings of the current generators.
To ensure this, we require that every possible Pauli dressing either commutes or anticommutes with every Clifford factor appearing in any of the original generators:
\begin{equation}
  [C,D]=\pm I, \qquad \forall C\in\mathcal{L}_\mathrm{C}, \quad \forall D\in\mathcal{D}.
  \label{eq:commute_anti_condition}
\end{equation} 
If Eq.~\eqref{eq:commute_anti_condition} is satisfied, every possible Pauli dressing either commutes or anticommutes with any original generator $G$:
\begin{equation}
  [G,D]=\pm I, \qquad  \forall D\in\mathcal{D}.
  \label{eq:commute_anti_condition_stab}
\end{equation} 
We now establish conditions under which Eq.~\eqref{eq:commute_anti_condition} is satisfied.
We first show that a Pauli dressing of a Clifford factor $C$ induced by a Pauli error $E_\mathrm{P}$ either commutes or anticommutes with the Clifford factor $C$ itself if the Clifford factor $C$ has involutive Pauli action on the Pauli $E_\mathrm{P}$, defined by
\begin{equation}
  C E_\mathrm{P} C^\dagger=\pm C^\dagger E_\mathrm{P} C,
  \label{eq:involutive_original}
\end{equation}
or equivalently, 
\begin{equation}
  C^2 E_\mathrm{P} (C^\dagger)^2=\pm E_\mathrm{P}.
  \label{eq:involutive}
\end{equation}
The following proposition proves this. 

\begin{proposition}
If a Clifford $C$ has involutive Pauli action on a Pauli $E_\mathrm{P}$ as defined in Eq.~\eqref{eq:involutive_original}, the Pauli dressing $D_C(E_\mathrm{P})$ defined in Eq.~\eqref{eq:dressing_Pauli_error} either commutes or anticommutes with $C$:
\begin{equation}
  C D_C(E_\mathrm{P})=\pm D_C(E_\mathrm{P}) C .
\end{equation}
\end{proposition}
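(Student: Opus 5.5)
Write $D := D_C(E_\mathrm{P}) = E_\mathrm{P} C E_\mathrm{P}^\dagger C^\dagger$. Since $C$ is Clifford and $E_\mathrm{P}$ is Pauli, $C E_\mathrm{P}^\dagger C^\dagger$ is a Pauli operator, so $D = E_\mathrm{P}\,(C E_\mathrm{P}^\dagger C^\dagger)$ is a Pauli up to phase. The plan is to compute the conjugate $C D C^\dagger$ directly and show it equals $\pm D$. That is exactly the claim $CD = \pm DC$.

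First I will expand the conjugate:
\begin{equation}
C D C^\dagger = (C E_\mathrm{P} C^\dagger)\,(C^2 E_\mathrm{P}^\dagger (C^\dagger)^2).
\end{equation}
Next I apply the involutive condition, Eq.~\eqref{eq:involutive}, in the form $C^2 E_\mathrm{P} (C^\dagger)^2 = \pm E_\mathrm{P}$. Taking the adjoint of this relation gives $C^2 E_\mathrm{P}^\dagger (C^\dagger)^2 = \pm E_\mathrm{P}^\dagger$, with the same sign since the sign is real. Hence
\begin{equation}
C D C^\dagger = \pm (C E_\mathrm{P} C^\dagger) E_\mathrm{P}^\dagger .
\end{equation}

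It remains to compare this with $D = E_\mathrm{P} (C E_\mathrm{P}^\dagger C^\dagger)$. Set $Q := C E_\mathrm{P} C^\dagger$, which is a Pauli operator, so $C E_\mathrm{P}^\dagger C^\dagger = Q^\dagger$. Then $D = E_\mathrm{P} Q^\dagger$ and $C D C^\dagger = \pm Q E_\mathrm{P}^\dagger$. Two Pauli operators either commute or anticommute, so $Q E_\mathrm{P}^\dagger = \pm E_\mathrm{P}^\dagger Q$. Therefore $C D C^\dagger = \pm E_\mathrm{P}^\dagger Q$.

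The last step, which I expect to be the only real obstacle, is relating $E_\mathrm{P}^\dagger Q$ back to $E_\mathrm{P} Q^\dagger$. I will take $E_\mathrm{P}$ to be a Hermitian Pauli, up to a phase that cancels in $D$ because $D$ involves $E_\mathrm{P}$ and $E_\mathrm{P}^\dagger$ once each. Then $E_\mathrm{P}^\dagger = E_\mathrm{P}$, and $Q$ is Hermitian, so $Q^\dagger = Q$. This gives
\begin{equation}
C D C^\dagger = \pm E_\mathrm{P} Q = \pm D ,
\end{equation}
which proves the claim. If the phases are handled carefully this step is routine: a general phase $\omega E_\mathrm{P}$ contributes $\omega$ and $\bar\omega$, which multiply to $1$ in both $D$ and $CDC^\dagger$.
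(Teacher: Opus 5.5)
Your proof is correct and follows essentially the same route as the paper: you conjugate $D_C(E_\mathrm{P})$ by $C$ to obtain $(CE_\mathrm{P}C^\dagger)\bigl(C^2E_\mathrm{P}^{\dagger}(C^\dagger)^2\bigr)$, apply the involutive condition, and then use that two Paulis either commute or anticommute. The only difference is that you track $E_\mathrm{P}^\dagger$ and the phase of $E_\mathrm{P}$ explicitly, whereas the paper implicitly treats $E_\mathrm{P}$ as Hermitian.
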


\begin{proof}
Conjugating $D_C(E_\mathrm{P})$ by $C$ gives
\begin{equation}
  C D_C(E_\mathrm{P}) C^\dagger=(CE_\mathrm{P}C^\dagger)\bigl(C^2E_\mathrm{P}(C^\dagger)^2\bigr).
\end{equation}
Using Eq.~\eqref{eq:involutive}, 
\begin{equation}
  C D_C(E_\mathrm{P}) C^\dagger=\pm (CE_\mathrm{P}C^\dagger) E_\mathrm{P}.
\end{equation}
Since $CE_\mathrm{P}C^\dagger$ is a Pauli operator up to a sign and $E_\mathrm{P}$ is also a Pauli operator, they either commute or anticommute. Therefore
\begin{equation}
  C D_C(E_\mathrm{P}) C^\dagger=\pm E_\mathrm{P}(CE_\mathrm{P}C^\dagger)=\pm D_C(E_\mathrm{P}), 
  \label{eq:dressing_preserve}
\end{equation}
or equivalently, 
\begin{equation}
  C D_C(E_\mathrm{P})=\pm D_C(E_\mathrm{P})C.
\end{equation}
\end{proof}

Thus, if a Clifford factor has involutive Pauli action on every Pauli, a Pauli dressing on the Clifford factor induced by any Pauli error commutes or anticommutes with the Clifford factor itself. 
As a special case, a Hermitian Clifford operator has involutive Pauli action on every Pauli. We remark that a Clifford $C$ having involutive Pauli action on every Pauli is equivalent to its square being a Pauli $P\in\cP_2$:
\begin{equation}
  C^2=P,
  \label{eq:psc}
\end{equation}
which was a key property required for the method in Ref.~\cite{fby6-xjbm}.
So far, we have considered Pauli dressings of Clifford factors arising from Pauli errors. We must also account for the complementary case in which Clifford errors induce Pauli dressings of Pauli factors. Notably, all dressings arising in this latter case are already contained in the set obtained in the former, up to signs due to Eqs.~\eqref{eq:dressing_Pauli_error} and~\eqref{eq:dressing_Clifford_error}. 
Thus, it is enough to consider Pauli dressings by Pauli errors to ensure  Eq.~\eqref{eq:commute_anti_condition}.

Even if the Pauli dressing of a Clifford factor either commutes or anticommutes with the Clifford factor itself, this is not enough for the condition in Eq.~\eqref{eq:commute_anti_condition}. 
The Pauli dressing of a Clifford factor must also either commute or anticommute with any other Clifford factor whose support overlaps with that of the Pauli dressing.
Thus we require that, for each qubit, all Pauli dressings in $\mathcal{D}$ that act nontrivially on that qubit have the same Pauli axis.
Equivalently, all Pauli dressings in $\mathcal{D}$ commute with each other:
\begin{equation}
  [D,D']=I, \qquad \forall D,D'\in\mathcal{D}.
  \label{eq:dressings_commute_each}
\end{equation} 
Indeed, Eq.~\eqref{eq:dressings_commute_each} also implies Eq.~\eqref{eq:commute_anti_condition} as proven in Appendix~\ref{app:proofs}, so we do not need to impose Eq.~\eqref{eq:psc} as a separate condition to ensure Eq.~\eqref{eq:commute_anti_condition}.

Regarding Clifford errors acting on Clifford factors, so far we have assumed that two Clifford factors either commute or anticommute with each other in Eq.~\eqref{eq:Clifford-Clifford-commutation}. However, even without this assumption, the dressings created on Clifford factors by Clifford errors are indeed Pauli dressings that commute with all possible Pauli dressings by Pauli errors if Eq.~\eqref{eq:dressings_commute_each} holds. 
Specifically, the dressing of a Clifford factor $C'$ induced by a Clifford error $C$ is $[C,C']$ because
\begin{equation}
    CC'C^\dagger=[C,C']C'.
\end{equation}
If Eq.~\eqref{eq:dressings_commute_each} holds, it follows that
\begin{equation}
  [[C,C'],D]=I, \qquad \forall D\in\mathcal{D}.
  \label{clifford-clifford-dress}
\end{equation} 
The proof is provided in Appendix~\ref{app:proofs}. Thus, if Eq.~\eqref{eq:dressings_commute_each} is satisfied, Eq.~\eqref{eq:commute_anti_condition} holds even without the assumption in Eq.~\eqref{eq:Clifford-Clifford-commutation}.

\subsubsection{Nontrivial group commutators}

We have identified conditions under which Pauli dressings either commute or anticommute with Clifford factors. 
For the remainder of this section, we assume that Eq.~\eqref{eq:dressings_commute_each} holds, and thus Eqs.~\eqref{eq:commute_anti_condition} and~\eqref{eq:commute_anti_condition_stab} also hold.
To guarantee that a protocol measurement either commutes or anticommutes with every current generator, we also need the Pauli operators appearing in group commutators to remain stabilizers of the current state, up to signs.
Let $\mathcal{B}_{\mathrm{gc}}$ denote the Pauli subgroup generated by the Pauli operators appearing in group commutators between pairs of original stabilizer generators.
Below, we show that every element of $\mathcal{B}_{\mathrm{gc}}$ is guaranteed to remain a stabilizer up to a sign after any sequence of Pauli-or-hook errors and protocol measurements.

\begin{lemma}
\label{lem:B_remains_stabilizing}
After any sequence of Pauli-or-hook errors and protocol measurements,
for every $B\in\mathcal{B}_{\mathrm{gc}}$, either $B$ or $-B$ remains
an element of the current Clifford-stabilizer group.
\end{lemma}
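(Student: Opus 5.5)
Our plan is to argue by induction on the number of operations (Pauli-or-hook errors and protocol measurements) applied to the ideal state. We track the invariant ``for every $B\in\mathcal{B}_{\mathrm{gc}}$, $\pm B$ lies in the current Clifford-stabilizer group.'' The base case is immediate. Every group commutator $[G_i,G_j]$ of original generators acts as $I$ on the ideal logical magic state, so its Pauli part is a Pauli stabilizer of that state. Because the stabilizer subspace is one-dimensional and the original group uniquely specifies the state, we take it (as in the examples of Sec.~\ref{sec:clifford_stabilizer}) that these Pauli operators, and hence all of $\mathcal{B}_{\mathrm{gc}}$, already belong to the original group $\mathcal{G}_\mathrm{C}$ with sign $+1$. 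A convenient form of the invariant is that the current group contains a Pauli subgroup $\{\sigma(B)B: B\in\mathcal{B}_{\mathrm{gc}}\}$ with signs $\sigma(B)\in\{\pm1\}$.

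For the error step, a Pauli-or-hook error $E$ acts on the group by conjugation, $G\mapsto EGE^\dagger$. We must show that $EBE^\dagger=\pm B$ for every $B\in\mathcal{B}_{\mathrm{gc}}$. If $E$ is Pauli, this is automatic. If $E$ is a Clifford error, it equals a Clifford factor $C\in\mathcal{L}_\mathrm{C}$, so $CBC^\dagger=[C,B]B$. We then need $[C,B]=\pm I$, i.e., that elements of $\mathcal{B}_{\mathrm{gc}}$ commute or anticommute with every Clifford factor. To get this, write each group commutator of original generators via the factor decompositions in Eq.~\eqref{eq:factor}. Its Pauli residue is a product of commutators of the form $[C,P]=D_P(C)^{-1}$ (up to sign) and $[C,C']$. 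The first kind lies in $\mathcal{D}$ up to phases, by Eqs.~\eqref{eq:dressing_Pauli_error} and \eqref{eq:dressing_Clifford_error}. The second kind commutes with $\mathcal{D}$ by Eq.~\eqref{clifford-clifford-dress}. Combining this with Eq.~\eqref{eq:commute_anti_condition} gives $[C,B]=\pm I$.

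For the measurement step, let $M$ be an original generator. Since $B$ is (a product of) group commutators of original generators, $[M,B]$ is again a Pauli residue of the same type, and by the previous paragraph $[M,B]=\pm I$. We want the plus sign: $M$ should commute with $B$ as an operator. We will argue this as in the diagonal discussion, where $\mathcal{B}_{\mathrm{gc}}$ commutes with all original generators; the general hypothesis we will use is that the ideal state is a $+1$ eigenstate of both $M$ and $B$, so $MBM^\dagger B^\dagger=\pm I$ acting on it forces $+I$. Hence $\sigma(B)B$ commutes with $M$. In the Pauli-like update of Eqs.~\eqref{eq:anti_update_diag}--\eqref{eq:anti_update_diag_meas}, the pivot and all anticommuting generators are only multiplied among themselves or replaced, and an element that commutes with $M$ remains in the updated group. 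We will verify this by expressing $\sigma(B)B$ as a product of current generators, where the anticommuting ones appear an even number of times (using that its commutation with $M$ is trivial), and then rewriting that product in the new generators. This mirrors the standard stabilizer-tableau argument, and we will take care with non-commuting generator orderings by using that all relevant commutators lie in $\mathcal{B}_{\mathrm{gc}}$ up to sign.

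We expect the main obstacle to be the measurement step: showing rigorously that membership of $\pm B$ survives the update when the group is non-abelian. The parity argument requires reordering products of generators, and each reordering introduces commutators that must themselves be tracked inside $\mathcal{B}_{\mathrm{gc}}$. That makes the induction effectively simultaneous over all of $\mathcal{B}_{\mathrm{gc}}$. We will handle this by proving the invariant for the whole subgroup at once.
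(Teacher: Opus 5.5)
Your error step, and the fact that $M$ commutes with every $B\in\mathcal{B}_{\mathrm{gc}}$ as an operator, are the same ingredients as the paper's proof. Your ideal-state argument for the $+$ sign usefully justifies what the paper simply asserts. The base case also needs no hedging: $[G_i,G_j]$ is a product of elements of $\mathcal{G}_\mathrm{C}$, so it lies in the group automatically.

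The gap is in how you carry $\pm B$ through a measurement. You insist on literal membership in the group generated by the updated generators. You plan to rewrite a word for $\sigma(B)B$ in the old generators as a word in the new ones, but you leave this open, and as sketched it is circular. Suppose you replace each anticommuting $G'_j$ by $(G'_jG'_p)(G'_p)^{-1}$ and then cancel the inserted $(G'_p)^{-1}$ factors. Moving them through the word generates commutators $[G'_p,G'_k]\in\pm\mathcal{B}_{\mathrm{gc}}$, and membership of precisely these elements in the new group is the statement being proved. The new generating set only visibly yields products such as $[G'_jG'_p,G'_k]=[G'_j,G'_k][G'_p,G'_k]$ and commutators $[M,G'_k]$, not $[G'_p,G'_k]$ itself. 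Treating ``the whole subgroup at once'' provides no independent source for them. Nor can you fall back on the post-measurement state being uniquely stabilized: the $CZ_{1,2}$ example in Sec.~\ref{subsec:deterministic} shows that stabilizing the state does not imply membership in a Clifford-stabilizer group.

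The missing idea is to prove the invariant at the level of the state. This is what the paper's proof actually establishes (``remains a current stabilizer up to a sign''), and it is all that Lemma~\ref{lem:dressed_generator_commutator} and Propositions~\ref{prop:clifford_stabilizer_update} and~\ref{prop:commute_anticommute} use. Since $MB=BM$, the projector $\Pi_m(M)=(I+mM)/2$ commutes with $B$. Hence $B|\psi\rangle=\sigma|\psi\rangle$ implies $B\,\Pi_m(M)|\psi\rangle=\sigma\,\Pi_m(M)|\psi\rangle$. Together with your error step, the induction is then immediate and no rewriting of generators is needed. You should also cover the Pauli-dressed effectively measured operator $\tilde M=D_MM$, which commutes with $B$ because $D_M$ does, by Eq.~\eqref{eq:dressing_commutes_with_B}. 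If you genuinely want membership in the tracked generated group, obtain it afterwards from Proposition~\ref{prop:completeness} under primary-support exactness. Elements of $\mathcal{B}_{\mathrm{gc}}\subseteq\langle\mathcal{D}\rangle$ are, up to phases, dressing-axis Paulis with deterministic eigenvalues on the current state, so that proposition applies to them.
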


\begin{proof}
We first show that every Pauli-or-hook error preserves every
$B\in\mathcal{B}_{\mathrm{gc}}$ up to a sign. For a Pauli error
$E_{\mathrm P}$,
\begin{equation}
  E_\mathrm{P} B E_\mathrm{P}^\dagger= \pm B,
  \quad \forall B\in\mathcal{B}_{\mathrm{gc}},
\end{equation}
because $B$ is a Pauli operator.
For a Clifford error $E_\mathrm{C}$,
\begin{equation}
  \label{eq:Clifford_error_vsB}
  E_\mathrm{C} B E_\mathrm{C}^\dagger= \pm B,
  \quad \forall B\in\mathcal{B}_{\mathrm{gc}}
\end{equation}
also holds due to the following reason.
Every Pauli operator appearing in a group commutator of original generators is a product of Pauli dressings in $\mathcal{D}$:
\begin{equation}
\mathcal B_{\rm gc}\subseteq \langle\mathcal D\rangle,
\label{eq:b_dressing_group}
\end{equation}
because both originate from group commutators between the same pairs of factors.
Since each Clifford error is equal to a Clifford factor having the same support, together with Eqs.~\eqref{eq:commute_anti_condition} and ~\eqref{eq:b_dressing_group}, Eq.~\eqref{eq:Clifford_error_vsB} is proven.

Finally, every protocol measurement always commutes with every element of $\mathcal{B}_{\mathrm{gc}}$.
Indeed, both are original stabilizers, and they commute as operators.
Therefore, every $B\in\mathcal{B}_{\mathrm{gc}}$ remains a current stabilizer up to a sign after any sequence of Pauli-or-hook errors and protocol measurements.
Note that, in general, an effectively measured operator is also Pauli-dressed as detailed in Sec.~\ref{subsubsec:effective}. Even in such a case, the effectively measured operator commutes with every element of $\mathcal{B}_{\mathrm{gc}}$ since every possible Pauli dressing commutes with every element of $\mathcal{B}_{\mathrm{gc}}$ due to Eqs.~\eqref{eq:dressings_commute_each} and~\eqref{eq:b_dressing_group}:
\begin{equation}
  DB=BD, \qquad \forall  D\in\mathcal{D},\quad \forall B\in\mathcal{B}_{\mathrm{gc}}.
  \label{eq:dressing_commutes_with_B}
\end{equation}
\end{proof}

\subsection{Update rules for the Clifford-stabilizer group}
We now show that the Clifford-stabilizer group can be updated in the same way as Pauli-stabilizer group cases, provided that every Pauli dressing in $\mathcal{D}$ has weight at most one and Eq.~\eqref{eq:dressings_commute_each} holds. 
First, we focus on the simpler case where circuit-level Pauli errors do not cause erroneous operators to be measured (up to measurement readout errors). 
We then describe the extension of our method to the general case where circuit-level Pauli errors can cause erroneous operators to be measured. 

\subsubsection{The case without erroneous measured operators}
First, we show that a protocol measurement and a Pauli-dressed original generator either effectively commute or effectively anticommute with each other on the current state.

\begin{lemma}
\label{lem:dressed_generator_commutator}
Let $M$ be a protocol measurement and let $\tilde{G}=DG$ be a Pauli-dressed version of an original generator $G$, where $D$ is an accumulated Pauli dressing.
Then, it is guaranteed that
\begin{equation}
  [M,\tilde{G}]\in\pm\mathcal{B}_{\mathrm{gc}}.
\end{equation}
Consequently, by Lemma~\ref{lem:B_remains_stabilizing}, $M$ and $\tilde{G}$ either effectively commute or effectively anticommute with each other on the current state $|\psi\rangle$:
\begin{equation}
  [M,\tilde{G}]|\psi\rangle=\pm|\psi\rangle.
\end{equation}
\end{lemma}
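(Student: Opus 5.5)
I will mirror the diagonal-case computation in Eq.~\eqref{eq:commutator_decompose_diag}, now using the general assumptions of this section. First, expand the group commutator with the product rule:
\begin{equation}
  [M,\tilde{G}]=[M,DG]=[M,D]\,D[M,G]D^\dagger .
\end{equation}
The proof then reduces to controlling the two factors $[M,D]$ and $D[M,G]D^\dagger$ separately.

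\emph{The factor $[M,D]$.} The accumulated dressing $D$ is, up to a phase, a product of elements of $\mathcal D$. This holds because dressings from successive Pauli-or-hook errors multiply, and each single error contributes a factor $D_C(E_\mathrm{P})$, a factor $D_P(E_\mathrm{C})$, or a factor $[C,C']$. Each of these lies in $\langle\mathcal D\rangle$ up to sign, by the remarks following Eq.~\eqref{eq:dressing_Clifford_error} and Eq.~\eqref{clifford-clifford-dress}.

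One point needs care. A later error conjugates not only $G$ but also the existing dressing. Since the dressing is Pauli, a Pauli error only changes its sign. A Clifford error $E_\mathrm{C}$, which equals some factor in $\mathcal L_\mathrm{C}$, also maps the dressing to itself up to sign, by Eq.~\eqref{eq:commute_anti_condition}. So the accumulated dressing stays in $\pm i^k\langle\mathcal D\rangle$.

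Since $M$ is an original generator, Eq.~\eqref{eq:commute_anti_condition_stab} gives $[M,D']=\pm I$ for each $D'\in\mathcal D$. Using the product expansion of the commutator, the same holds for products, because each partial commutator is a scalar. Hence $[M,D]=\pm I$.

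\emph{The factor $D[M,G]D^\dagger$.} Both $M$ and $G$ are original generators, so $[M,G]$ is, up to sign, an element of $\mathcal B_{\mathrm{gc}}$ by definition. I will take ``the Pauli operators appearing in group commutators'' to mean that $[M,G]$ is a signed Pauli; this is part of the admissibility assumption. By Eq.~\eqref{eq:dressing_commutes_with_B}, every element of $\mathcal D$ commutes with every $B\in\mathcal B_{\mathrm{gc}}$. Phases are irrelevant for conjugation, so $D[M,G]D^\dagger=[M,G]$.

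Combining the two factors gives $[M,\tilde G]\in\pm\mathcal B_{\mathrm{gc}}$. By Lemma~\ref{lem:B_remains_stabilizing}, either $[M,G]$ or $-[M,G]$ lies in the current Clifford-stabilizer group, so it acts as $\pm I$ on $|\psi\rangle$. This yields $[M,\tilde G]|\psi\rangle=\pm|\psi\rangle$.

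The main obstacle I expect is the first step: justifying carefully that the accumulated dressing remains in $\pm i^k\langle\mathcal D\rangle$ under an arbitrary sequence of both Pauli and Clifford hook errors. The subtle case is a Clifford error acting on a generator that was already dressed, where the error conjugates $D$ and $G$ simultaneously. I would handle this by induction on the number of errors: write $E(DG)E^\dagger=(EDE^\dagger)(EGE^\dagger)$ and apply the single-error dressing rules to each factor of $G=\prod_j U_j$, using Eq.~\eqref{eq:commute_anti_condition} to show $EDE^\dagger=\pm D$.
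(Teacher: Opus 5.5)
Your proof is correct and is essentially the paper's. It uses the same decomposition $[M,DG]=[M,D]\,D[M,G]D^\dagger$, gets $[M,D]=\pm I$ from Eq.~\eqref{eq:commute_anti_condition_stab} and $D[M,G]D^\dagger=[M,G]$ from Eq.~\eqref{eq:dressing_commutes_with_B}, and finishes with Lemma~\ref{lem:B_remains_stabilizing}. Your extra induction showing that the accumulated dressing stays in $\langle\mathcal D\rangle$ up to phase is bookkeeping the paper leaves implicit; note only that $[C,C']\in\mathcal D$ holds because $\mathcal D$ is defined to include dressings produced by Clifford errors, whereas Eq.~\eqref{clifford-clifford-dress} by itself only says that $[C,C']$ commutes with $\mathcal D$.
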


\begin{proof}
The group commutator is expressed as
\begin{align}
  [M,\tilde{G}]
  &=
  [M,DG]\\
  &=
  [M,D]\,D[M,G]D^\dagger .
\end{align}
From Eq.~\eqref{eq:commute_anti_condition_stab},
\begin{equation}
  [M,D]=\pm I.
\end{equation}
Also, since $[M,G]\in \mathcal{B}_{\mathrm{gc}}$, using Eq.~\eqref{eq:dressing_commutes_with_B},
\begin{equation}
  D[M,G]D^\dagger=[M,G].
\end{equation}
Thus
\begin{equation}
  [M,\tilde{G}]\in\pm\mathcal{B}_{\mathrm{gc}}.
\end{equation}
By Lemma~\ref{lem:B_remains_stabilizing}, every element of $\mathcal{B}_{\mathrm{gc}}$ is a current stabilizer up to a sign. 
Hence
\begin{equation}
  [M,\tilde{G}]|\psi\rangle=\pm|\psi\rangle.
\end{equation}
Therefore, $M$ and $\tilde{G}$ either effectively commute or effectively anticommute with each other on the current state.
\end{proof}

When a protocol measurement commutes with every current generator, the Clifford-stabilizer group remains unchanged. 
Now, we show that when a protocol measurement anticommutes with a current generator, we can update the Clifford-stabilizer group in an analogous way to the ordinary Pauli-stabilizer group cases.
\begin{proposition}
\label{prop:clifford_stabilizer_update}
If a protocol measurement $M$ effectively anticommutes with at least one current generator, the outcome is uniformly random. The group is updated by choosing a pivot generator $G'_p$ that effectively anticommutes with $M$, and for every other generator $G'_j$ that effectively anticommutes with $M$, replace it by
\begin{equation}
G'_j\leftarrow G'_jG'_p.
\label{eq:clifford_group_update_1}
\end{equation}
Then, replace the pivot $G'_p$ by the measured operator $M$ with the random sign:
\begin{equation}
G'_p\leftarrow (-1)^b M,
\qquad b\in\{0,1\}.
\label{eq:clifford_group_update_2}
\end{equation}
\end{proposition}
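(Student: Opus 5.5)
We would prove Proposition~\ref{prop:clifford_stabilizer_update} by mimicking the Pauli-stabilizer argument on the current state $|\psi\rangle$, using only the effective relations supplied by Lemma~\ref{lem:dressed_generator_commutator}. Let $|\psi\rangle$ be the unique common $+1$ eigenstate of the current generators $G'_1,\ldots,G'_m$, and let $\Pi_\pm=(I\pm M)/2$; here $M$ is Hermitian and unitary, since protocol measurements are of Hermitian generators. The proof splits into three steps: the outcome distribution, the stabilizers of each post-measurement state, and uniqueness.

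\textbf{Outcome distribution.} Take the pivot $G'_p$ with $[M,G'_p]|\psi\rangle=-|\psi\rangle$. Since $G'_p|\psi\rangle=|\psi\rangle$, we get
\begin{equation}
M|\psi\rangle=MG'_p|\psi\rangle=-G'_pM|\psi\rangle .
\end{equation}
Hence $G'_p\Pi_+|\psi\rangle=\Pi_-|\psi\rangle$, where we use $G'_p\Pi_+|\psi\rangle=\tfrac12(G'_p|\psi\rangle+G'_pM|\psi\rangle)$ and substitute the relation above. Unitarity of $G'_p$ then gives $\|\Pi_+\psi\|=\|\Pi_-\psi\|$, so each outcome has probability $1/2$.

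\textbf{Stabilizers of the post-measurement states.} Let $|\psi_\pm\rangle\propto\Pi_\pm|\psi\rangle$. Clearly $\pm M$ stabilizes $|\psi_\pm\rangle$. For the remaining generators there are two cases.
\begin{itemize}
\item If $G'_j$ effectively commutes with $M$, then $MG'_j|\psi\rangle=G'_jM|\psi\rangle$. This gives $G'_j\Pi_\pm|\psi\rangle=\Pi_\pm G'_j|\psi\rangle=\Pi_\pm|\psi\rangle$, so $G'_j$ is preserved.
\item If $G'_j$ effectively anticommutes with $M$, we want $G'_jG'_p\Pi_\pm|\psi\rangle=\Pi_\pm|\psi\rangle$. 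Apply the anticommutation relation twice: first $G'_p$ moves $\Pi_\pm$ to $\Pi_\mp$, then $G'_j$ moves it back. The second move requires the effective relation for $G'_j$ on $M|\psi\rangle$, which we get from $MG'_j|\psi\rangle=-G'_jM|\psi\rangle$ by the same manipulation.
\end{itemize}
It is cleaner to work directly with operator identities. Lemma~\ref{lem:dressed_generator_commutator} gives $M G'_j = B_j G'_j M$ with $B_j\in\pm\mathcal{B}_{\rm gc}$ and $B_j|\psi\rangle=\pm|\psi\rangle$. We then need that $B_j$ commutes with $M$, $G'_p$ and $G'_j$, so that it can be moved freely. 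This holds because $\mathcal{B}_{\rm gc}$ commutes with all dressings and original generators, by Eq.~\eqref{eq:dressing_commutes_with_B} and Lemma~\ref{lem:B_remains_stabilizing}. The current generators are products of dressed originals, so the relation extends to products via Eq.~\eqref{eq:general_product_comm_prod_diag}. With this, $G'_jG'_pM=B_jB_pMG'_jG'_p$, and on $|\psi\rangle$ the factor $B_jB_p$ acts as $(-1)(-1)=+1$. Each $B_j$ is also a current stabilizer up to sign, and it commutes with $M$, so it still acts with the same sign on $\Pi_\pm|\psi\rangle$.

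\textbf{Uniqueness.} We must show that the new generators uniquely stabilize $|\psi_\pm\rangle$. Suppose $|\phi\rangle$ is stabilized by the new set. Then $(1+G'_p)|\phi\rangle$, or equivalently $G'_p|\phi\rangle$ combined with $|\phi\rangle$, should be stabilized by all the old generators. We would show this by reversing the construction: the map $|\phi\rangle\mapsto\Pi_\pm$-preimage inverts via $|\phi\rangle\mapsto(I+G'_p)|\phi\rangle$. The old generators $G'_j=(G'_jG'_p)G'_p{}^{\dagger}$ are recovered using the same effective relations. So $(I+G'_p)|\phi\rangle\propto|\psi\rangle$, which forces $|\phi\rangle\propto\Pi_\pm|\psi\rangle$; one also needs $(I+G'_p)|\phi\rangle\neq0$, which follows because $G'_p$ maps the $\mp$ eigenspace of $M$ to the $\pm$ one in the relevant subspace.

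\textbf{Main obstacle.} The hardest step is this uniqueness argument, together with making the effective relations hold on states other than $|\psi\rangle$, such as $|\phi\rangle$ or $M|\psi\rangle$. The plan handles this by relying on operator-level identities $MG=BGM$ with $B$ a commuting Pauli in $\pm\mathcal{B}_{\rm gc}$, so that the required relations hold on every state stabilized by the relevant $\pm B$.
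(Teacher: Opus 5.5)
Your randomness and stabilization steps are correct. The operator-level form of your stabilization step is exactly the paper's proof, which consists of nothing more than this: $MG'_j=B_jG'_jM$ with $B_j\in\pm\mathcal{B}_{\rm gc}$ commuting with $M$ and all current generators, multiplicativity over products, and the sign of $B_j$ surviving the measurement because $B_j$ commutes with $M$. Your state-level version uses only $G'_jM|\psi\rangle=\epsilon_jM|\psi\rangle$, with $\epsilon_j=\pm1$ the effective commutation sign. It is more elementary and already suffices for this proposition, although the operator-level form is what the paper reuses in Proposition~\ref{prop:commute_anticommute}. The randomness argument via $G'_p\Pi_+|\psi\rangle=\Pi_-|\psi\rangle$ is not in the paper's proof, but it matches the branch-flipping discussion in Sec.~\ref{subsec:new_measurement}.

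Where you genuinely differ is uniqueness, which the paper treats separately by a dimension count. After multiplying so that only the pivot anticommutes with $M$, removing the pivot leaves a two-dimensional joint eigenspace that $M$ splits evenly. This relies on nonidentity group elements being traceless, which comes from primary-support exactness (Definition~\ref{def:primary_support_exactness}). Your inverse map $|\phi\rangle\mapsto(I+G'_p)|\phi\rangle$ avoids any trace or counting argument and exhibits the bijection explicitly. Two points must be made explicit for it to close.

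First, $G'_p(I+G'_p)|\phi\rangle=(I+G'_p)|\phi\rangle$ requires $(G'_p)^2|\phi\rangle=|\phi\rangle$. After earlier updates the pivot can be a non-Hermitian product of dressed originals, and then $(G'_p)^2$ is a nontrivial element of $\pm\mathcal{B}_{\rm gc}$.

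Second, every relation you need on $|\phi\rangle$ asserts that some element of $\pm\mathcal{B}_{\rm gc}$ acts on $|\phi\rangle$ as it does on $|\psi\rangle$. These are the sign $-1$ of $[M,G'_p]$, the sign $+1$ of the commutators of $(G'_p)^{\pm1}$ with the other old generators, and the sign $+1$ of $(G'_p)^2$. That such an element stabilizes $\Pi_\pm|\psi\rangle$ says nothing about an arbitrary $|\phi\rangle$ in the new joint eigenspace. You need it to lie, with that sign, in the group generated by the \emph{new} generators. This is the membership form of Lemma~\ref{lem:B_remains_stabilizing}. It holds under the paper's convention that the generating set consists of primary generators together with generators of $\mathcal{B}_{\rm init}\supseteq\mathcal{B}_{\rm gc}$. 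The latter commute as operators with every (Pauli-dressed) protocol measurement, so they are never pivots and never multiplied, only sign-flipped by errors.

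With this stated, your argument goes through. The vector $(I+G'_p)|\phi\rangle$ is fixed by every old generator. It is nonzero because $G'_p|\phi\rangle$ lies in the opposite eigenspace of $M$. Finally, $\Pi_\pm(I+G'_p)|\phi\rangle=|\phi\rangle$ gives $|\phi\rangle\propto\Pi_\pm|\psi\rangle$.
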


\begin{proof}
Consider the product $\tilde{G}_1\cdots \tilde{G}_s$ of some Pauli-dressed original generators $\tilde{G}_k$.
The group commutator between $M$ and the product is given by
\begin{equation}
  [M,\tilde{G}_1\cdots \tilde{G}_s]
  =
  \prod_{k=1}^{s}
  (\tilde{G}_1 \cdots \tilde{G}_{k-1})
  [M,\tilde{G}_{k}]
  (\tilde{G}_1 \cdots \tilde{G}_{k-1})^\dagger.
  \label{eq:general_product_comm}
\end{equation}
By Lemma~\ref{lem:dressed_generator_commutator}, $[M,\tilde{G}_{k}]\in\pm\mathcal{B}_{\mathrm{gc}}$, and each of $\tilde{G}_1, \ldots ,\tilde{G}_{k-1}$ commutes with every element of $\mathcal{B}_{\mathrm{gc}}$ since original generators commute with every element of $\mathcal{B}_{\mathrm{gc}}$, and Pauli dressings commute with every element of $\mathcal{B}_{\mathrm{gc}}$ by Eq.~\eqref{eq:dressing_commutes_with_B}. Hence
\begin{align}
  [M,\tilde{G}_1\cdots \tilde{G}_s]
  &=
  \prod_{k=1}^{s}
  (\tilde{G}_1 \cdots \tilde{G}_{k-1})
  [M,\tilde{G}_{k}]
  (\tilde{G}_1 \cdots \tilde{G}_{k-1})^\dagger \\
  &=
  \prod_{k=1}^{s}
  [M,\tilde{G}_{k}] \in \pm \mathcal{B}_{\mathrm{gc}}.
  \label{eq:general_product_comm_prod}
\end{align}

Since the elements of $\mathcal{B}_{\mathrm{gc}}$ commute with the protocol measurement $M$, the action of each $[M,\tilde{G}_{k}]$ on the state does not change due to the measurement of $M$. 
Thus, by Eq.~\eqref{eq:general_product_comm_prod}, the effective commutation sign between $M$ and a product of pre-measurement generators on the post-measurement state is equal to the product of the effective commutation signs between $M$ and the individual pre-measurement generators on the pre-measurement state. 
Consequently, if two generators effectively anticommute with $M$ on the pre-measurement state, their product effectively commutes with $M$ on the post-measurement state. 
This ensures that the Clifford-stabilizer group can be updated in the way described in Eqs.~\eqref{eq:clifford_group_update_1} and ~\eqref{eq:clifford_group_update_2}.
\end{proof}

We now prove that a protocol measurement effectively commutes or effectively anticommutes with current generators on the current state after any sequence of Pauli-or-hook errors and protocol measurements.

\begin{proposition}
\label{prop:commute_anticommute}
After any sequence of Pauli-or-hook errors and protocol measurements, for any protocol measurement $M$ and every current generator $G'$,
\begin{equation}
  [M,G']\in\pm\mathcal{B}_{\mathrm{gc}}.
  \label{eq:M-S-B}
\end{equation}
Consequently, by Lemma~\ref{lem:B_remains_stabilizing}, $M$ and $G'$ effectively commute or effectively anticommute with each other on the current state $|\psi\rangle$:
\begin{equation}
  [M,G']|\psi\rangle=\pm|\psi\rangle.
  \label{eq:M-S-state}
\end{equation}
\end{proposition}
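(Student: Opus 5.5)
We argue by induction on the number of protocol measurements performed. The invariant is that every current generator $G'$ can be written, up to a phase in $\{\pm1,\pm i\}$, as a product $\tilde{G}_1\cdots\tilde{G}_s$ of Pauli-dressed original generators $\tilde{G}_k=D_kG_k$. Here each $G_k$ is an original Pauli or Clifford stabilizer generator and each $D_k\in\langle\mathcal{D}\rangle$ up to a phase. Given this representation, Eq.~\eqref{eq:M-S-B} follows directly from the computation in the proof of Proposition~\ref{prop:clifford_stabilizer_update}, Eq.~\eqref{eq:general_product_comm_prod}. Lemma~\ref{lem:dressed_generator_commutator} gives $[M,\tilde{G}_k]\in\pm\mathcal{B}_{\mathrm{gc}}$. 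Each $\tilde{G}_j$ commutes with $\mathcal{B}_{\mathrm{gc}}$, because original generators do and because of Eq.~\eqref{eq:dressing_commutes_with_B}. Hence the telescoping product collapses to $\prod_k[M,\tilde{G}_k]\in\pm\mathcal{B}_{\mathrm{gc}}$, and a global phase on $G'$ cancels in the group commutator. Eq.~\eqref{eq:M-S-state} then follows from Lemma~\ref{lem:B_remains_stabilizing}.

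The base case is the original group: every generator is an undressed original generator, i.e.\ $s=1$ and $D=I$.

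For the inductive step we check that each elementary operation preserves the invariant.

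A Pauli error $E_{\mathrm P}$ conjugates each factor of each $G_k$. Eqs.~\eqref{eq:dressed_clifford_factor} and~\eqref{eq:dressing_Pauli_error}, together with the sign behaviour of Pauli factors, turn $\tilde{G}_k$ into $D'\,D_k\,G_k$, with $D'$ a product of elements of $\mathcal{D}$ up to phase. Here one must move the new dressings of inner factors to the front, past the other factors of $G_k$. This commutes or anticommutes them past those factors by Eq.~\eqref{eq:commute_anti_condition}, and past the existing dressing $D_k$ by Eq.~\eqref{eq:dressings_commute_each}.

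A Clifford error $E_{\mathrm C}$ works the same way. On Pauli factors it produces the dressings of Eq.~\eqref{eq:dressing_Clifford_error}. On Clifford factors $C'$ it produces $[E_{\mathrm C},C']$, which commutes with all of $\mathcal{D}$ by Eq.~\eqref{clifford-clifford-dress} and is generated by dressings in $\mathcal{D}$. On an existing dressing $D_k$ it acts as $E_{\mathrm C}D_kE_{\mathrm C}^\dagger=\pm D_k$ by Eq.~\eqref{eq:commute_anti_condition}, since $E_{\mathrm C}$ coincides with a Clifford factor. Because errors act by conjugation on the whole product, $E(\tilde{G}_1\cdots\tilde{G}_s)E^\dagger=\prod_k E\tilde{G}_kE^\dagger$, so the invariant survives.

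For a protocol measurement, Proposition~\ref{prop:clifford_stabilizer_update} either leaves the generators unchanged or replaces them by products $G'_jG'_p$ of two generators already of the required form, together with $\pm M$, which is an undressed original generator. Concatenating the products preserves the invariant.

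The main obstacle is the bookkeeping for hook-induced Clifford errors acting on dressings and on other Clifford factors. One has to make sure that the reordering never produces something outside $\langle\mathcal{D}\rangle$ up to phase, and that the enlarged dressing group still commutes with $\mathcal{B}_{\mathrm{gc}}$. For the commuting part I would rely on Eq.~\eqref{eq:dressings_commute_each} and Eq.~\eqref{eq:b_dressing_group}. Lemma~\ref{lem:dressed_generator_commutator} is only stated for an accumulated dressing $D$ satisfying Eqs.~\eqref{eq:commute_anti_condition_stab} and~\eqref{eq:dressing_commutes_with_B}. Since both properties are multiplicative, they extend to all of $\langle\mathcal{D}\rangle$, and that lemma applies to each factor $\tilde{G}_k$.
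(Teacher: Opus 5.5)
Your proof is correct and follows the same induction as the paper. Every current generator is a product of Pauli-dressed original generators, the group commutator with $M$ collapses to $\prod_k[M,\tilde G_k]\in\pm\mathcal{B}_{\mathrm{gc}}$ via Eq.~\eqref{eq:general_product_comm_prod} and Lemma~\ref{lem:dressed_generator_commutator}, and Lemma~\ref{lem:B_remains_stabilizing} then finishes the argument; your explicit check that Pauli and hook-induced Clifford errors only change the dressings, with the needed properties extending multiplicatively to $\langle\mathcal{D}\rangle$, fills in a step the paper leaves implicit.
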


\begin{proof}
By Lemma~\ref{lem:dressed_generator_commutator}, Eqs.~\eqref{eq:M-S-B} and~\eqref{eq:M-S-state} hold when $G'$ is a Pauli-dressed original generator. After one protocol measurement, any tracked generator of the current Clifford-stabilizer group can be written as the product of at most two Pauli-dressed original generators using Proposition~\ref{prop:clifford_stabilizer_update}.
Using Eq.~\eqref{eq:general_product_comm_prod} and Lemma~\ref{lem:B_remains_stabilizing}, Eqs.~\eqref{eq:M-S-B} and~\eqref{eq:M-S-state} also hold for the next protocol measurement. By repeatedly applying this argument, Eqs.~\eqref{eq:M-S-B} and~\eqref{eq:M-S-state} hold after any sequence of Pauli-or-hook errors and protocol measurements.
Note that, although it suffices for our purposes to show that Eq.~\eqref{eq:M-S-state} holds for the current state $|\psi\rangle$, more generally, the same relation holds for any state in the subspace stabilized, up to a sign, by the corresponding element of $\mathcal{B}_{\mathrm{gc}}$.
\end{proof}

To keep track of the state by updating the Clifford-stabilizer groups, we must ensure that the current Clifford-stabilizer group still uniquely defines the current state after any sequence of Pauli-or-hook errors and protocol measurements if the original group does so. In the following, we show that this holds in our update rules.

Suppose the current generators $G'_j, \forall j$ uniquely stabilize the state $|\psi\rangle$.
First, after a unitary error $E$, we replace each generator $G'_j$ by $EG'_jE^\dagger$. The updated generators stabilize the post-error state. Now suppose $|\phi\rangle$ is any state stabilized by all updated generators $EG'_jE^\dagger$, then $E^\dagger|\phi\rangle$ is stabilized by all $G'_j$. Thus, $E^\dagger|\phi\rangle \propto |\psi\rangle$, or equivalently, $|\phi\rangle \propto E|\psi\rangle$.
Therefore, any state stabilized by the updated generators is equal to the post-error state $E|\psi\rangle$ (up to a global phase), showing that the updated generators uniquely specify the post-error state.
A deterministic measurement does not change the state, so the post-measurement state is still unique.
For a random measurement, first multiply current generators so that exactly one current generator effectively anticommutes with the measured Hermitian operator $M$. Removing this anticommuting generator leaves a two-dimensional space, and the two eigenspaces of $M$ inside it are one-dimensional. Thus, after replacing the removed generator by $\pm M$ with the random obtained sign, the subspace is one-dimensional, which is unique.
Note that we here assume that the generators satisfy the property in Definition~\ref{def:primary_support_exactness} so that every nonidentity element of the group is traceless.

\subsubsection{The case with erroneous measured operators}
\label{subsubsec:effective}
So far we have considered the cases where errors are applied to the state while the measured operator remains a noiseless original generator.
This is true (up to measurement readout errors) for non-Clifford circuits performing protocol measurements under circuit-level Pauli errors if each original generator is (fold-)transversal and each factor is Hermitian as assumed in Sec.~\ref{subsec:rules_diagonal}.
The reason is that, in such circuits, errors can be propagated and represented as data Pauli or Clifford errors between protocol measurements, up to measurement readout errors.
However, this is not the case for general non-Clifford circuits performing protocol measurements.

For example, suppose we measure a non-transversal original generator by a sequence of controlled-$U_j$ gates. If an error occurs on the data qubits between the two controlled gates whose target qubits share the support, propagating the error to before or after the measurement circuit may need to Pauli-dress the measured data operator. The circuit therefore can be equivalent to measuring an incorrect Pauli-dressed data operator. 

The proofs so far are written for the cases in which the measured operator is a noiseless original generator.
Nevertheless, even when the effectively measured operator is Pauli-dressed, the proofs are almost directly applicable as shown below, assuming the effectively measured operator is still Hermitian.
Let $M$ and $G$ denote original generators, and let the Pauli-dressed versions of them be
\begin{equation}
  \tilde M = D_M M, \quad \tilde G = D_G G,
  \label{eq:dressed_version}
\end{equation}
where $D_M$ and $D_G$ are Pauli dressings.
The group commutator of them is expressed as
\begin{align}
  [\tilde M,\tilde G]
  &=
  [\tilde M,D_GG] \\
  &=
  [\tilde M,D_G]D_G[\tilde M,G]D_G^\dagger\\
  &=
  [M,D_G]D_G[\tilde M,G]D_G^\dagger.
  \label{eq:meas-frame-dressed-gen-comm-1}
\end{align}
Here, we used Eq.~\eqref{eq:dressings_commute_each}, i.e., the Pauli dressings $D_M$ and $D_G$ commute with each other.
The $[\tilde M,G]$ is expressed as
\begin{align}
  [\tilde M,G]
  &=
  [D_MM,G] \\
  &=
  D_M[M,G]D_M^\dagger[D_M,G]\\
  &=
  [M,G][D_M,G].
  \label{eq:meas-frame-dressed-gen-comm-2}
\end{align}
Here, we used Eq.~\eqref{eq:dressing_commutes_with_B}.
Since $[D_M,G]$ is $\pm I$, $[\tilde M,\tilde G]$ is then simplified to 
\begin{equation}
  [\tilde M,\tilde G]=[M,D_G][M,G][D_M,G] \in \pm \mathcal{B}_{\mathrm{gc}}.
  \label{eq:dressed-meas-dressed-gen-in-Bgc}
\end{equation}
For a current generator $\tilde G_1\cdots \tilde G_s$, by the same derivation as Eq.~\eqref{eq:general_product_comm_prod}, the group commutator is
\begin{align}
  [\tilde M,
    \tilde G_1\cdots \tilde G_s]
  &=
  \prod_{k=1}^{s}
  (\tilde G_1 \cdots \tilde G_{k-1})
  [\tilde M,\tilde G_k]
  (\tilde G_1 \cdots \tilde G_{k-1})^\dagger\\
  &=
  \prod_{k=1}^{s}
  [\tilde M,\tilde G_k]\in \pm\mathcal B_{\rm gc}.
  \label{eq:measured_conju_group}
\end{align}
Since Lemma~\ref{lem:B_remains_stabilizing} also holds even when a measured operator is Pauli-dressed due to Eq.~\eqref{eq:dressing_commutes_with_B}, we have shown that the Pauli-dressed measured operator either effectively commutes or effectively anticommutes with every current generator. Thus, the Clifford-stabilizer group can be updated in a manner analogous to the update of an ordinary Pauli-stabilizer group.

Another situation in which the effectively measured operator can become an incorrect Pauli-dressed operator is when one of the factors is not Hermitian. 
As shown in Fig.~\ref{fig:error_prop_ancilla}, when Pauli errors occur on the ancilla qubits, the controlled-$(C^\dagger)^2$ operation arises, changing the effectively measured operator. 
In the following, we show that $(C^\dagger)^2$ also either commutes or anticommutes with every Clifford factor if every other possible Pauli dressing does.
In our setting, $(C^\dagger)^2$ is a Pauli up to a phase by Eq.~\eqref{eq:psc}. 
Since $C$ either commutes or anticommutes with every possible Pauli dressing, for $D\in\mathcal{D}$, 
\begin{equation}
  C D C^\dagger=\pm D. 
\end{equation}
Applying $C$ from the left and $C^\dagger$ from the right gives 
\begin{equation}
  C^2 D (C^\dagger)^2=D. 
\end{equation}
Equivalently,
\begin{equation}
  D (C^\dagger)^2=(C^\dagger)^2 D, 
\end{equation}
which shows that $(C^\dagger)^2$ commutes with every possible Pauli dressing.
Thus, due to Eq.~\eqref{eq:dressings_commute_each}, it either commutes or anticommutes with every Clifford factor.
Therefore, the effectively measured Pauli-dressed operator due to the controlled-$(C^\dagger)^2$ operation either effectively commutes or effectively anticommutes with every current generator, assuming the effectively measured operator is Hermitian.

We now discuss conditions under which the effectively measured operator is Hermitian.
A Hermitian operator $M$ conjugated by an error $E$, i.e., $E^\dagger ME$, is always Hermitian. However, when only a subset of factors composing $M$ is Pauli-dressed due to errors, Hermiticity is not guaranteed.
Suppose a noiseless original generator is written as
\begin{equation}
  M = \prod_j U_j.
\end{equation}
If errors occur inside the measurement circuit of $M$ performing a sequence of controlled-$U_j$ operations, the data operator controlled by the ancilla qubits has the Pauli-dressed form
\begin{equation}
  \tilde{M} = \prod_j D_j^{t_j}U_j, \qquad t_j \in \{0,1 \},
  \label{eq:each_factor_dressing}
\end{equation}
where $D_j$ is a Pauli dressing of $U_j$.
In order to guarantee that $\tilde{M}$ is Hermitian, we require that $\tilde{M}^2=I$. Let $s_{i,j} \in \{ \pm1\}$ denote the commutation sign between $D_i$ and $U_j$:
\begin{equation}
  D_i U_j=s_{i,j} U_j D_i.
\end{equation}
We also define 
\begin{equation}
  D_i^2=\lambda_i I, 
\end{equation}
where $\lambda_i \in \{\pm1\}$ because $D_i$ is a Pauli multiplied by a phase in $\{ \pm 1, \pm i\}$.
Then, $\tilde{M}^2$ is given by
\begin{equation}
  \tilde{M}^2 = \left (\prod_{i:t_i=1} \lambda_i \prod_j s_{i,j} \right)I.
\end{equation}
Now, $\tilde{M}^2=I$ must hold for every combination of $t_i \in \{0,1\}$ and for every possible choice of Pauli dressing $D_i \in \mathcal{D}$ for each $i$ with $t_i=1$.
A sufficient condition for this is
\begin{equation}
  \lambda_i \prod_j s_{i,j}=1
  \label{eq:effective_herm}
\end{equation}
for every $i$ and every possible choice of $D_i \in \mathcal{D}$.
Equivalently, for any $D_i \in \mathcal{D}$ satisfying $D_i^2=I$, $D_i$ must anticommute with an even number of the factors $U_j$, i.e., $D_i$ must commute with $M$.
For any $D_i \in \mathcal{D}$ satisfying $D_i^2=-I$, $D_i$ must anticommute with an odd number of the factors $U_j$, i.e., $D_i$ must anticommute with $M$.
We remark that in preparation protocols for typical logical magic states including the surface code $|\bar{T}\rangle$ state, Steane code $|\bar{H}\rangle$ state, and $\llbracket 8,3,2 \rrbracket$ code $|\overline{CCZ}\rangle$ state where each factor is chosen to be Hermitian and the generators are transversal, the Hermiticity issue for effectively measured operators does not arise.

\subsection{Deterministic measurements}
\label{subsec:deterministic}
For an $n$-qubit Pauli-stabilizer group generated by $n$ independent commuting Pauli operators, if a Pauli measurement is deterministic, that is, the measured Pauli commutes with every tracked generator, it is guaranteed that the measured Pauli is equal, up to a sign, to a product of the tracked generators.
In contrast, for a general Clifford-stabilizer group that specifies a Clifford-stabilizer state, this implication is false. Even if the tracked $n$ generators uniquely define the state, a Pauli or Clifford measurement may commute with every tracked generator and thus have a deterministic outcome while not being equal, up to a sign, to a product of the tracked generators.
A simple example is a two-qubit state uniquely specified by the Clifford-stabilizer group
\begin{equation}
\langle CZ_{1,2},\; Z_1Z_2\rangle .
\end{equation}
The common $+1$ eigenstate of the two operators is determined to be $|00\rangle$. 
Suppose an $X_1$ error occurs. The generators are transformed to 
\begin{equation}
\langle Z_2CZ_{1,2},\; -Z_1Z_2\rangle .
\end{equation}
If $CZ_{1,2}$ is measured on this state, the outcome is deterministically $+1$, and it commutes with both tracked generators, while
\begin{equation}
CZ_{1,2} \notin \pm\langle Z_2CZ_{1,2},\; -Z_1Z_2\rangle .
\end{equation}
Thus, one-dimensionality of the stabilizer subspace is generally not sufficient to ensure that a measured operator is equal, up to a sign, to a product of generators of the Clifford-stabilizer group when the measurement is deterministic. 
Nevertheless, despite not holding in general, the implication is true for the Clifford-stabilizer groups of typical logical magic states under circuit-level Pauli errors and protocol measurements.
In this subsection, we establish a condition under which, for a deterministic measurement, the measured operator is equal to a product of generators of the current Clifford-stabilizer group, up to a sign.

We let $\mathcal{B}_\mathrm{init}$ denote a Pauli-stabilizer subgroup of an original Clifford-stabilizer group whose elements remain stabilizers, up to a sign, throughout the protocol. 
This means every possible Pauli dressing commutes with elements of $\mathcal{B}_\mathrm{init}$. The subgroup $\mathcal B_\mathrm{gc}$ is equal to or contained in $\mathcal{B}_\mathrm{init}$:
\begin{equation}
  \mathcal B_\mathrm{gc} \subseteq \mathcal{B}_\mathrm{init},
\end{equation}
because the subgroup $\mathcal{B}_\mathrm{init}$ may also contain Pauli stabilizers which commute with every possible Pauli dressing, even if they do not appear in group commutators. In the surface code $|\bar{T}\rangle$ state and Steane code $|\bar{H}\rangle$ state examples, $\mathcal B_\mathrm{gc} = \mathcal{B}_\mathrm{init}$, while the $\llbracket 8,3,2 \rrbracket$ $|\overline{CCZ}\rangle$ state example satisfies $\mathcal B_\mathrm{gc} \subset \mathcal{B}_\mathrm{init}$.

The \emph{Primary generators} are a maximal subset of the original generators such that no product of distinct members belongs to $\mathcal{B}_{\rm init}$.
We choose the $n$ generators of an $n$-qubit original Clifford-stabilizer group to consist of primary generators and independent elements of $\mathcal{B}_\mathrm{init}$.
The \emph{primary part} of an operator is the collection of operators appearing in the primary generators, excluding the Pauli operators of dressing axes.
A product is called \textit{zero-primary} if, after multiplying generators, their primary parts cancel and only non-primary Pauli operators remain.
The current Pauli stabilizer subgroup generated by all zero-primary products is denoted $\mathcal{B}$. Initially, one has $\mathcal{B}=\mathcal{B}_\mathrm{init}$, but during the protocol, $\mathcal{B}$ may contain additional Pauli stabilizers obtained from products of current generators. 

For each qubit $q$, we define a \emph{local dressing-axis Pauli}
\begin{equation}
  Q_q\in\{X_q,Y_q,Z_q\}
  \label{eq:local_dressing_axis_pauli}
\end{equation}
as the Pauli axis of the nontrivial Pauli dressings in $\mathcal{D}$ acting
on qubit $q$.
The $Q_q$ may depend on $q$.
In the diagonal surface code $|\bar{T}\rangle$ state and $\llbracket 8,3,2 \rrbracket$ $|\overline{CCZ}\rangle$ state examples, $Q_q=Z_q$ on all qubits. In the Steane code $|\bar{H}\rangle$ state example, $Q_q=Y_q$ on all qubits.
For $c\in\mathbb{F}_2^n$, we define a \emph{dressing-axis Pauli}
\begin{equation}
  Q(c):=\prod_{q:c_q=1} Q_q .
  \label{eq:dressing_axis_pauli}
\end{equation}

For each current generator $G'_j$, we define its \emph{dressing-anticommutation support} 
$v_j\in\mathbb{F}^n_2$ by
\begin{equation}
  Q(c)G'_j=(-1)^{c\cdot v_j}G'_jQ(c)
  \label{eq:dressing_primary_anti}
\end{equation}
for all $c\in\mathbb{F}_2^n$. Equivalently, $(v_j)_q=1$ exactly when $G'_j$ anticommutes with $Q_q$. The linear dependencies among the vectors $v_j$ correspond to the elements of the current zero-primary subgroup $\mathcal{B}$. For example, in the Steane code $|\bar{H}\rangle$ state, $G_i^X$ and $G_i^Z$ have the same $v$, and their product $G_i^X G_i^Z$ is in $\mathcal{B}$.
We assume the following property so that no product of current generators is equal to non-Pauli Clifford operators that commute with every Pauli dressing.

\begin{definition}[Primary-support exactness]
\label{def:primary_support_exactness}
Let $G'_1,\ldots,G'_n$ be the generators of a Clifford-stabilizer group, with corresponding dressing-anticommutation supports
$v_1,\ldots,v_n\in\mathbb{F}_2^n$ determined by the local dressing-axis Paulis in Eq.~\eqref{eq:local_dressing_axis_pauli}. We say that $G'_1,\ldots,G'_n$ satisfy primary-support exactness if, for every binary vector $m=(m_1,\ldots,m_n)$,
\begin{equation}
  \sum_{j=1}^n m_jv_j=0
\end{equation}
if and only if the product
\begin{equation}
  \prod_{j=1}^n G_j'^{m_j}
\end{equation}
does not have a primary part. Equivalently, the product is a Pauli operator in $\mathcal{B}$.
\end{definition}

Note that if Definition~\ref{def:primary_support_exactness} is satisfied for the original generators, it is also satisfied for the current generators because each current generator is the product of original generators with Pauli dressings.

\begin{proposition}
\label{prop:completeness}
Assume primary-support exactness, see Definition~\ref{def:primary_support_exactness}. The current subgroup $\mathcal{B}$ is exactly the subgroup generated by dressing-axis Paulis $Q(c)$ up to signs that commute with every current generator. Consequently, any dressing-axis Pauli with a deterministic eigenvalue on the current state is, up to a sign, a product of the current generators.
\end{proposition}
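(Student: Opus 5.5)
The plan is to prove the two inclusions separately and then read off the consequence.

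\emph{Every element of $\mathcal{B}$ is a dressing-axis Pauli commuting with every current generator.} An element of $\mathcal{B}$ is a zero-primary product of current generators. By primary-support exactness (Definition~\ref{def:primary_support_exactness}) such a product is a Pauli operator whose primary part cancels. I will argue that the non-primary Pauli content is built only from dressing axes $Q_q$ and from elements of $\mathcal{B}_{\rm init}$. Current generators are products of original generators times accumulated Pauli dressings in $\langle\mathcal D\rangle$, and these dressings act along the axes $Q_q$. Elements of $\mathcal{B}_{\rm init}$ commute with all dressings, and by Eq.~\eqref{eq:dressings_commute_each} the dressings commute with one another. From this I would conclude that every element of $\mathcal B$ is $\pm Q(c)$ for some $c$. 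The point I expect to need checking is that the $\mathcal{B}_{\rm init}$ elements are expressible along dressing axes; in the examples ($Z$-type, or $Y$-type $G^X_iG^Z_i$) this holds.

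Commutation is then immediate. Elements of $\mathcal{B}$ belong to the current Clifford-stabilizer group, and the stabilizer subspace is one-dimensional. Hence each element's commutator with any generator acts trivially on $|\psi\rangle$. Since that commutator is $\pm I$ (a Pauli against a generator with Pauli dressings, by Eq.~\eqref{eq:commute_anti_condition_stab}), it must equal $+I$.

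\emph{Every $Q(c)$ commuting with all current generators lies in $\pm\mathcal{B}$.} This is the main step. By Eq.~\eqref{eq:dressing_primary_anti}, $Q(c)$ commutes with every $G'_j$ if and only if $c\cdot v_j=0$ for all $j$, so $c$ lies in the orthogonal complement of $V=\mathrm{span}\{v_j\}$. I will then count dimensions. Let $r=\dim V$. Primary-support exactness says the kernel of $m\mapsto\sum_j m_jv_j$ is exactly the set of $m$ giving products in $\mathcal{B}$, so the number of independent generators of $\mathcal{B}$ is $n-r$. The commuting dressing-axis Paulis form a group of size $2^{n-r}$ up to signs. The map $c\mapsto Q(c)$ is injective modulo phase, and $\mathcal{B}$ injects into this group. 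I also need that distinct $m$ in the kernel give distinct Paulis; otherwise some product of generators would be $\pm I$, contradicting that $-I$ is excluded and that the generators are independent (tracelessness). Comparing the two group sizes then gives equality.

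\emph{Consequence.} If a dressing-axis Pauli $Q(c)$ has a deterministic eigenvalue on $|\psi\rangle$, then by Lemma~\ref{lem:dressed_generator_commutator} and Proposition~\ref{prop:commute_anticommute}-type reasoning it effectively commutes with every current generator. Since the commutator is $\pm I$, it actually commutes as an operator. Indeed, if it anticommuted with some generator $G'$, then $Q(c)|\psi\rangle=\lambda|\psi\rangle$ would give $Q(c)G'|\psi\rangle=-\lambda|\psi\rangle$ while $G'Q(c)|\psi\rangle=\lambda|\psi\rangle$, which is a contradiction. By the second inclusion, $Q(c)\in\pm\mathcal{B}$, so it is, up to a sign, a product of current generators.
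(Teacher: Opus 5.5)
Your proposal follows the paper's proof. Both arguments place $\mathcal{B}$ (up to signs) inside the group of dressing-axis Paulis $Q(c)$ with $c\cdot v_j=0$ for all $j$, then use primary-support exactness to match the $n-r$ binary degrees of freedom on each side, and obtain the consequence because a $Q(c)$ with a deterministic eigenvalue cannot anticommute with any current generator. The point you flag closes more directly than via $\mathcal{B}_{\rm init}$, and the paper takes this step for granted: a zero-primary product has $\sum_j m_jv_j=0$, so it commutes with every single-qubit $Q_q$, and being a Pauli that stabilizes the state, it equals $\pm Q(c)$ for some $c$.
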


\begin{proof}
Let $k$ be the number of independent vectors among $v_1,\ldots,v_n$. Relabel the generators, if necessary, so that $v_1,\ldots,v_k$ form a basis for their span.
For each $j>k$, $v_j$ is an XOR of some subset of $v_1,\ldots,v_k$. Multiplying $G'_j$ by the corresponding product of generators gives a product whose dressing-anticommutation support is zero. 
By primary-support exactness, this product lies in $\mathcal{B}$.
Thus the linear dependencies among the $v_j$'s give $n-k$ independent zero-primary products. 

Now consider a dressing-axis Pauli $Q(c)$ that commutes with every current generator. By Eq.~\eqref{eq:dressing_primary_anti}, this is equivalent to
\begin{equation}
  c\cdot v_j=0
\end{equation}
for every $j \in \{1,2,\ldots,k\}$. These $k$ independent binary equations leave $n-k$ independent binary degrees of freedom for $c$. 

The $n-k$ independent generators of $\mathcal{B}$ are, up to signs, dressing-axis Paulis contained in the subgroup of all dressing-axis Paulis that commute with every current generators. Since the latter subgroup also has $n-k$ independent binary degrees of freedom, the two subgroups are equal up to signs. As a consequence, if a dressing-axis Pauli has a deterministic eigenvalue on the current state, it commutes with every current stabilizer and lies in $\mathcal{B}$ up to a sign. This means that the Pauli is, up to a sign, a product of the current generators.
\end{proof}

\begin{proposition}
\label{prop:deterministic}
Assume primary-support exactness, see Definition~\ref{def:primary_support_exactness}. Let $M$ be a Hermitian protocol measurement. If the measurement of $M$ is deterministic, then a product of current generators is equal to $(-1)^m M$, where $(-1)^m$ is the deterministic measurement outcome.
\end{proposition}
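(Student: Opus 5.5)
The plan is to reduce the deterministic case to Proposition~\ref{prop:completeness}. We find a product of current generators whose ``primary part'' matches that of $M$. The leftover is then a dressing-axis Pauli with deterministic eigenvalue, which Proposition~\ref{prop:completeness} places in the group.

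First, we express the current generators as products of Pauli-dressed original generators. Since $M$ is an original generator, consider the vector $u\in\mathbb{F}_2^n$ defined by $Q(c)M=(-1)^{c\cdot u}MQ(c)$. This is the dressing-anticommutation support of $M$; it is well defined because of Eq.~\eqref{eq:commute_anti_condition_stab} and the fact that every dressing axis appears in $\mathcal{D}$.

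Next we claim $u\in\operatorname{span}\{v_1,\ldots,v_n\}$. Suppose not. Then there is a $c$ with $c\cdot v_j=0$ for all $j$ but $c\cdot u=1$. By Proposition~\ref{prop:completeness}, such a $Q(c)$ lies in $\mathcal{B}$ up to sign, so $\pm Q(c)$ stabilizes $|\psi\rangle$. But $Q(c)$ anticommutes with $M$, so $M$ maps the $\pm1$ eigenspace of $Q(c)$ to the opposite one. Hence $\langle\psi|M|\psi\rangle=0$, contradicting determinism.

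So choose $m$ with $\sum_j m_jv_j=u$, and set $P:=\prod_j G_j'^{m_j}$. Apply primary-support exactness to the augmented list $G'_1,\ldots,G'_n,M$, which consists of products of original generators with Pauli dressings, as the paper notes after Definition~\ref{def:primary_support_exactness}. It gives that $P^\dagger M$ has no primary part. Hence $P^\dagger M=\omega Q(c)$ for some dressing-axis Pauli and phase $\omega$, possibly after absorbing a non-primary element of $\mathcal{B}_{\rm init}$, which is itself a stabilizer up to sign by Lemma~\ref{lem:B_remains_stabilizing}.

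Now $M|\psi\rangle=(-1)^m|\psi\rangle$ and $P|\psi\rangle=|\psi\rangle$, which also gives $P^\dagger|\psi\rangle=|\psi\rangle$. Therefore $\omega Q(c)|\psi\rangle=P^\dagger M|\psi\rangle=(-1)^m|\psi\rangle$. So $Q(c)$ has a deterministic eigenvalue, and by Proposition~\ref{prop:completeness} it equals, up to sign, a product $R$ of current generators. The same equation fixes the sign: $\omega Q(c)=(-1)^m R'$ with $R'$ a product of current generators stabilizing $|\psi\rangle$. Then $M=P\,\omega Q(c)=(-1)^m PR'$, which is the required statement.

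The main obstacle is the step asserting that $P^\dagger M$ has no primary part. Definition~\ref{def:primary_support_exactness} is stated for a generating set rather than for an extra operator $M$. Justifying its extension to $\{G'_j\}\cup\{M\}$, and handling non-commutativity, needs care: we must show that the group-commutator residues arising when $P^\dagger M$ is reordered lie in $\pm\mathcal{B}_{\rm gc}$, via Proposition~\ref{prop:commute_anticommute}, and so are harmless Paulis. The first approach to try is to write both $P$ and $M$ in terms of the original primary generators and compare their primary-generator exponent vectors, which are determined by the $v$'s.
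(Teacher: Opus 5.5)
Your proposal is correct and follows the paper's proof essentially step for step. The paper likewise first shows that $M$'s dressing-anticommutation support lies in $\operatorname{span}\{v_j\}$. It does so by producing a $Q(c)$ that commutes with every current generator but anticommutes with $M$; by Proposition~\ref{prop:completeness} this $Q(c)$ is a stabilizer up to sign, which contradicts determinism. It then takes a product $A$ with the same primary part as $M$, writes $AM^{-1}=i^uQ(c)$, and uses the deterministic real eigenvalues together with Proposition~\ref{prop:completeness} to conclude.

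The step you flag as the main obstacle is handled in the paper only by a direct appeal to primary-support exactness. It is used in contrapositive form: if no product of current generators has the primary part of $M$, then $v_M$ is independent of the $v_j$. The paper gives no further reordering or commutator argument at that point, so your proposal is already at the paper's level of rigor there.
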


\begin{proof}
We first prove that when the measurement of $M$ is deterministic, a product of current generators has the same primary part as $M$. Assume the contrary. By primary-support exactness, this means that $v_M$, which is the dressing-anticommutation support of $M$, is independent of the current supports $v_1,\ldots,v_n$. 
Then there exists a vector $c\in\mathbb{F}_2^n$ such that
\begin{equation}
  c\cdot v_j=0 \quad \forall j \in \{1,2,\ldots,n\},
  \qquad
  c\cdot v_M=1,
\end{equation}
meaning that $Q(c)$ commutes with every current generator but anticommutes with $M$. By Proposition~\ref{prop:completeness}, that $Q(c)$ is, up to a sign, a product of current generators.
Since $M$ is deterministic, the current state should be an eigenstate of $M$, but this is impossible because $Q(c)$ and $M$ anticommute. Therefore, a product of current generators has the same primary part as $M$.

Choose such a product and call it $A$. Then the quotient $AM^{-1}$ does not have a primary part and has the form
\begin{equation}
  AM^{-1}=i^u Q(c)
  \label{eq:membership_residual}
\end{equation}
for some dressing-axis Pauli $Q(c)$. The product $A$ has a deterministic real eigenvalue on the current state because it is generated by current generators. The measured operator $M$ also has a deterministic real eigenvalue because $M$ is Hermitian. Therefore $i^uQ(c)$ must have a real deterministic eigenvalue. Since $Q(c)$ has a real eigenvalue, the scalar $i^u$ must be real:
\begin{equation}
  u=0\quad\text{or}\quad u=2\pmod4.
\end{equation}
Since $Q(c)$ has a deterministic eigenvalue, by Proposition~\ref{prop:completeness}, $Q(c)$ is, up to a sign, a product of current generators. Thus, by Eq.~\eqref{eq:membership_residual}, the measured operator $M$ is generated by the current generators up to a real sign. This sign is the deterministic measurement outcome.
Note that the proof also holds when the measured operator is a Hermitian Pauli-dressed original generator because the Pauli dressings do not affect the proof.
\end{proof}

Putting all sufficient conditions together, we define an \emph{admissible set of Clifford-stabilizer generators}.

\begin{definition}[Admissible set of Clifford-stabilizer generators]
\label{def:admissible}
Consider an $n$-qubit Clifford-stabilizer group with a fixed set of $n$ Hermitian Pauli or Clifford generators and a fixed factorization of each generator into Pauli and Clifford factors (Eq.~\eqref{eq:factor}). We call this generating set, together with these factorizations, an admissible set of Clifford-stabilizer generators if the following conditions are satisfied:
\begin{enumerate}[leftmargin=*]
\item The common $+1$ eigenspace of the generators is one-dimensional.
\item For every Clifford factor $C$ and every single-qubit Pauli error
$E_{\mathrm P}$, the Pauli dressing $D_C(E_{\mathrm P})$ (Eq.~\eqref{eq:dressing_Pauli_error}) has weight
at most one, and all such Pauli dressings mutually commute (Eq.~\eqref{eq:dressings_commute_each}).
\item The generators satisfy primary-support exactness (Definition~\ref{def:primary_support_exactness}).
\item For each generator, every possible effectively measured operator is Hermitian (Eq.~\eqref{eq:effective_herm}).
\end{enumerate}
\end{definition}

These conditions guarantee that the desired group update rules apply. We summarize this result in the following theorem.

\begin{theorem}[Clifford-stabilizer group update]
\label{theorem:update}
Suppose a state is uniquely stabilized by a Clifford-stabilizer group with an admissible set of Clifford-stabilizer generators, and each protocol measurement is implemented according to the chosen factorization of the generator. Then, when performing any sequence of protocol measurements under circuit-level Pauli errors, the Clifford-stabilizer group can be updated in a manner analogous to an ordinary Pauli-stabilizer group (Propositions~\ref{prop:clifford_stabilizer_update} and~\ref{prop:commute_anticommute}).
Furthermore, whenever a protocol measurement has a deterministic outcome, the group remains unchanged, and the measured operator is, up to the deterministic sign, a product of the current generators (Proposition~\ref{prop:deterministic}).
\end{theorem}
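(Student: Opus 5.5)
The theorem assembles results already proved, so the plan is to check that each condition of Definition~\ref{def:admissible} supplies the hypotheses of those results, and then induct over the sequence of operations. Condition 2 gives Eq.~\eqref{eq:dressings_commute_each}. Together with the appendix result, this yields Eq.~\eqref{eq:commute_anti_condition} and Eq.~\eqref{clifford-clifford-dress}, and hence Eq.~\eqref{eq:commute_anti_condition_stab}. These are exactly the inputs to Lemma~\ref{lem:B_remains_stabilizing} and Lemma~\ref{lem:dressed_generator_commutator}. Condition 4 ensures that every effectively measured operator, dressed as in Eq.~\eqref{eq:each_factor_dressing} or by a controlled-$(C^\dagger)^2$ hook, is Hermitian. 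This lets us apply the analysis of Sec.~\ref{subsubsec:effective}: Eq.~\eqref{eq:dressed-meas-dressed-gen-in-Bgc} and Eq.~\eqref{eq:measured_conju_group} extend the commutator statements to dressed measured operators.

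The induction runs over the sequence of Pauli-or-hook errors and protocol measurements. The invariant has three parts. (i) The tracked generators uniquely stabilize the current state; initially this is condition 1. (ii) Each tracked generator is a product of Pauli-dressed original generators, with $[M,G']\in\pm\mathcal{B}_{\rm gc}$ for every possible (dressed) measured $M$. (iii) Every element of $\mathcal{B}_{\rm gc}$ is, up to sign, a current stabilizer.

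An error step conjugates the generators. This only changes dressings, preserves (ii) and (iii) by Lemma~\ref{lem:B_remains_stabilizing}, and preserves uniqueness by the conjugation argument given after Proposition~\ref{prop:commute_anticommute}. A measurement step has two cases, decided by Proposition~\ref{prop:commute_anticommute}: $M$ either effectively commutes with all generators or effectively anticommutes with some. In the anticommuting case, Proposition~\ref{prop:clifford_stabilizer_update} gives a uniformly random outcome and the Pauli-like update. New generators remain products of dressed originals, so (ii) persists via Eq.~\eqref{eq:general_product_comm_prod}. Uniqueness persists by the two-dimensional-subspace argument, which needs tracelessness of nonidentity group elements.

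I expect the main obstacle to be that tracelessness point, and more generally the fact that uniqueness and the deterministic-case claim both rely on condition 3 holding for the current generators, not just the originals. I would first argue, as noted after Definition~\ref{def:primary_support_exactness}, that primary-support exactness is inherited. Dressings by local dressing-axis Paulis do not change primary parts, and they change the supports $v_j$ only through commutation signs with $Q_q$. Multiplying generators adds supports and primary parts consistently, and replacing the pivot by $\pm M$ with $M$ an original generator keeps the structure. With this in hand, the deterministic case follows directly. If $M$ effectively commutes with all generators, the state is an eigenstate, so nothing changes, and Proposition~\ref{prop:deterministic} expresses $(-1)^m M$ as a product of current generators. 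The same holds for Hermitian dressed $M$, by the closing remark of that proof.
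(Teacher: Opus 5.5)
Your proposal is correct and follows the paper's route: the paper gives no separate proof of this theorem beyond assembling Propositions~\ref{prop:clifford_stabilizer_update}, \ref{prop:commute_anticommute}, and~\ref{prop:deterministic}, together with the uniqueness argument and the remark that condition~2 suffices because Clifford-error and controlled-$(C^\dagger)^2$ dressings lie in the group generated by Pauli-error dressings, and that is exactly what you do. One small sharpening: every dressing is, up to phase, a product of the mutually commuting $Q_q$, so dressing a generator leaves its support $v_j$ entirely unchanged rather than changing it ``through commutation signs''; this is the precise reason primary-support exactness is inherited by the current generators.
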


Note that a protocol measurement $G$ with the factorization in Eq.~\eqref{eq:factor} is implemented as a sequence of controlled-$U_j$ operations, each realized by a controlled gate that may be conjugated by non-Clifford gates acting on the data qubits, as explained in Sec.~\ref{subsubsec:commutation}.
The reason it is sufficient to impose the condition only on Pauli dressings of Clifford factors induced by Pauli errors in the condition $2$ of Definition~\ref{def:admissible} is that the Pauli dressings by Clifford errors and those due to the controlled-$(C^\dagger)^2$ operation in Fig.~\ref{fig:error_prop_ancilla} are contained, up to phases, in the set generated by Pauli dressings of Clifford factors induced by Pauli errors, as detailed in Secs.~\ref{subsubsec:commutation} and~\ref{subsubsec:effective}, respectively.

\subsection{Logical error evaluation}
\label{subsec:logical_error}
In simulations of QEC protocols, we aim to compute whether an output state has a logical error. To this end, we assume that a fictitious (i.e., noiseless) round of syndrome extraction is performed at the end of the protocol to remove potentially corrected or detected errors. After applying a recovery operation, or discarding the state based on the noiseless syndrome, the state is guaranteed to be in the code space. 
This is a common procedure to evaluate whether a logical error has occurred in numerical simulations of QEC protocols.

The commuting or anticommuting structure of the Clifford-stabilizer group allows us to evaluate the occurrence of a logical error on the output state in the following way.
After the fictitious round of syndrome extraction, we perform the fictitious measurements of Clifford stabilizers that fix the logical degrees of freedom. This corresponds to a fictitious measurement of $\bar{H}_{XY}$ in the surface code $|\bar{T}\rangle$ state example.
Due to the commuting or anticommuting structure of the Clifford-stabilizer group, the possibilities for each fictitious measurement of a Clifford stabilizer are: 
\begin{enumerate}[label=(\roman*)]
  \item It commutes with every current generator and has a deterministic $+1$ outcome.
  \item It commutes with every current generator and has a deterministic $-1$ outcome.
  \item It anticommutes with a current generator.
\end{enumerate}
Case ($\mathrm{i}$) is the only case in which the output state is a desired eigenstate of the Clifford stabilizer. Thus, if Cases ($\mathrm{ii}$) or ($\mathrm{iii}$) occur, the state has a logical error. 
Therefore, in our algorithm, we can evaluate the occurrence of a logical error from the information of both the measurement outcomes of the fictitious Clifford-stabilizer measurements and the commutations of the measurements with current generators.

\section{Polynomial-time algorithm to update the Clifford-stabilizer group}
\label{sec:algorithm}
We have shown that the Clifford-stabilizer group can be updated in a manner analogous to the update of an ordinary Pauli-stabilizer group when performing a sequence of protocol measurements under circuit-level Pauli errors, provided the conditions in Definition~\ref{def:admissible} are satisfied.
However, the update rules are not enough to perform the simulation efficiently. We need to construct an efficient algorithm that implements the update rules.
One insight from the effective commutation expression given in Eqs.~\eqref{eq:dressed-meas-dressed-gen-in-Bgc} and~\eqref{eq:measured_conju_group} is that the effective commutation between a Pauli-dressed measured operator and a current generator is computed from the commutation signs between Pauli dressings and original generators, mediated by the eigenvalues of the elements of $\mathcal{B}_{\mathrm{gc}}$.
This already suggests the idea of an efficient algorithm provided we have a way to efficiently compute the commutations between Pauli dressings and original generators. However, this approach requires one to track which elements of $\mathcal{B}_{\mathrm{gc}}$ arise in the group commutator, and the corresponding eigenvalues of the elements of $\mathcal{B}_{\mathrm{gc}}$ must be multiplied when computing the commutation signs.
This complicates the algorithm.
Therefore, here we view measurements and errors in a \emph{measurement-conjugated picture} to resolve this issue as explained in the following.

\subsection{Measurement-conjugated picture}\label{subsec:error_conjugated_meas}

In the picture in which errors are applied to a state, stabilizer generators are conjugated.
There is an equivalent picture in which the state is kept ideal and errors are absorbed into the measured operators instead, which we call the measurement-conjugated picture.
Suppose a unitary error $E$ has occurred on a state $|\psi\rangle$, so that the post-error state is $E|\psi\rangle$.
Measuring a Hermitian operator $M$ on this state is equivalent to measuring the conjugated operator
\begin{equation}
  \tilde M := E^\dagger M E
  \label{eq:measurement-frame-operator}
\end{equation}
on $|\psi\rangle$, followed by the application of $E$. Indeed, if
\begin{equation}
  \Pi_m(M)=\frac{I+mM}{2}
  \label{eq:projector}
\end{equation}
is the projector associated with a measurement outcome $m \in \{+1,-1\}$, then
\begin{equation}
  \Pi_m(M)E|\psi\rangle
  =
  E\,\Pi_m (\tilde M )|\psi\rangle.
\end{equation}
In this measurement-conjugated picture, the post-measurement state before applying the final $E$, i.e., 
\begin{equation}
  |\psi^\mathrm{fr}_m \rangle=\frac{\Pi_m (\tilde M )|\psi\rangle}{\|\Pi_m (\tilde M )|\psi\rangle \|}
  \label{eq:frame_state}
\end{equation}
is called a \emph{frame state}.
The stabilizers of the frame state are called the \emph{frame stabilizers}.
When we next measure another operator $M'$, we effectively measure $E^\dagger M'E$ on the frame state in Eq.~\eqref{eq:frame_state}, followed by the application of $E$.

Now, we explain how the commutation is expressed in this measurement-conjugated picture when we are effectively measuring an error-conjugated operator $\tilde M=E^\dagger ME$ on the current frame state $|\psi^\mathrm{fr} \rangle$.
First, $\tilde M$ is a Pauli-dressed operator $\tilde M=D_M M$, where $D_M$ is a Pauli dressing, since each factor contained in $M$ is Pauli-dressed. In particular, a factor $U$ contained in $M$ is Pauli-dressed due to the error $E$:
\begin{equation}
  E^\dagger UE = \left(E^\dagger UE U^\dagger \right) U:=\bar{D}_U(E)U,
  \label{eq:dressed_factor_meas}
\end{equation}
where $\bar{D}_U(E)=E^\dagger UE U^\dagger$ is a Pauli dressing in the measurement-conjugated picture.
Suppose ${ G'{^\mathrm{fr}}=\tilde G_1\tilde G_2\cdots \tilde G_s}$ is a current frame stabilizer generator, where each $\tilde G_k=D_{G_k}G_k$ is a Pauli-dressed original generator.
By the same derivation as Eq.~\eqref{eq:measured_conju_group}, the commutation between $\tilde M$ and $G'{^\mathrm{fr}}$ is determined by
\begin{align}
  [\tilde M,
    G'{^\mathrm{fr}}]|\psi^\mathrm{fr} \rangle
&=
  \prod_{k=1}^{s}
  [M,D_{G_k}][M,G_k][D_M,G_k]|\psi^\mathrm{fr} \rangle.
  \label{eq:final_commutator_meas}
\end{align}
Remarkably, in the measurement-conjugated picture, each $[M,G_k]$ has a $+1$ eigenvalue on the frame state $|\psi^\mathrm{fr} \rangle$ since errors that flip the eigenvalues of elements of $\mathcal{B}_{\mathrm{gc}}$ are not applied to $|\psi^\mathrm{fr} \rangle$.
Therefore, Eq.~\eqref{eq:final_commutator_meas} is simplified to 
\begin{align}
  [\tilde M,
    G'{^\mathrm{fr}}]|\psi^\mathrm{fr} \rangle
&=
  \prod_{k=1}^{s}
  [M,D_{G_k}][D_M,G_k]|\psi^\mathrm{fr} \rangle.
  \label{eq:final_commutator_meas_simple}
\end{align}

The advantage of this measurement-conjugated picture in terms of the Clifford-stabilizer group update is thus the group commutator between an error-conjugated measured operator and a current frame stabilizer generator can be computed without multiplying the eigenvalues of the elements of $\mathcal{B}_\mathrm{gc}$.
This leads to a simple efficient algorithm to implement the Clifford-stabilizer group update as explained below.

\subsection{Pauli proxy}
We now present a way to efficiently compute the commutations between Pauli dressings and original generators.
In particular, we develop a compact representation of original generators to achieve the efficient commutation computation.
The key observation is that, although original generators may contain Clifford operators, each original generator either commutes or anticommutes with every Pauli dressing.
Since Pauli operators likewise either commute or anticommute with one another, this suggests that the Clifford factors in the original generators may be replaced by suitable Pauli operators while preserving both their commutation relations with the Pauli dressings and the mutual commutation of the original generators.
If such a valid replacement exists, that means that we can reproduce the Clifford-stabilizer group update by a Pauli-stabilizer group update, which can be efficiently executed using a Clifford simulator. 
In the following, we show that such valid replacements indeed always exist. 

We call the Pauli operator used in this replacement a \emph{Pauli
proxy}. 
This replacement is neither an equality of operators nor an approximation to the exact simulation. Rather, it is a reduction rule designed to reproduce the commutation of the Clifford factors with possible Pauli dressings, while preserving the mutual commutation of the original generators.
More precisely, let $U_j$ be a factor contained in an original generator.
The Pauli proxy $\hat{P}_{U_j}$ of the factor $U_j$ is chosen so that $\hat{P}_{U_j}$ has the same commutation sign with every possible Pauli dressing in $\mathcal{D}$ as $U_j$:
\begin{equation}
  [U_j, D] = \lambda I, \quad [\hat{P}_{U_j}, D] = \lambda I, \qquad \forall D \in  \mathcal{D},
\end{equation}
where $\lambda \in \{\pm 1 \}$.
If this holds for every $U_j$, the Pauli proxy $\hat{P}_{G_j}$ of an original generator $G_j$ also satisfies the following for every $G_j$.
\begin{equation}
  [G_j, D] = \lambda I, \quad [\hat{P}_{G_j}, D] = \lambda I, \qquad \forall D \in  \mathcal{D},
  \label{eq:repre_dressing_relation}
\end{equation}
where $\lambda \in \{\pm 1 \}$.
To preserve the commutation of the original generators, the Pauli proxies of two given original generators $G_i$ and $G_j$ must commute with each other:
\begin{equation}
  [\hat{P}_{G_i},\hat{P}_{G_j}]=I.
  \label{eq:repre_commute}
\end{equation}
Furthermore, all Pauli proxies of original generators must be independent Pauli operators.
Hence, the set of Pauli proxies of original generators that uniquely define a target logical magic state is valid if Eqs.~\eqref{eq:repre_dressing_relation} and~\eqref{eq:repre_commute} are satisfied for all original generators and additionally all Pauli proxies are independent.

In the following, we show that a valid set of Pauli proxies for original generators is guaranteed to exist. 
We first show that there exists a set of Pauli proxies that satisfy Eq.~\eqref{eq:repre_dressing_relation}.
For non-primary generators, namely, elements of $\mathcal{B}_{\mathrm{init}}$, Pauli proxies can be chosen to be the non-primary generators themselves since they are already Pauli operators.
For primary generators, for each qubit $q$ in the dressing-anticommutation support, i.e., $(v_j)_q=1$ in Eq.~\eqref{eq:dressing_primary_anti}, choose one Pauli
$R_q$ that anticommutes with $Q_q$ defined in Eq.~\eqref{eq:local_dressing_axis_pauli}. 
Then, the Pauli proxy of a primary generator $G_j$ chosen in this way
\begin{equation}
  \hat{P}_{G_j}=\prod_{q:(v_j)_q=1} R_q
  \label{eq:repre_choice_1}
\end{equation}
reproduces the commutation of $G_j$ with every possible Pauli dressing, satisfying Eq.~\eqref{eq:repre_dressing_relation}.
Note that, due to primary support exactness, see Definition~\ref{def:primary_support_exactness}, every primary generator has dressing-anticommutation support.
Hence, each Pauli proxy of a primary generator chosen in this way is a nontrivial Pauli operator.

We now show that a set of Pauli proxies that satisfies both Eqs.~\eqref{eq:repre_dressing_relation} and~\eqref{eq:repre_commute} and that are independent is guaranteed to exist. 
For primary generators chosen in Eq.~\eqref{eq:repre_choice_1}, there are two choices for each $R_q$ because there are two Pauli operators that anticommute with $Q_q$.
Among the two, if the same choice $R_q$ is used for each qubit $q$ for all primary generators, then the Pauli proxies of the primary generators commute with one another because they use the same local Pauli axis on every shared qubit.
They also commute with the non-primary generators due to the fact that Pauli dressings and elements of $\mathcal{B}_{\mathrm{init}}$ have the same Pauli axis on each qubit and thus the commutation with elements of $\mathcal{B}_{\mathrm{init}}$ is preserved by the Pauli proxies. 
They are independent since the dressing-anticommutation support vectors are linearly independent due to primary support exactness from  Definition~\ref{def:primary_support_exactness}.
Therefore, the Pauli proxies chosen in this way satisfy both Eqs.~\eqref{eq:repre_dressing_relation} and~\eqref{eq:repre_commute} and are independent Pauli operators.
Note that the choice of a valid set of Pauli proxies is not necessarily unique.

We provide explicit examples of a valid set of Pauli proxies for the logical magic states presented in Sec.~\ref{sec:clifford_stabilizer}.
In the surface-code $|\bar{T}\rangle$ state example, the possible Pauli dressings are $Q_q=Z_q$ for every qubit $q$. 
The Pauli stabilizers serve as their own Pauli proxies.
The Pauli proxy of the fold-transversal $\bar{H}_{XY}$ can be chosen to be the logical $X$ operator running diagonally across the surface code. 
This corresponds to removing all diagonal Clifford operators $S$, $S^\dagger$, and $CZ$ from the fold-transversal $\bar{H}_{XY}$.
In the $\llbracket 8,3,2 \rrbracket$ code $|\overline{CCZ}\rangle$ state example, the possible Pauli dressings are also $Q_q=Z_q$ for every qubit $q$. The Pauli stabilizers serve as their own Pauli proxies.
The Pauli proxies of the Clifford stabilizers $G^\mathrm{C}_1$, $G^\mathrm{C}_2$, and $G^\mathrm{C}_3$ are $\bar{X}_1$, $\bar{X}_2$, and $\bar{X}_3$, respectively.
This corresponds to removing all diagonal Clifford operators $S$ and $S^\dagger$ from the Clifford stabilizers.
In the Steane-code $|\bar{H}\rangle$ state example, the possible Pauli dressings are $Q_q=Y_q$ for every qubit $q$. The Pauli stabilizers serve as their own Pauli proxies. The Pauli proxy of the transversal $H$ can be chosen to be either transversal $X$ or transversal $Z$ acting on all qubits. 

Note that one must use the same choice of the Pauli proxies for each generator throughout the protocol.
Also, the Pauli proxy of the product of a primary generator and an element of $\mathcal{B}_{\mathrm{init}}$ must be the product of the Pauli proxy of the primary generator and the element of $\mathcal{B}_{\mathrm{init}}$ so that a different choice of generators has consistent Pauli proxies. 

We also show that whenever a Pauli-dressed original generator $\tilde{M} = D_M M$ is Hermitian, the corresponding Pauli-proxy version $\hat{P}_{\tilde{M}} = D_M\hat{P}_M$ is also Hermitian.
By Eq.~\eqref{eq:repre_dressing_relation}, 
\begin{equation}
    MD_MM=\hat{P}_MD_M\hat{P}_M.
    \label{eq:repre_dressing_relation_1}
\end{equation}
Now, assuming $\tilde{M}$ is Hermitian, we have
\begin{equation}
    D_MMD_MM=I.
    \label{eq:m_hermitian}
\end{equation}
Combining Eq.~\eqref{eq:repre_dressing_relation_1} and Eq.~\eqref{eq:m_hermitian} gives
\begin{equation}
    D_M\hat{P}_MD_M\hat{P}_M=I,
\end{equation}
showing that $\hat{P}_{\tilde{M}}$ is Hermitian.

Given valid Pauli proxies as defined above, it is guaranteed that the effective commutation between an effectively measured original generator $\tilde{M}$ and any current generator $G'{^\mathrm{fr}}$ of the frame state $|\psi^\mathrm{fr} \rangle$, given in Eq.~\eqref{eq:final_commutator_meas_simple}, is the same as the commutation between their corresponding Pauli-proxy versions $\hat{P}_{\tilde{M}}$ and $\hat{P}_{G'{^\mathrm{fr}}}$:
\begin{equation}
  [\tilde{M} ,G'{^\mathrm{fr}}]|\psi^\mathrm{fr} \rangle = \lambda |\psi^\mathrm{fr} \rangle, \quad [\hat{P}_{\tilde{M}} ,\hat{P}_{G'{^\mathrm{fr}}}] = \lambda I,
  \label{eq:compatible}
\end{equation}
where $\lambda \in \{\pm 1 \}$.
This holds since 
\begin{align}
  [\hat{P}_{\tilde{M}},
    \hat{P}_{G'{^\mathrm{fr}}}]
&=
  \prod_{k=1}^{s}
  [\hat{P}_{M},D_{G_k}][\hat{P}_{M},\hat{P}_{G_k}][D_M,\hat{P}_{G_k}]\\
&=
  \prod_{k=1}^{s}
  [\hat{P}_{M},D_{G_k}][D_M,\hat{P}_{G_k}],
  \label{eq:commu_sign_repre}
\end{align}
where we used Eq.~\eqref{eq:repre_commute}, and the commutation sign given in Eq.~\eqref{eq:commu_sign_repre} is the same as the true effective commutation sign given by Eq.~\eqref{eq:final_commutator_meas_simple} due to Eq.~\eqref{eq:repre_dressing_relation}.

\subsection{Algorithm: operator-level implementation}

Now we construct an efficient algorithm to implement the Clifford-stabilizer group update using the Pauli proxies.
In Algorithm~\ref{alg:operator}, we first introduce an algorithm in terms of data operators under Pauli-or-hook errors on data qubits that arise from circuit-level Pauli errors.
Note that although this version is sufficient for illustrating the essential steps, the circuit-level implementation presented later in Algorithm~\ref{alg:circuit} is more practical.

A Pauli-stabilizer group consisting of the Pauli proxies and Pauli dressings is called a \emph{Pauli-proxy stabilizer group}. 
In Algorithm~\ref{alg:operator}, we did not specify the input state on which the Pauli measurements are performed in Step $3$, but the input state must be a Pauli-stabilizer state. If the protocol performs the protocol measurements on an ideal target logical magic state, the input Pauli-stabilizer state is a state stabilized by the Pauli proxies.
We can also incorporate the effects of noisy injection by mapping it to ideal injection followed by errors, as detailed in Sec.~\ref{sec:scope}, although Algorithm~\ref{alg:circuit}, presented later, provides a more natural framework for doing so.

Since Step $3$ performs Pauli measurements on a Pauli-stabilizer state, we can execute it using an ordinary Clifford simulator such as \texttt{Stim}~\cite{Gidney2021stimfaststabilizer}. 
Whether the state contains a logical error is evaluated by the approach in Sec.~\ref{subsec:logical_error} using the information that we obtain in Step $3$.
\begin{figure}[tb]
\normalsize
\begin{algorithm}[Operator-level]\label{alg:operator}
\leavevmode\par
\medskip
\noindent\textbf{Input:}\par
\noindent

\begin{itemize}
    \item A sequence of Hermitian protocol measurements with a sampled configuration of Pauli-or-hook errors on data qubits.
\end{itemize}

\noindent\textbf{Output:}\par
\noindent
\begin{itemize}
    \item Measurement outcomes sampled from exactly the same probability distribution as in the input noisy protocol measurements.
    \item Occurrence of a logical error in the output state.
\end{itemize}

\noindent\textbf{Procedure:}\par
\begin{enumerate}[label=\textbf{\arabic*.}]
  \item \textbf{Transform into Pauli-dressed form:} Transform the input sequence of Hermitian protocol measurements followed by a fictitious round of syndrome extraction and logical measurements into a sequence of error-conjugated measurements (Eq.~\eqref{eq:measurement-frame-operator}) to work in the measurement-conjugated picture. For each Pauli or Clifford factor $U_i$ conjugated by an error $E$, transform $E^\dagger U_iE$ into the Pauli-dressed form $\bar{D}_{U_i}(E)U_i$ as in 
  Eq.~\eqref{eq:dressed_factor_meas}.
  \item \textbf{Construct Pauli-proxy measurements:} Replace each $U_i$ by its valid Pauli proxy $\hat{P}_{U_i}$. This step makes each protocol measurement a Pauli measurement. 
  \item \textbf{Perform Pauli-proxy measurements:} Perform the resulting sequence of Pauli measurements.
\end{enumerate}
\end{algorithm}
\end{figure}

\subsection{Algorithm: circuit-level implementation}
\label{subsec:algorithm_circuit}
So far, we have discussed the algorithm using the Pauli proxies from the perspective of operators on data qubits.
In this subsection, we explain how the procedure can be implemented in circuits. 
In circuits, a Pauli or Clifford factor $U$ is measured using a controlled-$U$ operation. 
Let
\begin{equation}
  \Lambda(U)
  =
  |0\rangle\langle0|_a\otimes I
  +
  |1\rangle\langle1|_a\otimes U
\end{equation}
be a controlled-$U$ operation with the data qubits as targets and an ancilla as the control. Suppose a data Pauli error $E$ occurs immediately before $\Lambda(U)$. Then
\begin{align}
  \Lambda(U)(I\otimes E)
  &=
  (I\otimes E) \Lambda(E^\dagger)\Lambda(U)\Lambda(E)\\
  &=(I\otimes E) \Lambda(E^\dagger U E),
  \label{eq:circuit_conju}
\end{align}
or
\begin{equation}
  \Lambda(U)(I\otimes E)
  =
  (I\otimes E) \Lambda(\bar{D}_U(E))\Lambda(U),
  \label{eq:circuit_dressing}
\end{equation}
where $\bar{D}_U(E)$ is a Pauli dressing in Eq.~\eqref{eq:dressed_factor_meas}. 
In Fig.~\ref{fig:error_prop_conju}, we show a circuit corresponding to Eq.~\eqref{eq:circuit_conju}.
Equation~\eqref{eq:circuit_dressing} shows that transforming the error-conjugated form $E^\dagger UE$ into the Pauli-dressed form $\bar{D}_U(E)U$ corresponds to propagating the error $E$ forward with the controlled-$U$ operation fixed in a circuit. 
A circuit corresponding to this Pauli-dressing form transformation by propagating errors is shown in Fig.~\ref{fig:error_prop_dress}.
\begin{figure}[t]
    \centering
    \includegraphics[width=0.85\linewidth]{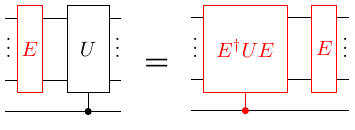}
    \caption{Error conjugation of a controlled gate in a circuit. The lowest wire is an ancilla qubit and the other qubits are data qubits. Conjugating by the error $E$ transforms the controlled-$U$ operation into a controlled-$E^\dagger UE$ operation.}
    \label{fig:error_prop_conju}
\end{figure}
\begin{figure}[t]
    \centering
    \includegraphics[width=0.95\linewidth]{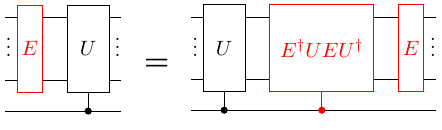}
    \caption{Error propagation through a controlled gate in a circuit. The lowest wire is an ancilla qubit and the other qubits are data qubits. The error $E$ is propagated forward through the controlled-$U$ operation. The additional controlled-$E^\dagger UE U^\dagger$, i.e., controlled-$\bar{D}_U(E)$ operation arises.}
    \label{fig:error_prop_dress}
\end{figure}
 
When $U$ is a Clifford factor, the circuit implementing the controlled-$U$ operation contains non-Clifford gates such as a non-Clifford controlled-$U$ gate or single-qubit non-Clifford gates. For example, in the surface code $|\bar{T}\rangle$ state case, the circuit for measuring $\bar{H}_{XY}$ contains $CCZ$, $T$, and $T^\dagger$ gates as shown in Fig.~\ref{fig:error_prop}. 
Thus, the procedure to replace factors by Pauli proxies in circuits is as follows. 
We first propagate errors forward with the controlled-$U$ operation fixed as shown in Fig.~\ref{fig:error_prop_dress}. 
After that, we replace gates in the circuit in such a way that the replacement corresponds to replacing $\Lambda(U)$ by $\Lambda(\hat{P}_U)$ as shown in Fig.~\ref{fig:error_prop_repr}. 
This replacement removes all non-Clifford gates from the circuit because the resulting controlled-Pauli gates $\Lambda(\hat{P}_U)$ are Clifford gates.
We call the resulting Clifford circuit the \emph{Pauli-proxy Clifford circuit}.
\begin{figure}[t]
    \centering
    \includegraphics[width=0.95\linewidth]{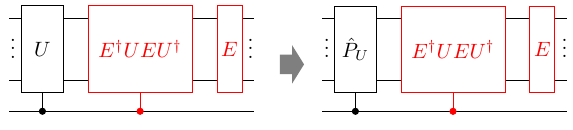}
    \caption{Replacement of a factor by a Pauli proxy in a circuit implementation. The lowest wire is an ancilla qubit and the other qubits are data qubits. After an error $E$ before the controlled-$U$ is propagated as in Fig.~\ref{fig:error_prop_dress}, gates inside the controlled-$U$ operation are replaced in such a way that the resulting Clifford circuit performs the controlled-$\hat{P}_U$ operation, where $\hat{P}_U$ is a valid Pauli proxy of $U$.}
    \label{fig:error_prop_repr}
\end{figure}

Although our proofs of effective commutation assume that an error-conjugated operator is written as the original generator multiplied by a global Pauli dressing as in Eq.~\eqref{eq:dressed_version}, the arising controlled-$\bar{D}_U(E)$ need not be propagated further before the individual factors are replaced by their Pauli proxies even when multiple factors in the measured operator $M$ act on overlapping qubit supports.
This is because replacing each $U_j$ by its Pauli proxy in the expression for $\tilde{M}$ in Eq.~\eqref{eq:each_factor_dressing} is equivalent to first transforming $\tilde{M}$ into the form of Eq.~\eqref{eq:dressed_version} and then performing the replacement.
The equivalence follows because each factor $U_j$ has the same commutation relation with a Pauli dressing as its Pauli proxy $\hat{P}_{U_j}$. Consequently, moving the dressings $D_j$ to the left through the factors $U_j$ produces the same effect as moving them to the left through the Pauli proxies $\hat{P}_{U_j}$.

Regarding practical implementation, there are some types of errors that do not need to be propagated forward in a non-Clifford circuit as our preprocessing step. 
Specifically, an error on a data qubit whose action on a factor having support on that data qubit is the same as the action on the corresponding Pauli proxy does not need to be propagated further in the non-Clifford circuit. 
This is because the Clifford circuit obtained by first propagating such errors forward in the non-Clifford circuit and then replacing certain gates is equivalent to the one obtained by applying the same replacements without first propagating the errors.
Dressing-axis Pauli errors satisfy this property, because a Pauli proxy has the same commutation relations as the corresponding factor with dressing-axis Paulis by its definition.
Thus, for example, in the surface-code $|\bar{T}\rangle$ state example, $Z$ errors on data qubits do not need to be propagated forward before the gate replacement of Fig.~\ref{fig:error_prop_repr}.
In the Steane-code $|\bar{H}\rangle$ state example, such errors are $Y$ errors on data qubits.

Note that such errors on ancilla qubits need to be propagated in general because they can propagate to errors on data qubits, and the data errors can be different between when propagating in the original non-Clifford circuit and when propagating in the Pauli-proxy Clifford circuit. 
In an implementation of the algorithm, it is enough to propagate the errors until they reach ancilla readouts in the non-Clifford circuit in the preprocessing step.
Nevertheless, depending on the circuit, some errors on ancilla qubits may not need to be propagated. For example, if the control qubit of a multi-qubit gate is an ancilla qubit, the gate commutes with $Z$ errors on the ancilla qubit just before the gate and thus the effect of the $Z$ errors does not change even if we replace gates in the non-Clifford circuit without propagating them. Thus, such $Z$ errors do not need to be propagated before the gate replacement.

There can also be Clifford errors that do not need to be propagated before the gate replacement. For example, in the surface-code $|\bar{T}\rangle$ state case, all diagonal Clifford errors do not need to be propagated before the gate replacement because they commute with the diagonal Clifford factors.
Also, the error-dependent controlled operations $\Lambda(\bar{D}_U(E))$ shown in Fig.~\ref{fig:error_prop_dress} need not be propagated further.
Data errors that do need to be propagated are referred to as \emph{continuing data errors}.
For example, in the fold-transversal surface code $|\bar{T}\rangle$ state cultivation in Sec.~\ref{sec:numerics}, only $X$ and $Y$ errors on data qubits are continuing data errors.

In summary, the algorithm to simulate a non-Clifford circuit performing protocol measurements is provided in Algorithm~\ref{alg:circuit}. 
This algorithm corresponds to a circuit-level implementation of Algorithm~\ref{alg:operator}.
\begin{figure}[tb]
\normalsize
\begin{algorithm}[Circuit-level]\label{alg:circuit}
\leavevmode\par
\medskip
\noindent\textbf{Input:}\par
\noindent

\begin{itemize}
    \item A noisy non-Clifford circuit performing protocol measurements with a sampled configuration of circuit-level Pauli errors.
\end{itemize}

\noindent\textbf{Output:}\par
\noindent
\begin{itemize}
    \item Measurement outcomes sampled from exactly the same probability distribution as in the input noisy non-Clifford circuit.
    \item Occurrence of a logical error in the output state.
\end{itemize}

\noindent\textbf{Procedure:}\par
\begin{enumerate}[label=\textbf{\arabic*.}]
  \item \textbf{Propagate errors:} In the input noisy non-Clifford circuit followed by a circuit for a fictitious round of syndrome extraction and logical measurements, propagate all sampled Pauli errors on ancilla qubits forward until they reach ancilla readouts. Also propagate all continuing data errors (defined above) until they reach the end of the circuit. 
  \item \textbf{Construct Pauli-proxy Clifford circuit:} For every controlled-$U_i$ operation corresponding to measuring a factor $U_i$ in the circuit, replace certain gates so that the controlled-$U_i$ becomes a controlled-$\hat{P}_{U_i}$ operation, where $\hat{P}_{U_i}$ is a valid Pauli proxy of $U_i$. This step removes all non-Clifford gates from the circuit, giving a Clifford circuit.
  \item \textbf{Run Pauli-proxy Clifford circuit:} Run the obtained Clifford circuit by a Clifford simulator such as \texttt{Stim}~\cite{Gidney2021stimfaststabilizer}.
\end{enumerate}
\end{algorithm}
\end{figure}

Step $1$ ensures that all controlled operations for protocol measurements are transformed to the Pauli-dressed form as shown in Fig.~\ref{fig:error_prop_dress} and the effects of ancilla errors are correctly accounted for. Note that if $\hat{P}_P=P$ in Step $2$, it is enough that in Step $1$, $\Lambda(P)$ is simply conjugated to $ \Lambda(E^\dagger PE)$ by a preceding error $E$ on the target qubit as shown in Fig.~\ref{fig:error_prop_conju} without transforming it to the Pauli-dressed form as in Fig.~\ref{fig:error_prop_dress}. This may reduce the number of gates in the circuit. 
Also, in Algorithm~\ref{alg:circuit}, we did not specify the input state of the Pauli-proxy Clifford circuit in Step $3$, but we need to input a Pauli-stabilizer state. If the protocol performs the non-Clifford circuit on an ideal target logical magic state, the input Pauli-stabilizer state is a state stabilized by the Pauli proxies.
As detailed in Sec.~\ref{sec:scope}, we can also input noisy non-Clifford circuits for injection, such as unitary injection circuits or certain measurement-based injection circuits, which are used in practical protocols.
Finally, whether the output state contains a logical error is determined using the approach described in Sec.~\ref{subsec:logical_error}, based on the information obtained in Step 3.

Note that we can also apply a correction in the middle of a circuit based on midcircuit measurement outcomes in the following way.
First, perform Step $1$ for the full non-Clifford circuit. 
On the portion of the full non-Clifford circuit up to a midcircuit measurement, perform Steps $2$ and $3$, obtaining the midcircuit measurement outcome. 
Next, apply a correction operator as an error immediately before the rest of the full non-Clifford circuit and propagate it in the same way as Step $1$. 
Finally, perform Steps $2$ and $3$ to continue the simulation of the full non-Clifford circuit.
Multiple midcircuit measurements and corrections can be handled by repeating the same procedure.
As we can see, the corrections in the middle of the circuit thus correspond to dynamically changing the Pauli-proxy Clifford circuit.

\begin{figure*}[!t]
    \centering
    \includegraphics[width=0.9\linewidth]{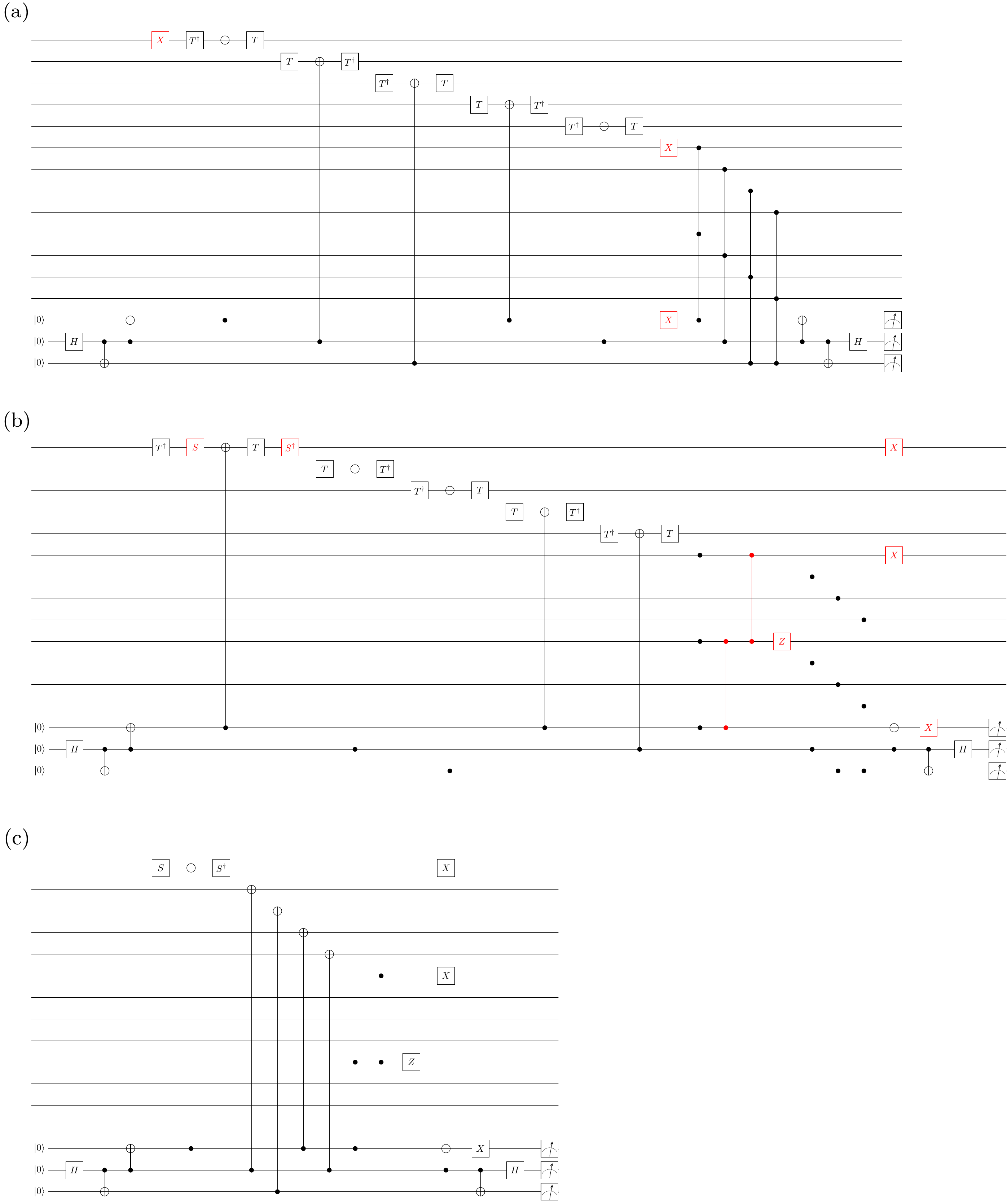}
    \caption{A concrete example of Algorithm~\ref{alg:circuit}. (a) The noisy circuit for measuring $\bar{H}_{XY}$ of the distance-3 surface code $|\bar{T}\rangle$ state using a 3-qubit cat state as an ancilla. The gates shown in red represent errors. (b) Step 1. $X$ errors are propagated forward. All diagonal errors need not be propagated further. (c) Step 2. All non-Clifford gates are removed from the circuit of (b) so that the ideal $\bar{H}_{XY}$ becomes the logical $X$ operator of the code, which is a valid Pauli proxy of the $\bar{H}_{XY}$ operator. This obtained Clifford circuit is executed in Step 3.}
    \label{fig:error_prop_surface}
\end{figure*}

In Fig.~\ref{fig:error_prop_surface}, we show a concrete example of Algorithm~\ref{alg:circuit} in the case of the measurement of $\bar{H}_{XY}$ for the distance-3 surface code $|\bar{T}\rangle$ state, which is a measurement performed in the fold-transversal surface code cultivation in Sec.~\ref{sec:numerics}.
Suppose the three $X$ errors occur in the circuit as shown in Fig.~\ref{fig:error_prop_surface}(a). 
When the errors are propagated forward, we obtain the circuit shown in Fig.~\ref{fig:error_prop_surface}(b). 
Note that all diagonal errors do not need to be propagated further in this example. 
Then, by removing all non-Clifford gates from the circuit so that the ideal $\bar{H}_{XY}$ becomes the corresponding valid Pauli proxy, i.e., the logical $X$ operator of the surface code, we obtain a Clifford circuit shown in Fig.~\ref{fig:error_prop_surface}(c). 
Then, by running the Clifford circuit by a Clifford simulator with an input state specified by an intended Pauli-proxy stabilizer group, we obtain the measurement outcomes sampled from exactly the same probability distribution as in the target noisy non-Clifford circuit.

\subsection{Time and space complexities}
\label{subsec:complexity}

In this section, we analyze the time and space complexities of Algorithm~\ref{alg:circuit}.
We first establish notation. Let $n$ be the total number of physical qubits being used, including data qubits and ancilla qubits. Let $r$ be the number of protocol measurements being performed in the circuit.
Let $r_\mathrm{q}$ denote the maximum number of protocol measurements, among the $r$ measurements, whose supports contain the same data qubit.
Let $g$ be the number of unitary gates in the original non-Clifford circuit. 
Let $w$ be the maximum number of continuing data errors created by one sampled Pauli error. 
This parameter includes hook errors from a Pauli error on ancilla qubits.
Let $\alpha$ be the maximum number of controlled gates in the circuit for a single protocol measurement whose supports contain the same data qubit.
Let $m$ be the maximum number of ancilla readouts in the circuit for a single protocol measurement.
Note that the number of readouts is the same as that in the Pauli-proxy Clifford circuit.
If we use a single ancilla qubit for each protocol measurement, $m=1$. On the other hand, if we perform each protocol measurement using multiple ancilla qubits, such as cat states~\cite{548464} or flag qubits~\cite{PhysRevLett.121.050502,chao2018fault,Chamberland2018flagfaulttolerant,PRXQuantum.1.010302}, then $m>1$. The total number of readouts is at most $rm$. The notations are summarized in Table~\ref{tab:complexity_notations}.

{\renewcommand{\arraystretch}{1.5}
\begin{table*}
\centering
\caption{Parameters used in our complexity analysis in Secs.~\ref{subsubsec:time} and~\ref{subsubsec:space}.}
\begin{tabular}{ll}
\hline \hline
\parbox[t]{2.5cm}{\raggedright Notation}
& \parbox[t]{14cm}{\raggedright Meaning} \\
\hline

\parbox[t]{2.5cm}{\raggedright $n$}
& \parbox[t]{14cm}{\raggedright Total number of physical qubits being used, including data qubits and ancilla qubits.} \\

\parbox[t]{2.5cm}{\raggedright $r$}
& \parbox[t]{14cm}{\raggedright Number of protocol measurements performed in the circuit.} \\

\parbox[t]{2.5cm}{\raggedright $r_{\mathrm{q}}$}
& \parbox[t]{14cm}{\raggedright Maximum number of protocol measurements, among the $r$ measurements, whose supports contain the same data qubit.} \\

\parbox[t]{2.5cm}{\raggedright $g$}
& \parbox[t]{14cm}{\raggedright Number of unitary gates in the original non-Clifford circuit.} \\

\parbox[t]{2.5cm}{\raggedright $w$}
& \parbox[t]{14cm}{\raggedright Maximum number of continuing data errors created by one sampled Pauli error.} \\

\parbox[t]{2.5cm}{\raggedright $\alpha$}
& \parbox[t]{14cm}{\raggedright Maximum number of controlled gates in the circuit for a single protocol measurement whose supports contain the same data qubit.} \\

\parbox[t]{2.5cm}{\raggedright $m$}
& \parbox[t]{14cm}{\raggedright Maximum number of ancilla readouts in the circuit for a single protocol measurement.} \\

\parbox[t]{2.5cm}{\raggedright $e$}
& \parbox[t]{14cm}{\raggedright Number of nontrivial sampled circuit-level Pauli errors in the original non-Clifford circuit.} \\

\parbox[t]{2.5cm}{\raggedright $\hat{g}$}
& \parbox[t]{14cm}{\raggedright Total number of unitary gates in the Pauli-proxy Clifford circuit.} \\

\parbox[t]{2.5cm}{\raggedright $N_{\mathrm{loc}}$}
& \parbox[t]{14cm}{\raggedright Number of possible error locations.} \\

\parbox[t]{2.5cm}{\raggedright $N_{\mathrm{idle}}$}
& \parbox[t]{14cm}{\raggedright Number of idling error locations.} \\

\parbox[t]{2.5cm}{\raggedright $d$}
& \parbox[t]{14cm}{\raggedright Code distance.} \\

\parbox[t]{2.5cm}{\raggedright $k$}
& \parbox[t]{14cm}{\raggedright Number of logical qubits of the target logical magic state.} \\

\hline \hline
\end{tabular}
\label{tab:complexity_notations}
\end{table*}
}

\subsubsection{Time complexity}
\label{subsubsec:time}

We first consider the time complexity of Steps $1$ and $2$ in Algorithm~\ref{alg:circuit}. Let us consider one continuing data error before a protocol measurement. A continuing data error is a one- or two-qubit Pauli or Clifford error.
Since each factor $U$ is also a one- or two-qubit operator, when one continuing data error is propagated through one controlled-$U$ gate, the generated $\Lambda(\bar{D}_U(E))$ operation consists of only $O(1)$ Pauli or Clifford gates.
A continuing data error acts on at most two data qubits, and each data qubit is touched by at most $O(\alpha)$ gates in one protocol measurement circuit. Each interaction costs $O(1)$ time and generates $O(1)$ additional Pauli or Clifford gates. Hence, one continuing data error costs $O(\alpha)$ per protocol measurement, and it generates $O(\alpha)$ additional gates in the Pauli-proxy Clifford circuit.

Let $e$ be the number of nontrivial sampled circuit-level Pauli errors in the original non-Clifford circuit. Since one sampled circuit-level Pauli error creates at most $w$ continuing data errors, $e$ sampled Pauli errors create at most $O(ew)$ continuing data errors, and thus the total propagation cost is $O(ewr_\mathrm{q}\alpha)$.
The total number of additional gates inserted into the Pauli-proxy Clifford circuit is also $O(ewr_\mathrm{q}\alpha)$.
Note that the propagation cost of ancilla errors and the resulting number of gates on ancilla qubits are not dominant throughout the analysis.
From the discussion above, we can formalize the number of gates in the Pauli-proxy Clifford circuit. Let $\hat{g}$ be the total number of unitary gates in the Pauli-proxy Clifford circuit. In the absence of errors, the Clifford circuit has at most $O(g)$ gates, because each gate is replaced by a Pauli or Clifford gate, and sometimes by the identity gate. Sampled circuit-level Pauli errors generate at most $O(ewr_\mathrm{q}\alpha)$ additional gates. Therefore, $\hat{g}$ is given by
\begin{equation}
  \hat{g}=O(g+ewr_\mathrm{q}\alpha).
\end{equation}
Now we discuss the entire time complexity of Algorithm~\ref{alg:circuit} excluding the process of evaluating a logical error.
In an ordinary Clifford simulator using a tableau algorithm, a Pauli or Clifford gate is applied in $O(n)$ time, and a Pauli measurement is performed in $O(n^2)$ time. Note that if a measurement has a deterministic outcome, it is simulated in $O(n)$~\cite{Gidney2021stimfaststabilizer}, but in general a measurement takes $O(n^2)$ when the outcome is random. Therefore, since the Pauli-proxy Clifford circuit has $\hat{g}$ Pauli or Clifford gates and $rm$ measurement readouts, simulating the circuit costs
\begin{align}
  T&=O(\hat{g}n+rmn^2)\\
  &=O\!\left((g+ewr_\mathrm{q}\alpha )n+rmn^2\right).
  \label{eq:time_complexity}
\end{align}
Since the preprocessing step to build the Pauli-proxy Clifford circuit is $O(ewr_\mathrm{q}\alpha)$ as discussed, the entire time complexity is given by Eq.~\eqref{eq:time_complexity}.
The complexity is polynomial in the number of physical qubits. It is also polynomial in the number of physical non-Clifford gates for the following reasons. The total number of gates $g$ scales linearly with the number of non-Clifford gates. The number of gates touched by a single continuing data error, which is bounded by $O(r_q \alpha)$, is smaller than $g$. Also, $w$ and $r$ are smaller than $g$. Finally, $e$ scales at most linearly with $g$, and $m$ is independent of $g$.
Furthermore, the complexity has no separate dependence on the number of logical qubits and on the stabilizer or Pauli rank of the target logical magic state.

Now we compare the time complexity of our algorithm for simulating a noisy non-Clifford circuit with that of an ordinary Clifford simulator for simulating a noisy Clifford circuit with the same circuit size and the same number of sampled circuit-level Pauli errors.
While we have already shown that our algorithm has polynomial-time complexity, this comparison indicates how closely the complexity of our algorithm approaches that of Clifford simulation.
Clifford simulators such as Stim~\cite{Gidney2021stimfaststabilizer} can process each Pauli error in $O(1)$ time by utilizing a Pauli frame.
Thus, if the Clifford circuit has $g$ gates, $O(rm)$ measurement readouts, and $e$ sampled circuit-level Pauli errors, its simulation time is
\begin{equation}
    T_{\mathrm{Cliff}}=O\!\left(gn+e+rmn^2\right).
    \label{eq:time_complexity_Clifford}
\end{equation}
Note that Eq.~\eqref{eq:time_complexity_Clifford} can often be simplified to $O\!\left(gn+rmn^2\right)$ because $O(e)$ is often not a dominant factor due to the following reason.
The number of possible error locations $N_{\mathrm{loc}}$ in our setting is
\begin{equation}
  N_{\mathrm{loc}}=O(g+n+rm+N_{\mathrm{idle}}),
  \label{eq:n_tot}
\end{equation}
where $N_{\mathrm{idle}}$ denotes the number of idling error locations.
Here, the term $g$ corresponds to gate errors, the term $n$ accounts for errors on data-qubit initializations, and the term $rm$ accounts for errors on ancilla-qubit initializations and measurement readouts. 
Typically, each protocol measurement circuit has $O(n)$ controlled gates.
For $r$ protocol measurements, even if the circuit depth of each protocol measurement is $O(n)$, the number of idling error locations is $N_{\mathrm{idle}}=O(rn^2)$. 
Also, since $e$ is smaller than $N_{\mathrm{loc}}$, we have
\begin{equation}
  e=O(N_{\mathrm{loc}}).
  \label{eq:e_n_tot}
\end{equation}
Therefore, by Eqs.~\eqref{eq:n_tot} and~\eqref{eq:e_n_tot}, $e$ is not dominant in Eq.~\eqref{eq:time_complexity_Clifford}.

Therefore, the difference between Eqs.~\eqref{eq:time_complexity} and~\eqref{eq:time_complexity_Clifford} lies in the term $O(ewr_\mathrm{q}\alpha n)$. 
In general, Eq.~\eqref{eq:time_complexity} can be larger than Eq.~\eqref{eq:time_complexity_Clifford} due to this difference. We now discuss how much the two complexities differ in typical protocols. 
In typical logical magic state protocols, we have $\alpha=O(1)$, which includes the cases where Clifford stabilizers are transversal. 
Moreover, if the code is a topological quantum low-density parity-check (qLDPC) code, for 1 round of stabilizer and logical Clifford measurements, we have $r_\mathrm{q}=O(1)$. For $O(d)$ rounds, which is a typical number required for fault tolerance, we have $r_\mathrm{q}=O(d)$, where $d$ is the code distance. 
Thus, the term $O(ewr_\mathrm{q}\alpha n)$ in Eq.~\eqref{eq:time_complexity} is typically equal to $O(ewdn)$.

As a concrete example, we consider magic state cultivation performing protocol measurements using cat states of the same order in size as the data qubits. In this protocol, $w=O(1)$ and $N_{\mathrm{idle}}=O(rn)$ for protocol measurements. Hence, $O(ewdn)$ is equal to $O(drn^2)$ due to Eqs.~\eqref{eq:n_tot} and~\eqref{eq:e_n_tot}. Here, we assume $O(g)=O(rn)$.
Since $m=O(n)$ and thus $O(rmn^2)=O(rn^3)$, the term $O(ewr_\mathrm{q}\alpha n)$ is not dominant in Eq.~\eqref{eq:time_complexity}.
Therefore, the time complexity of our algorithm for simulating magic state cultivation in this setup is the same as that of ordinary Clifford simulation for simulating a noisy Clifford circuit with the same circuit size and the same number of sampled errors. 

The complexity analysis discussed above uses a conservative raw error-propagation bound. In this model, the continuing data errors generated by sampled errors are represented as an ordered list of elementary Pauli or Clifford errors and are propagated independently through later protocol measurements. This is
analogous to explicit gate-list propagation, as in Ref.~\cite{fby6-xjbm}, where a propagated Clifford error is maintained as an ordered list and pushed through subsequent measurement rounds.
A more efficient implementation could instead maintain a canonical error frame. In this representation, continuing data errors are inserted into the current frame, and errors of the same type and support are combined or cancelled before later propagation. This can reduce the effective number of continuing errors that must be propagated. The time complexity of this implementation is given by
\begin{equation}
    T=O\!\left((g+r_\mathrm{q}\alpha n^2)n+ew+rmn^2\right).
    \label{eq:time_complexity_canonical}
\end{equation}
Here, given that the continuing data errors consist of a constant number of types of one- and two-qubit Pauli or Clifford errors, the size of the canonical frame is $O(n^2)$, since there are $O(n)$ possible one-qubit supports and $O(n^2)$ possible two-qubit supports.
Thus, the propagation cost is bounded by $O(r_\mathrm{q}\alpha n^2)$.
Inserting $O(ew)$ continuing data errors into a current canonical frame takes $O(ew)$ time.

Finally, we discuss the evaluation of a logical error. Here we use the conservative raw error-propagation algorithm. As described in Sec.~\ref{subsec:logical_error}, we perform one round of fictitious syndrome extraction and logical measurements at the end of the protocol to evaluate whether the output state has a logical error for each Monte Carlo shot.
Thus, we have additional circuits for measuring $O(n-k)$ stabilizers of the code and $k$ global Clifford operators at the end of the circuit, where $k$ is the number of logical qubits of the target logical magic state.
The time complexity of our algorithm including the evaluation of a logical error is then modified from Eq.~\eqref{eq:time_complexity} as follows.
\begin{align}
  T'
  &=O\!\left(
      (g+(n-k)n+kn^2+(r_\mathrm{q}+k)ew\alpha)n
      \right. \notag\\
  &\qquad\left.
      +(rm+n)n^2
      \right)\\
  &=O\!\left((g+kn^2+(r_\mathrm{q}+k)ew\alpha)n
      +rmn^2\right).
  \label{eq:time_complexity_logical_err}
\end{align}
Here, we assume that each of the $O(n-k)$ stabilizers of the code has $O(n)$ factors and we measure each of them at least once in the protocol. We also used a bound that each of the $k$ global Clifford operators has at most $O(n^2)$ factors.
About ancilla readouts, we use the fact that one ancilla qubit is enough for each fictitious logical measurement because the measurement gadgets are noiseless, thus there is no need to use multiple ancilla qubits.
Note that we need to determine whether the single-qubit Pauli measurement commutes with all current stabilizer generators or anticommutes with at least one of them to evaluate a logical error, and Clifford simulators do this in $O(n)$ time.
From Eq.~\eqref{eq:time_complexity_logical_err}, we see that the time complexity of our algorithm for each Monte Carlo shot including the process of evaluating a logical error is linear in the number of logical qubits $k$.

\subsubsection{Space complexity}
\label{subsubsec:space}

We now analyze the space complexity. 
In Algorithm~\ref{alg:circuit}, naively we store the Pauli-proxy Clifford circuit as a complete gate list using $O(g+ewr_\mathrm{q}\alpha)$ memory before executing it. Alternatively, the algorithm can also be implemented more memory-efficiently with the same time complexity by processing the circuit sequentially. In this implementation, at each step, we propagate the current continuing data errors through a gate in the original non-Clifford circuit, replace the gate by a valid Clifford gate and generate the additional Pauli or Clifford gates, immediately apply these gates to the tableau, and then discard their descriptions.
After the gates have been applied, their effects are stored in the tableau itself. Therefore, even though the propagation procedure generates $O(ewr_\mathrm{q}\alpha)$ additional gates, these gates do not have to be stored simultaneously.
Thus the working memory consists of the current stabilizer tableau of the Pauli-proxy Clifford circuit and the current continuing data errors. 
A stabilizer tableau on $n$ physical qubits requires $O(n^2)$ memory. 
In the conservative raw error-propagation algorithm, the memory needed to store the current continuing data errors is $O(ew)$. 
Thus, the total required memory $R$ is 
\begin{equation}
  R=O(n^2+ew).
\end{equation}
We remark that this memory bound can be improved by storing continuing data errors in a canonical error frame in a similar way to the time complexity in Eq.~\eqref{eq:time_complexity_canonical}. Since a canonical frame for continuing data errors can be stored using at most $O(n^2)$ memory, with this implementation, we have 
\begin{equation}
  R=O(n^2).
\end{equation}
This is the same memory scaling as ordinary Clifford simulation for a Clifford circuit with the same number of physical qubits.

\subsubsection{Comparison with prior work}
\label{subsubsec:comparison}

Now we discuss the difference in the complexities between our algorithm and the one proposed in Ref.~\cite{fby6-xjbm} that can efficiently simulate a similar class of the logical magic states.
Both algorithms propagate sampled circuit-level errors forward through an original non-Clifford circuit as a preprocessing step, but they differ essentially in how far the errors must be propagated, and also the subsequent algorithms are fundamentally different. 
As a result of these differences, our algorithm has lower time and space complexities.
In particular, Ref.~\cite{fby6-xjbm} needs to propagate all errors, including the ones produced while propagating errors up to the end of the circuit, or in practice, up to the last layer of non-Clifford gates.
On the other hand, in our algorithm, we do not propagate all errors forward up to the end of the circuit. For example, the error-dependent controlled operation $\Lambda(\bar{D}_U(E))$ in Fig.~\ref{fig:error_prop_dress} does not need to be propagated further, and there are also other types of errors that do not have to be propagated further as detailed in Sec.~\ref{subsec:algorithm_circuit}. 
This distinction arises from the different objectives of error propagation. In Ref.~\cite{fby6-xjbm}, errors are propagated to obtain an ideal circuit followed by Clifford errors. By contrast, our algorithm propagates them to ensure that all elements of $\mathcal{B}_{\mathrm{gc}}$ have eigenvalues $+1$ on the frame states and also each factor has the Pauli-dressed form.
In what follows, we discuss the difference more specifically using the same parameters as in the analysis of Ref.~\cite{fby6-xjbm}.
{\renewcommand{\arraystretch}{1.5}
\begin{table*}
\centering
\caption{Parameters used in our comparison with prior work in Sec.~\ref{subsubsec:comparison}.}
\begin{tabular}{ll}
\hline \hline
\parbox[t]{2.5cm}{\raggedright Notation}
& \parbox[t]{14cm}{\raggedright Meaning} \\
\hline

\parbox[t]{2.5cm}{\raggedright $w_{\mathrm{q}}$}
& \parbox[t]{14cm}{\raggedright Maximum number of Pauli stabilizers in a round whose supports contain the same data qubit.} \\

\parbox[t]{2.5cm}{\raggedright $w_{\mathrm{c}}$}
& \parbox[t]{14cm}{\raggedright Maximum number of data qubits involved in a Pauli stabilizer.} \\

\parbox[t]{2.5cm}{\raggedright $\ell$}
& \parbox[t]{14cm}{\raggedright Number of factors in a Clifford stabilizer.} \\

\parbox[t]{2.5cm}{\raggedright $r_{\mathrm{S}}$}
& \parbox[t]{14cm}{\raggedright Number of Pauli-stabilizer measurements rounds.} \\

\parbox[t]{2.5cm}{\raggedright $r_{\mathrm{L}}$}
& \parbox[t]{14cm}{\raggedright Number of Clifford-stabilizer measurements rounds.} \\

\parbox[t]{2.5cm}{\raggedright $p$}
& \parbox[t]{14cm}{\raggedright Pauli rank of the target logical magic state.} \\

\parbox[t]{2.5cm}{\raggedright $q$}
& \parbox[t]{14cm}{\raggedright Stabilizer rank of the target logical magic state.} \\
\hline \hline
\end{tabular}
\label{tab:comparison_notations}
\end{table*}
}

In Ref.~\cite{fby6-xjbm}, the complexities are analyzed using the following parameters. A data qubit is involved in at most $w_\mathrm{q}$ Pauli stabilizers in a round. A Pauli stabilizer involves at most $w_\mathrm{c}$ data qubits. Let $\ell$ be the number of factors in a Clifford stabilizer. Let $r_\mathrm{S}$ and $r_\mathrm{L}$ be the number of rounds of Pauli- and Clifford-stabilizer measurements, respectively. Here, one round among $r_\mathrm{L}$ rounds consists of one transversal Clifford-stabilizer measurement.
The notations are summarized in Table~\ref{tab:comparison_notations}.

For a Pauli error on the data qubits, the number of gates generated by propagating it is $O(r_\mathrm{L}(r_\mathrm{S}w_\mathrm{q}+r_\mathrm{L}))$ in Ref.~\cite{fby6-xjbm}, compared with $O(r_\mathrm{S}w_\mathrm{q}+r_\mathrm{L})$ in our algorithm. 
For a Pauli error on the ancilla qubits used in a Pauli stabilizer measurement, the corresponding gate counts are $O(r_\mathrm{L}w_\mathrm{c}(r_\mathrm{S}w_\mathrm{q}+r_\mathrm{L}))$ and $O(w_\mathrm{c}(r_\mathrm{S}w_\mathrm{q}+r_\mathrm{L}))$, respectively.
For a Pauli error on the ancilla qubits used in a Clifford stabilizer measurement, they are $O(\ell r_\mathrm{S}w_\mathrm{q}(r_\mathrm{S}w_\mathrm{q}+r_\mathrm{L}))$ and $O(\ell r_\mathrm{S}w_\mathrm{q})$, respectively. 
Thus, our algorithm has a smaller number of gates additionally generated by propagating sampled circuit-level errors forward. Since the number of gates corresponds to the cost of either combining them or explicitly applying them, the difference leads to our smaller time and also space complexities.

Our algorithm is more efficient not only due to the number of generated additional gates but also due to the subsequent procedures.
In Ref.~\cite{fby6-xjbm}, two procedures after error propagation are proposed: phase-insensitive and phase-sensitive methods. 
In the phase-insensitive method, the subsequent procedure after error propagation can be performed by Clifford simulators. However, the method needs to run $O(p^2 2^k)$ different Clifford circuits, where $p$ is the Pauli rank of the target logical magic state and $k$ is the number of logical qubits. 
In the phase-sensitive method, Clifford simulators cannot be used for computing measurement probabilities, and also the method needs to update $q$ stabilizer states and evaluate interference between them by computing $O(q^2)$ stabilizer overlaps, where $q$ is the stabilizer rank of the target logical magic state.
Both $p$ and $q$ can scale exponentially with the number of logical qubits in the target logical magic state.
Additionally, in Ref.~\cite{fby6-xjbm}, midcircuit measurements and ancilla resets are not considered despite there existing a method to simulate them. 

By contrast, in our algorithm, executing only \emph{a single} Clifford circuit after error propagation is enough to obtain the measurement outcomes of a noisy non-Clifford circuit and evaluate a logical error for each shot. Also, the time and space complexities do not explicitly depend on the stabilizer or Pauli rank of the target logical magic state.
The time complexity to evaluate a logical error increases only linearly in the number of logical qubits as in Eq.~\eqref{eq:time_complexity_logical_err}.
Furthermore, our algorithm can naturally simulate midcircuit measurements, ancilla resets, and active Pauli or Clifford corrections during the circuit.

\section{Scope of the algorithm}\label{sec:scope}
So far, we have restricted our attention to protocol measurements. While protocol measurements are common operations in logical magic state protocols, for other fault-tolerant protocols we may wish to consider additional operations. In this section, we describe the class of operations that the Clifford-stabilizer simulation method can simulate beyond protocol measurements.
Here, we focus on diagonal logical magic states.
We base our discussion on Algorithm~\ref{alg:circuit}.

\subsection{Unitary gates}
We consider a unitary gate $U$ on data qubits that, in the absence of errors, transforms a noiseless state $|\psi\rangle$ into a noiseless diagonal logical magic state $|\bar{\phi}\rangle$:
\begin{equation}
    |\bar{\phi}\rangle=U|\psi\rangle.
\end{equation}
Here, $|\psi\rangle$ can be any state including the tensor product of physical states, a single- or multi-qubit logical Pauli-stabilizer state, and a single- or multi-qubit logical magic state.
The state $|\bar{\phi}\rangle$ may involve one or more logical qubits.
The unitary gate $U$ can be composed of several elementary unitary gates, including Pauli, Clifford, and third-level non-Clifford gates.
Protocol measurements are performed on $|\bar{\phi}\rangle$, that is, Pauli or Clifford stabilizers in the Clifford-stabilizer group that defines $|\bar{\phi}\rangle$ are measured. 
In Secs.~\ref{sec:clifford_stb_update_diagonal} and~\ref{sec:clifford_stb_update}, we showed that our Clifford-stabilizer simulation method can simulate an arbitrary sequence of Pauli or diagonal Clifford data errors arising from circuit-level Pauli errors during protocol measurements on $|\bar{\phi}\rangle$.
In fact, as long as the errors are restricted to Pauli or diagonal Clifford data errors, they can be simulated for arbitrary error supports.
Thus, if errors in the circuits preparing $|\psi\rangle$ and implementing $U$ are equivalent to a product $E$ of Pauli or diagonal Clifford errors such that 
\begin{equation}U_{\mathrm{noisy}}|\psi_{\mathrm{noisy}}\rangle=EU|\psi\rangle=E|\bar{\phi}\rangle, 
    \label{eq:scope_error_condition}
\end{equation}
then our Clifford-stabilizer simulation algorithm is directly applicable.
Here, $U_{\mathrm{noisy}}$ and $|\psi_{\mathrm{noisy}}\rangle$ are noisy versions of $U$ and $|\psi\rangle$, respectively, containing errors in their corresponding circuits.
Note that given a circuit satisfying the property in Eq.~\eqref{eq:scope_error_condition}, Step 1 of Algorithm~\ref{alg:circuit} transforms the state to the right-hand side of Eq.~\eqref{eq:scope_error_condition}.
Here, diagonal errors generated when propagating errors need not be propagated further.
In Step 2 of Algorithm~\ref{alg:circuit}, the logical magic state $|\bar{\phi}\rangle$ can be replaced by a Pauli-stabilizer state stabilized by a valid set of Pauli proxies typically by just removing the diagonal non-Clifford gates in $U$ and in the preparation circuit for $|\psi\rangle$. 

Indeed, many typical unitary operations in logical magic state protocols satisfy Eq.~\eqref{eq:scope_error_condition} as we show below. 
The first examples are unitary encoding circuits that prepare logical magic states.
Usually, a unitary encoding circuit consists of physical $|0\rangle$ and $|+\rangle$ state preparations, $\CNOT$ gates, and physical non-Clifford gates. 
Thus, in this case $|\psi\rangle$ is the tensor product of $|0\rangle$ and $|+\rangle$ states, and $|\psi_{\mathrm{noisy}}\rangle$ may include a Pauli error on each of $|0\rangle$ and $|+\rangle$.
The unitary $U$ is a sequence of $\CNOT$ and non-Clifford gates, and $U_{\mathrm{noisy}}$ may include Pauli errors after each gate. 
If we propagate the sampled circuit-level Pauli errors in $|\psi_{\mathrm{noisy}}\rangle$ and $U_{\mathrm{noisy}}$ forward through the circuit, they can produce diagonal Clifford errors in the middle of the circuit due to the non-Clifford gates in $U$. 
Notably, diagonal errors remain diagonal errors when propagating through $\CNOT$ gates, and diagonal errors commute with diagonal non-Clifford gates.
Hence, a noisy encoding circuit is always equivalent to the corresponding noiseless circuit followed by Pauli or diagonal Clifford errors, satisfying Eq.~\eqref{eq:scope_error_condition}.
Therefore, our Clifford-stabilizer simulation is directly applicable to this case. 

As a second example, we consider logical non-Clifford gates, including transversal non-Clifford gates. 
Consider a typical case involving a logical diagonal non-Clifford gate $U$ that consists of diagonal physical gates in the third level of the Clifford hierarchy.  Let $U$ be applied to a Pauli-stabilizer state $|\psi\rangle$ in order to prepare a logical magic state. 
Note that $U$ does not have to be transversal in general.
In this situation, $|\psi_{\mathrm{noisy}}\rangle$ is equivalent to $E_{\psi}|\psi\rangle$, where $E_{\psi}$ is a combination of Pauli errors.
If a Pauli error is conjugated by a diagonal third-level gate, the result is a diagonal Clifford multiplied by a Pauli, up to a phase.
Hence, the property in Eq.~\eqref{eq:scope_error_condition} is satisfied, and thus our Clifford-stabilizer simulation directly applies to this case. 

Yet another example is the use of $\CNOT$ gates to transform a logical magic state encoded in one code into a logical magic state encoded in another code. 
In this case, $|\psi\rangle$ corresponds to the logical magic state before the transformation and $|\bar{\phi}\rangle$ corresponds to the logical magic state after the transformation. The unitary $U$ is made up of $\CNOT$ gates.
Since diagonal errors remain diagonal errors when propagating through $\CNOT$ gates as discussed above, the condition Eq.~\eqref{eq:scope_error_condition} is satisfied, thereby our Clifford-stabilizer simulation is directly applicable.
Such operations are used in the growing operations of the fold-transversal surface code cultivation in Sec.~\ref{sec:numerics}, in the logical teleportation for code-switching between 3D and 2D color codes~\cite{PhysRevResearch.7.023080,Heussen2025efficientfault}, and in magic state distillation~\cite{PhysRevA.71.022316,PhysRevA.86.032324,PhysRevA.86.052329,Litinski2019magicstate}. 

In some cases where we perform protocol measurements on $|\psi\rangle$, the state before applying $U$ may not be written as $E_{\psi}|\psi\rangle$ for some error $E_{\psi}$ even after the errors have been propagated. 
Instead, it can be written as $E_{\psi'}|\psi'\rangle$ for some error $E_{\psi'}$, where $|\psi'\rangle$ is stabilized by a Clifford-stabilizer group that is different from the one defining the ideal $|\psi\rangle$. 
This can happen when a protocol measurement anticommutes with a Clifford-stabilizer group due to errors as explicitly exemplified in Sec.~\ref{subsec:clifford_stb_update_ex_surface}. 
Accordingly, the resulting state is expressed as 
\begin{equation}
    E|\bar{\phi}'\rangle=EU|\psi'\rangle, 
    \label{eq:scope_error_diff}
\end{equation}
where $|\bar{\phi}'\rangle:=U|\psi'\rangle$, and $E$ is a product of Pauli or diagonal Clifford errors.
Although Eq.~\eqref{eq:scope_error_diff} is different from Eq.~\eqref{eq:scope_error_condition}, Clifford-stabilizer simulation is still directly applicable in such a situation for the following reason.
Every generator of the Clifford-stabilizer group of $|\psi'\rangle$ is a product of generators of the Clifford-stabilizer group of $|\psi\rangle$ with each of them possibly $Z$-type Pauli-dressed. 
After applying $U$, each generator of $|\psi'\rangle$ is transformed to a product of generators of $|\bar{\phi}\rangle$ possibly with $Z$-type Pauli dressings. 
Since the state $|\bar{\phi}'\rangle$ having such a form of generators is exactly the case considered in Secs.~\ref{sec:clifford_stb_update_diagonal} and~\ref{sec:clifford_stb_update}, our Clifford-stabilizer simulation directly applies.
The fold-transversal surface code cultivation in Sec.~\ref{sec:numerics} corresponds to this situation, where $|\psi\rangle$ corresponds to the $d=3$ regular surface code $|\bar{T}\rangle$ state with physical $|0\rangle$ and $|+\rangle$ states, and $|\bar{\phi}\rangle$ corresponds to the $d=5$ regular surface code $|\bar{T}\rangle$ state. The unitary $U$ corresponds to the $\CNOT$ growing operations. The same argument holds in the case of growing from $d=5$ to $d=7$.

\subsection{Measurements}
\label{subsec:new_measurement}
We now consider measuring operators that are not protocol measurements.
Even if a measurement is not a protocol measurement, as long as the measurement anticommutes with at least one generator and either commutes or anticommutes with every other generator of the current Clifford-stabilizer group, the same update rule applies. 
Furthermore, there are cases where the Clifford-stabilizer simulation can correctly simulate measurements that neither commute nor anticommute with some current tracked generators. 
In the following, we explain how such non-protocol measurements can be handled using Algorithm~\ref{alg:circuit}.

We first consider non-protocol measurements that anticommute with at least one generator and either commute or anticommute with every other generator of the current Clifford-stabilizer group. 
As detailed below, there are more complications for anticommuting measurements than for protocol measurements.
Before analyzing non-protocol measurements, we revisit anticommuting protocol measurements in Algorithm~\ref{alg:circuit}.
Suppose we perform a Hermitian protocol measurement $M_1$ on the current state $|\psi\rangle$, and $M_1$ effectively anticommutes with a stabilizer $G$:
\begin{equation}
    M_1G|\psi\rangle=-GM_1|\psi\rangle.
\end{equation}
The state $|\psi\rangle$ is projected onto the $+1$ or $-1$ branches of the measurement $M_1$ with equal probabilities. 
The two branches are related by 
\begin{equation}
    \Pi_-(M_1)|\psi\rangle=G\Pi_+(M_1)|\psi\rangle,
\end{equation}
where $\Pi_m(M_1)$ are the projectors defined in Eq.~\eqref{eq:projector}.
Thus, the $-1$ branch is related to the $+1$ branch by applying $G$.
The operator $G$ is referred to as a \emph{branch-flipping operator}.
In the measurement-conjugated picture explained in Sec.~\ref{subsec:error_conjugated_meas}, a measured operator is a Pauli-dressed version $\tilde{M}_1$, and a branch-flipping operator is a stabilizer of the frame state $|\psi^\mathrm{fr} \rangle$ on which $\tilde{M}_1$ is effectively measured. We denote this branch-flipping operator by $G'{^\mathrm{fr}}$. 
In Algorithm~\ref{alg:circuit} using the measurement-conjugated picture, whenever a measurement yields the $-1$ branch, the corresponding branch-flipping operator must generally be propagated in Step 1 if it is a continuing data error as defined in Sec.~\ref{subsec:complexity}.
Remarkably, in the setting we considered in Sec.~\ref{sec:algorithm} where we only perform protocol measurements, we do not need to consider this as part of the procedures in the algorithm, and just flipping the sign of the corresponding measured Pauli operator in the \emph{Pauli-proxy stabilizer tableau}, i.e., the stabilizer tableau in the Pauli-proxy Clifford circuit, is valid. 
The reason is explained below.

In this section, we say two operators are \emph{compatible} on a state when they satisfy Eq.~\eqref{eq:compatible}, that is, the commutation of their Pauli-proxy counterparts reproduces their true effective commutation on that state.
Since the branch-flipping operator $G'{^\mathrm{fr}}$ is a stabilizer of the pre-measurement frame state $|\psi^\mathrm{fr} \rangle$, due to Eq.~\eqref{eq:compatible}, $G'{^\mathrm{fr}}$ and a future Pauli-dressed protocol measurement $\tilde{M}_2$ are compatible on the state on which $\tilde{M}_2$ is measured.
Thus, propagating $G'{^\mathrm{fr}}$ through the non-Clifford circuit measuring $\tilde{M}_2$ has the same effect on measurement probabilities as propagating $\hat{P}_{G'{^\mathrm{fr}}}$ through the Clifford circuit measuring $\hat{P}_{\tilde{M}_2}$. 
This means that we do not need to propagate $G'{^\mathrm{fr}}$ in Step 1 of Algorithm~\ref{alg:circuit}, and replacing it by $\hat{P}_{G'{^\mathrm{fr}}}$ is enough when converting the non-Clifford circuit to a Pauli-proxy Clifford circuit. 
Furthermore, we notice that the effect of applying $\hat{P}_{G'{^\mathrm{fr}}}$ immediately after the measurement of $\hat{P}_{\tilde{M}_1}$ is equivalent to flipping the sign of the measured $\hat{P}_{\tilde{M}_1}$ in the Pauli-proxy stabilizer tableau. 
Therefore, for protocol measurements, Algorithm~\ref{alg:circuit} correctly handles the $-1$ branch.
Note that even when a circuit contains unitary operations to grow the code such as those implemented using $\CNOT$ gates, each possible branch-flipping operator is compatible with every future Pauli-dressed protocol measurement. This is because a branch-flipping operator generated in the pre-growth code is transformed to a product of the stabilizers that define the post-growth code possibly with Pauli dressings, which is compatible with every Pauli-dressed protocol measurement of the post-growth code.

However, for non-protocol measurements, the compatibility of a branch-flipping operator and a future measurement does not necessarily hold.
That is, the propagation of a branch-flipping operator through a future non-protocol measurement circuit can have a different effect on measurement probabilities compared to their Pauli-proxy counterparts.
A canonical example of such a situation is the expansion of a code into a larger-distance code by measuring new operators, as occurs in the escape stage of magic-state cultivation protocols.
Here we explicitly consider the growth of the $d=3$ regular surface code $|\bar{T}\rangle$ state into the $d=4$ $|\bar{T}\rangle$ state as illustrated in Fig.~\ref{fig:escape}(a).
\begin{figure}[t]
    \centering
    \includegraphics[width=0.9\linewidth]{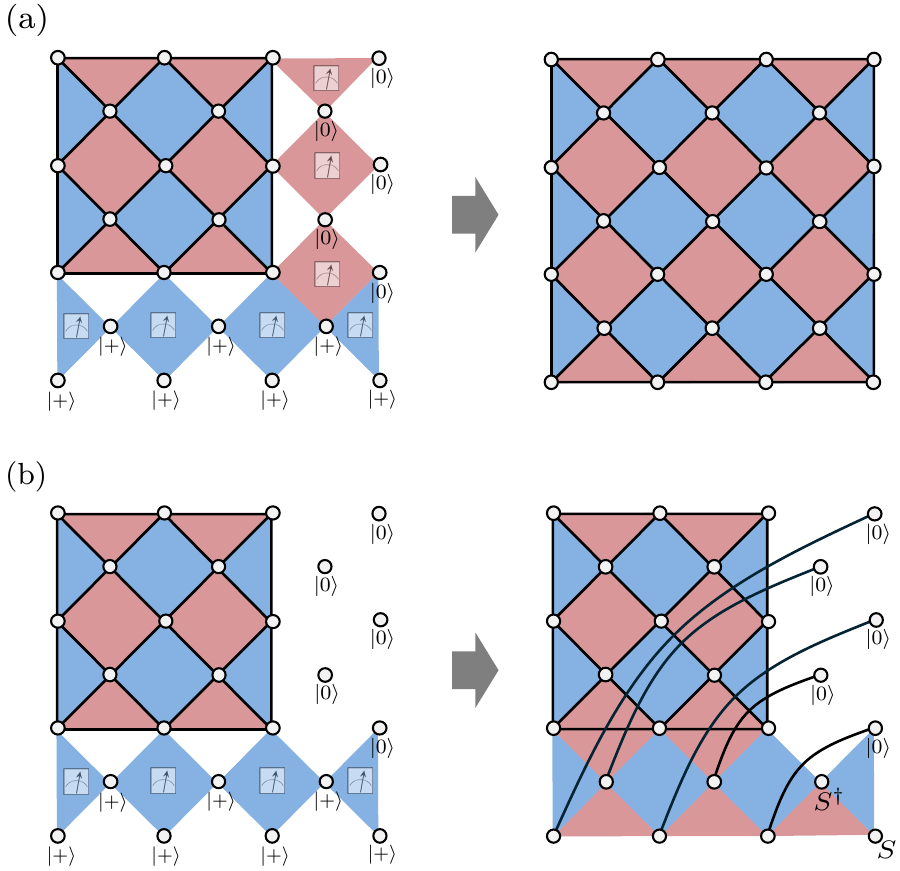}
    \caption{Growing the regular surface code from $d=3$ to $d=4$. (a) The full growing measurements. Once we measure the three new $X$ and four new $Z$ stabilizers indicated by the measurement symbols, the $d=3$ code is grown to the $d=4$ code. (b) The state after measuring all new $Z$ stabilizers. After the measurements, the frame state is stabilized by $\bar{S}{^{(3)}}^\dagger \bar{S}^{(4)}$, which is displayed in the right-hand panel.}
    \label{fig:escape}
\end{figure}
In Fig.~\ref{fig:escape}(a), we measure the four new $Z$ stabilizers and three new $X$ stabilizers to grow the code.
Each new $Z$ stabilizer commutes or anticommutes with every element of the current Clifford-stabilizer group, including the $\bar{H}_{XY}$ of the $d=3$ code, which is denoted by $\bar{H}^{(3)}_{XY}$. 
Since they anticommute with the single-qubit $X$ operators that stabilize the physical $|+\rangle$ states, the branch-flipping operator for each new $Z$ stabilizer measurement is a single-qubit $X$. 
After all new stabilizers have been measured, a subsequent stage of the protocol may measure $\bar{H}^{(4)}_{XY}$, the fold-transversal $\bar{H}_{XY}$ of the $d=4$ code. 
Notably, the (Pauli-dressed) fold-transversal $\bar{H}^{(4)}_{XY}$ neither commutes nor anticommutes with the single-qubit $X$ branch-flipping operators, so they are not compatible.
Therefore, the branch-flipping operators must be propagated in the non-Clifford circuit instead of just flipping the sign of the newly measured $Z$ stabilizers in the Pauli-proxy stabilizer tableau when the $-1$ branches are obtained. 
To account for this issue, we need to modify Algorithm~\ref{alg:circuit} as shown in Algorithm~\ref{alg:branch}.

Note that valid non-protocol measurements in Algorithm~\ref{alg:branch} are those that anticommute with at least one generator and either commute or anticommute with every other generator in the current Clifford-stabilizer group.
The procedure to handle a correction based on a midcircuit measurement outcome is detailed in Sec.~\ref{subsec:algorithm_circuit}.
The purpose of Step 2 is to cancel the incorrect sign flip performed by a Clifford simulator for the $-1$ branch. 
It can be accomplished by applying a Pauli branch-flipping operator of the Pauli-proxy stabilizer tableau. 
The branch-flipping operator applied in Step 3 is propagated further in the non-Clifford circuit in the same way as a correction operator.

Above, we did not specify how to simulate the measurements of the new $X$ stabilizers that neither commute nor anticommute with the diagonal $CZ$ and $S$ included in $\bar{H}^{(3)}_{XY}$.
Below, we show that Algorithm~\ref{alg:circuit} can directly simulate them.
\begin{figure}[tb]
\normalsize
\begin{algorithm}[Non-protocol measurements]\label{alg:branch}

\leavevmode\par
\medskip
\noindent\textbf{Input:}\par
\noindent

\begin{itemize}
    \item A noisy non-Clifford circuit performing protocol and non-protocol measurements with a sampled configuration of circuit-level Pauli errors.
\end{itemize}

\noindent\textbf{Output:}\par
\noindent
\begin{itemize}
    \item Measurement outcomes sampled from exactly the same probability distribution as in the input noisy non-Clifford circuit.
    \item Occurrence of a logical error in the output state.
\end{itemize}

\noindent\textbf{Procedure:}\par
\begin{enumerate}[label=\textbf{\arabic*.}]
  \item Execute Algorithm~\ref{alg:circuit}, treating each measurement whose branch-flipping operator can be incompatible with at least one subsequent measurement as a midcircuit measurement used to apply a correction.
  \item If each of such measurements yields the $-1$ branch with probability 50\%, flip the sign back to $+1$ in the Pauli-proxy stabilizer tableau. 
  \item Apply the corresponding branch-flipping operator as a correction immediately after the measurement in the non-Clifford circuit and continue the algorithm.
\end{enumerate}
\end{algorithm}
\end{figure}
After measuring all new $Z$ stabilizers, the frame state has the stabilizers shown on the right-hand panel of Fig.~\ref{fig:escape}(b). Here, the eigenvalues of the new $Z$ stabilizers are $+1$ on the frame state. 
The key point is that this state is also stabilized by $\bar{S}{^{(3)}}^\dagger \bar{S}^{(4)}$ as displayed in Fig.~\ref{fig:escape}(b), where $\bar{S}^{(3)}$ and $\bar{S}^{(4)}$ are the fold-transversal $\bar{S}$ gates of the $d=3$ and $d=4$ codes, respectively.
This is because each $CZ$ operator stabilizes the frame state since one of its supporting qubits is the $|0 \rangle$ state, and the $S^\dagger_i S_j$ operator also stabilizes the frame state since there is a weight-two $Z$ stabilizer supported on the two qubits on which the $S^\dagger_i$ and $S_j$ act.
Note that the weight-two $Z$ stabilizer has a $+1$ eigenvalue since the branch-flipping operator is treated as an error and thus is not acting on the frame state. 
Therefore, the current generator of the frame state that includes $\bar{H}^{(3)}_{XY}$ as its component can be multiplied by $\bar{S}{^{(3)}}^\dagger \bar{S}^{(4)}$ so that the subsequent new $X$ stabilizer measurements effectively commute with the component $\bar{S}{^{(3)}}^\dagger \bar{S}^{(4)}\bar{H}^{(3)}_{XY}$.
We remark that each new $X$ stabilizer measurement anticommutes with the single-qubit $Z$ stabilizer of a $|0\rangle$ state, and the corresponding branch-flipping operator is compatible with any future measurements. Since the Pauli proxies of $\bar{H}^{(3)}_{XY}$ and $\bar{S}{^{(3)}}^\dagger \bar{S}^{(4)}\bar{H}^{(3)}_{XY}$ are the same, Algorithm~\ref{alg:circuit} simulates the new $X$ stabilizer measurements correctly without any modification. 

Finally, we provide more examples of non-protocol measurements that Algorithm~\ref{alg:circuit} or Algorithm~\ref{alg:branch} can simulate.
Single-qubit $Z$-basis measurements on data qubits for diagonal logical magic states are simulable since they anticommute with at least one $X$ stabilizer and either commute or anticommute with every other current generator.
Another example is measurement-based injection. We can prepare a logical magic state by measuring Clifford stabilizers on a Pauli-stabilizer state. 
If the Clifford stabilizers anticommute with at least one generator and either commute or anticommute with every other generator of the Pauli-stabilizer state, Algorithm~\ref{alg:circuit} or Algorithm~\ref{alg:branch} works. For example, measuring $\bar{H}_{XY}$ on the $|\bar{0}\rangle$ state of the regular surface code satisfies this condition. 
In particular, this applies to gauging logical measurements of Clifford operators~\cite{vrty-qs5h}. 

\section{Discussion}\label{sec:discuss}

In this work, we have introduced Clifford-stabilizer simulation, an efficient method for the exact simulation of a broad family of noisy logical magic state preparation protocols. 
We have provided a framework for tracking the evolution of Clifford-stabilizer states by systematically updating the corresponding Clifford-stabilizer groups.
We have also developed an efficient algorithm to implement the group update, which results in efficient and exact simulation of non-Clifford circuits implementing logical magic state preparation protocols under circuit-level Pauli noise.
This algorithm compiles each sample of the noisy non-Clifford circuit into a Clifford circuit with an identical measurement outcome distribution that can be run on a standard Clifford simulator. 
We applied the proposed Clifford-stabilizer simulation to exactly and efficiently simulate magic state cultivation across a range of system sizes that go beyond what was accessible to existing techniques. 
Since our Clifford-stabilizer simulation method is scalable, it has the potential to simulate larger fault-distance protocols.

Our Clifford-stabilizer simulation sets the stage for exact simulation of large-scale logical magic state preparation protocols that are relevant for useful FTQC.
This fills an important gap in the literature and complements well-established methods for exact simulation of fault-tolerant Clifford logic in Pauli-stabilizer codes. 
As such, we expect Clifford-stabilizer simulation to become a standard and ubiquitous tool for the simulation of the primitive operations required for universal FTQC.
In particular, our simulation method is well suited to perform exact circuit-level noise simulations of the recently introduced scalable twisted quantum double (TQD) magic state preparation protocol~\cite{vrty-qs5h} and its generalizations~\cite{Williamson_2026Low,williamson2026fastmagicstatepreparation,christos2026nonabelianquantumlowdensityparity,j2q4-bnc6,zhu2026nonabelianqldpctqftformalism,huang2025generatinglogicalmagicstates,huang2026hybridlatticesurgerynonclifford,kfxl-rv7c,manjunath2026universalquantumcomputationgroup}. 
This should allow a fair comparison and benchmarking of magic state cultivation, the TQD protocol, and magic state distillation over a range of physically relevant error rates. 
See also related work on simulating non-Clifford gates on topological codes~\cite{bombin20182dquantumcomputation3d,brown2020fault,Scruby2022numerical,PhysRevResearch.4.043052,scruby2025fault}.

Our results raise a number of directions for future work, including the extension of our simulation method to allow less restrictive assumptions and to explore the full domain of applicability of the method. 
For example, it would be interesting to explore generalizations of the Clifford-stabilizer simulation to more cases where a measurement neither commutes nor anticommutes with current stabilizers.
In some cases, such as the example in Sec.~\ref{subsec:new_measurement}, even if the current tracked generators neither commute nor anticommute with a measurement, we can change the generators to those that either commute or anticommute with  the measurement. Hence, our Clifford-stabilizer simulation applies to such cases.
For general non-protocol measurements, it must be ensured that a measurement either commutes or anticommutes with every generator. A systematic method for checking the existence of such a set of generators should be explored. 
For certain non-protocol measurements, such as a Clifford projection on a Pauli-stabilizer state, the measurement outcomes can be neither uniformly random nor deterministic. This implies that the tracked group must contain a generator that neither commutes nor anticommutes with the measured operator.
It would be interesting to extend our framework to such cases.
Furthermore, it would be valuable to extend the method beyond qubits to qudits including non-Abelian group algebras that underlie recent topological magic state preparation protocols~\cite{huang2025generatinglogicalmagicstates,huang2026hybridlatticesurgerynonclifford,manjunath2026universalquantumcomputationgroup}. 
Finally, the generalization of our formalism to stabilizer groups generated by operators from higher levels of the Clifford hierarchy should be explored. 
Such a formalism could accommodate a potential generalization of the Bravyi-K\"onig bound on achievable gates on Clifford hierarchy stabilizer groups~\cite{PhysRevLett.110.170503}. 

We note independently developed forthcoming  work~\cite{julio_upcoming,tom_upcoming}: Ref.~\cite{julio_upcoming} develops an efficient method for simulating non-Clifford circuits with diagonal third-level gates in fault-tolerant protocols, whereas Ref.~\cite{tom_upcoming} combines this approach with existing methods to study logical failure rates in large-scale universal circuits.

\begin{acknowledgements}
We thank Isaac Kim, Lucas Daguerre, and Samyak Surti for inspiring discussions at the outset of this project and for helpful comments on an earlier version of the manuscript. 
We also appreciate Julio Carlos Magdalena De La Fuente, Thomas R. Scruby, and Thomas B. Smith for helpful comments on an earlier version of the manuscript.
YT is grateful to Keisuke Fujii, Nicholas Fazio, Campbell McLauchlan, Seok-Hyung Lee, Timo Hillmann, Georgia Nixon, Lucas English, and Andrew Li for helpful discussions.

This research was sponsored by the State of Maryland through the QSANDRA program at the University of Maryland’s Applied Research Laboratory for Intelligence and Security (ARLIS).
YT is supported by MEXT Quantum Leap Flagship Program (MEXT Q-LEAP) Grant No. JPMXS0120319794, JST COI-NEXT Grant No. JPMJPF2014, JST Moonshot R\&D Grant No. JPMJMS256E, JST CREST Grant No. JPMJCR24I3, and Japan Society for the Promotion of Science (JSPS) KAKENHI Grant No. 25KJ1756.
DJW is supported by the Australian Research Council Discovery Early Career Research Award (DE220100625).
\end{acknowledgements}

\bibliography{main}

\newpage

\appendix

\section{Details of simulated magic state cultivation}
\label{app:cultivation}
In this appendix, we provide the details of the fold-transversal surface code cultivation simulation we performed in Sec.~\ref{sec:simulation}.

\subsection{Protocols}
\label{app_subsec:cultivation_protocol}
We provide the procedures of the protocols for each fault distance $f$ in Table~\ref{tab:cultivation_procedure}.
The protocols post-select on runs in which all measurement outcomes are trivial.
The protocols for $f=3$ and $f=5$ are based on those in Ref.~\cite{gpvl-lg4c}, but the procedures are slightly different. In particular, we measure Pauli stabilizers only on the regular surface codes, while some Pauli stabilizers are measured on the rotated surface codes in Ref.~\cite{gpvl-lg4c}. 
Moreover, we develop the $f=7$ protocol, a regime not explored in Ref.~\cite{gpvl-lg4c}.

For the initial $|\bar{T}\rangle$ injection into the $d=3$ rotated surface code, we use the unitary encoding circuit~\cite{tsai2025unitaryencodersurfacecodes} shown in Fig.~\ref{fig:injection}.
To grow the $d=3$ rotated code into the $d=3$ regular code, we apply a unitary circuit which corresponds to a half-cycle of Pauli-stabilizer measurements for the weight-$4$ stabilizers as shown in Fig.~\ref{fig:cultivation}. 
For growing the $d=3$ regular code into the $d=5$ regular code, we first grow the $d=3$ regular code into the $d=5$ rotated code using the unitary circuit used in Refs.~\cite{gpvl-lg4c,tsai2025unitaryencodersurfacecodes}. Then, we grow the $d=5$ rotated code into the $d=5$ regular code by performing a half-cycle of Pauli-stabilizer measurements. 
When growing the $d=5$ regular code into the $d=7$ regular code, we directly grow the code using the unitary circuit proposed in Ref.~\cite{Higgott2021optimallocalunitary}.
Each Pauli stabilizer is measured using a single ancilla qubit~\cite{PhysRevResearch.7.033074}.

Below, we explain the circuits for measuring $\bar{H}_{XY}$.
Consider a regular surface code of odd code distance $d$. We define the fold line as the diagonal connecting the upper-left and lower-right corners of the lattice. The data qubits on the fold line are denoted by the ordered set
\begin{equation}
    \mathcal{Q}=(q_0,q_1,\ldots,q_{2d-2}),
\end{equation}
indexed from the upper-left corner to the lower-right corner. Let $\mathcal{T}$ denote the set of all data qubits on the lower-left side of the fold line, excluding the qubits on the fold line. Reflection across the fold line is denoted by $\mu$, so that every $t\in\mathcal{T}$ is paired with the data qubit $\mu(t)$ at a mirrored position above the fold line.
The fold-transversal $\bar{H}_{XY}$ operator is expressed as
\begin{align}
    \bar{H}_{XY}
    &=e^{-i\pi/4}\bar{S}\,\bar{X}
    \nonumber\\
    &=e^{-i\pi/4}
    \prod_{j\ \mathrm{even}}(SX)_{q_j}
    \prod_{j\ \mathrm{odd}}(S^{\dagger}X)_{q_j}
    \prod_{t \in\mathcal{T}}CZ_{t,\mu(t)}
    \nonumber\\
    &=\prod_{j\ \mathrm{even}}(TXT^\dagger)_{q_j}
    \prod_{j\ \mathrm{odd}}(T^{\dagger}XT)_{q_j}
    \prod_{t \in\mathcal{T}}CZ_{t,\mu(t)}.
    \label{eq:hxy_hermitian_decomp}
\end{align}
In Eq.~\eqref{eq:hxy_hermitian_decomp}, the factors are all Hermitian, so $\bar{H}_{XY}$ is measured by jointly measuring the physical Hermitian operators. The single-qubit Hermitian factors $TXT^\dagger$ and $T^\dagger XT$ are measured using the controlled operations
\begin{align}
    \Lambda(TXT^\dagger)_{a,q}
    &=(I_a\otimes T_q)\,
      \CNOT_{a,q}\,
      (I_a\otimes T_q^{\dagger}),
    \\
    \Lambda(T^\dagger XT)_{a,q}
    &=(I_a\otimes T_q^{\dagger})\,
      \CNOT_{a,q}\,
      (I_a\otimes T_q),
\end{align}
respectively, where $a$ denotes an ancilla qubit and $q$ denotes a data qubit.
The $CZ$ factor is measured using a $CCZ$ gate. 

\begin{table}[t]
    \centering
    \caption{Procedures for the fold-transversal surface code cultivation protocols. In this table, rotated and regular surface codes are abbreviated as rotated and regular, respectively.}
    \small
    \renewcommand{\arraystretch}{1.5}
    \begin{tabular}{ll}
        \hline\hline
        \parbox[t]{0.12\linewidth}{\centering Protocol}
        & \parbox[t]{0.76\linewidth}{Procedure} \\
        \midrule
        \parbox[t]{0.12\linewidth}{\centering $f=3$}
        & \parbox[t]{0.76\linewidth}{%
              \begin{enumerate}
                  \setlength{\itemsep}{0pt}
                  \setlength{\parskip}{0pt}
                  \setlength{\parsep}{0pt}
                  \setlength{\topsep}{0pt}
                  \item $|\bar{T}\rangle$ injection into $d=3$ rotated.
                  \item Grow $d=3$ rotated into $d=3$ regular.
                  \item Measure stabilizers of $d=3$ regular.
                  \item Measure $\bar H_{XY}$ twice on $d=3$ regular.
                  \item Measure stabilizers of $d=3$ regular.
              \end{enumerate}
          } \\
        \addlinespace 
        \parbox[t]{0.12\linewidth}{\centering $f=5$}
        & \parbox[t]{0.76\linewidth}{%
              \begin{enumerate}
                  \setlength{\itemsep}{0pt}
                  \setlength{\parskip}{0pt}
                  \setlength{\parsep}{0pt}
                  \setlength{\topsep}{0pt}
                  \item $|\bar{T}\rangle$ injection into $d=3$ rotated.
                  \item Grow $d=3$ rotated into $d=3$ regular.
                  \item Measure stabilizers of $d=3$ regular.
                  \item Measure $\bar H_{XY}$ twice on $d=3$ regular.
                  \item Grow $d=3$ regular into $d=5$ regular.
                  \item Measure stabilizers of $d=5$ regular.
                  \item Measure $\bar H_{XY}$ twice on $d=5$ regular.
                  \item Measure stabilizers of $d=5$ regular.
              \end{enumerate}
          } \\
        \addlinespace
        \parbox[t]{0.12\linewidth}{\centering $f=7$}
        & \parbox[t]{0.76\linewidth}{%
              \begin{enumerate}
                  \setlength{\itemsep}{0pt}
                  \setlength{\parskip}{0pt}
                  \setlength{\parsep}{0pt}
                  \setlength{\topsep}{0pt}
                  \item $|\bar{T}\rangle$ injection into $d=3$ rotated.
                  \item Grow $d=3$ rotated into $d=3$ regular.
                  \item Measure stabilizers of $d=3$ regular.
                  \item Measure $\bar H_{XY}$ twice on $d=3$ regular.
                  \item Grow $d=3$ regular into $d=5$ regular.
                  \item Measure stabilizers of $d=5$ regular.
                  \item Measure $\bar H_{XY}$ twice on $d=5$ regular.
                  \item Grow $d=5$ regular into $d=7$ regular.
                  \item Measure stabilizers of $d=7$ regular.
                  \item Measure $\bar H_{XY}$ twice on $d=7$ regular.
                  \item Measure stabilizers of $d=7$ regular.
              \end{enumerate}
          } \\
        \hline\hline
    \end{tabular}
    \label{tab:cultivation_procedure}
\end{table}

\begin{figure}[t]
    \centering
    \includegraphics[width=0.9\linewidth]{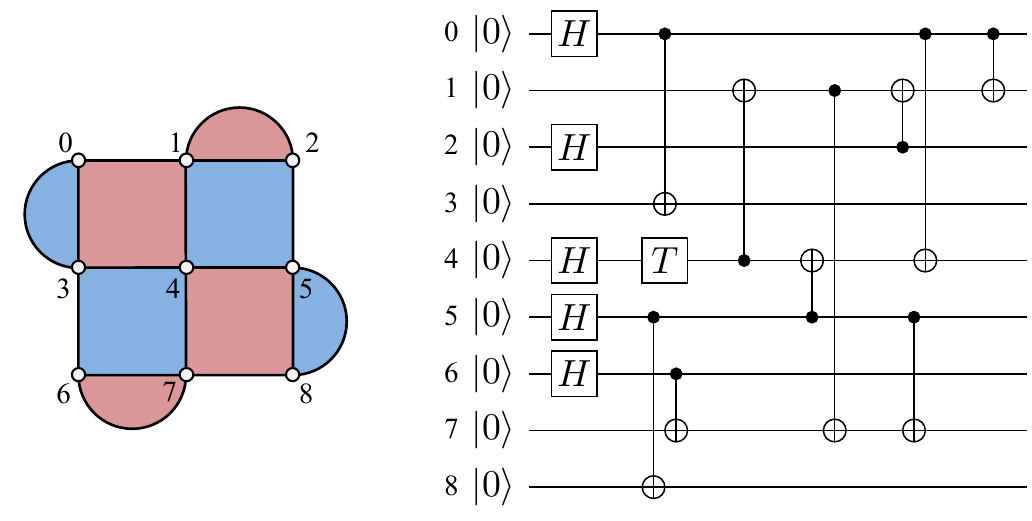}
    \caption{Unitary encoding circuit for the $|\bar{T}\rangle$ state of the $d=3$ rotated surface code~\cite{tsai2025unitaryencodersurfacecodes}. This circuit was used at the injection stage in our simulation.}
    \label{fig:injection}
\end{figure}

We now describe the detailed circuits we used in our simulation to measure $\bar{H}_{XY}$ on the $d=3$, $d=5$, and $d=7$ regular surface codes.
The circuits are constructed systematically based on the construction in Ref.~\cite{gpvl-lg4c}. For $d=3$ and $d=5$, we use the same gate scheduling as in Ref.~\cite{gpvl-lg4c}. 
Note that the circuits primarily considered in Ref.~\cite{gpvl-lg4c} use controlled-$S$ gates, and a $T^\dagger$ gate is applied to the cat state right before readout to acquire the required phase. In our implementation, each factor is instead measured as a Hermitian operator by applying $T$ and $T^\dagger$ gates to the data qubits, and thus no $T^\dagger$ gate is applied to the cat state.
For $d=7$, we use a circuit that naturally extends this scheduling. 
The systematic construction is provided below.
First, as ancilla qubits, we use an $l$-qubit cat state
\begin{equation}
    \lvert \mathrm{cat}_{l}\rangle
    =\frac{\lvert0\rangle^{\otimes l}+\lvert1\rangle^{\otimes l}}{\sqrt{2}},
\end{equation}
whose physical qubits are labeled $(a_0,a_1,\ldots,a_{l-1})$. 
Our circuits use the cat states with $l$ smaller than the number of the data qubits, so the assignment and ordering of the controlled gates must be chosen carefully to preserve the desired fault distance. 
To explain the detailed gate scheduling, we introduce more parameters. The qubits in $\mathcal{T}$ form $d-1$ diagonal lines parallel to the fold line.  Starting with the line closest to the fold line and proceeding toward the lower-left corner, the qubits on these lines are labeled $\mathcal{T}_1,\mathcal{T}_2,\ldots,\mathcal{T}_{d-1}$.  Thus, the set $\mathcal{T}$ is partitioned into subsets
\begin{equation}
    \mathcal{T}=\bigsqcup_{k=1}^{d-1}\mathcal{T}_k.
    \label{eq:half-lattice-partition}
\end{equation}
The qubits in each subset are denoted by
\begin{equation}
    \mathcal{T}_k
    =\left\{t_{k,j}:j=0,1,\ldots,2(d-k)-2\right\},
\end{equation}
where increasing $j$ follows the line from upper left to lower right.  Thus, the line adjacent to the fold line has $2d-3$ qubits, and each subsequent line has two fewer qubits, ending with one qubit in $\mathcal{T}_{d-1}$. We extend the notation to $k=0$ by setting $\mathcal{T}_0=\mathcal{Q}$ and $t_{0,j}=q_j$.

In our circuits, the controlled gates associated with $\mathcal{T}_0=\mathcal{Q}$ are applied first, followed by those associated with $\mathcal{T}_1$, and so on until $\mathcal{T}_{d-1}$. Within each $\mathcal{T}_k$, the gates are applied in increasing order of $j$, starting from $j=0$.
About the assignment of each controlled gate, within each $\mathcal{T}_k$, the controlled gates are assigned cyclically to the qubits constituting the cat state. The cat-state qubit assigned to the controlled gate associated with $t_{k,0}$ depends on $k$. To specify this assignment, we introduce
\begin{equation}
    s_k\in\{0,1,\ldots,l-1\},
\end{equation}
where $s_k$ identifies the cat-state qubit that controls the first, upper-left factor associated with $\mathcal{T}_k$. In particular, the cat-state qubit $a_{s_0}$ controls the factor $TXT^\dagger$ on $q_0$, while $a_{s_k}$ controls the factor $CZ_{t_{k,0},\mu(t_{k,0})}$ for $k\geq1$.
Since the remaining controlled gates are assigned cyclically, the index of the cat-state qubit controlling the $j$th factor associated with $\mathcal{T}_k$ is 
\begin{equation}
    c(k,j)=(s_k+j)\bmod l.
\end{equation}
Accordingly, the controlled gate $G_{k,j}$ acting on the $j$th qubit, counted from the upper left in $\mathcal{T}_k$, is given by
\begin{align}
    G_{0,j}
    &=
    \begin{cases}
        \Lambda(TXT^\dagger)_{a_{c(0,j)},q_j}, & j\ \text{even},\\
        \Lambda(T^\dagger XT)_{a_{c(0,j)},q_j}, & j\ \text{odd},
    \end{cases}
    \qquad 0\leq j\leq2d-2,
    \\
    G_{k,j}
    &=CCZ_{a_{c(k,j)},t_{k,j},\mu(t_{k,j})},
    \qquad
    \begin{matrix}
        1\leq k\leq d-1,\\
        0\leq j\leq2(d-k)-2.
    \end{matrix}
\end{align}
In summary, the circuit first applies 
\begin{equation}
    G_{0,0},G_{0,1},\ldots,G_{0,2d-2}
\end{equation}
 in this order, followed by
 \begin{equation}
    G_{1,0},G_{1,1},\ldots,G_{1,2d-4}.
\end{equation}
It then proceeds analogously for increasing $k$ until $k=d-1$.

In Table~\ref{tab:cat_index}, we show the sizes of the cat states and the indices of the cat-state qubits that control the upper-leftmost factor associated with each diagonal line $\mathcal{T}_k$ used in our simulation.
\begin{table}[t]
    \centering
    \caption{The sizes $l$ of the cat states and the indices $s_k$ of the cat-state qubits supporting the controlled gates that act on the upper-leftmost data qubit of each diagonal line $\mathcal{T}_k$.}
    \begin{tabular}{@{}c c c@{}}
        \toprule
        protocol & $l$ & $(s_0,s_1,\ldots,s_{d-1})$ \\
        \midrule
        $f=3$ & $3$ & $(0,0,2)$ \\
        $f=5$ & $5$ & $(0,3,0,1,2)$ \\
        $f=7$ & $8$ & $(0,3,7,2,5,0,3)$ \\
        \bottomrule
    \end{tabular}
    \label{tab:cat_index}
\end{table}
For fault-tolerant preparation of the cat states, we used the circuits shown in Fig.~\ref{fig:cat}. The circuits for the $f=3$ and $f=5$ protocols are the same as those used in Ref.~\cite{gpvl-lg4c}. The preparation circuit of the $5$-qubit cat state for the $f=5$ protocol uses one flag qubit. For the $f=7$ protocol, we use the $8$-qubit cat state prepared with six verification qubits discovered in Ref.~\cite{13v7-n843}. The qubits verify fault tolerance through the transversal $\CNOT$ gates. The preparation is accepted only when the six verification outcomes are all equal.
In all cases, the cat states are unencoded after the controlled gates are applied. 
Note that a smaller cat state or a simpler preparation circuit could be used to achieve fault tolerance. However, we emphasize that optimizing the performance of the protocols is not the objective of this work. Rather, our goal is to demonstrate that our simulation method can efficiently and exactly simulate large logical magic state preparation circuits. Thus, optimization of the protocols themselves is outside the scope of this work.
 
\begin{figure}[t]
    \centering
    \includegraphics[width=0.9\linewidth]{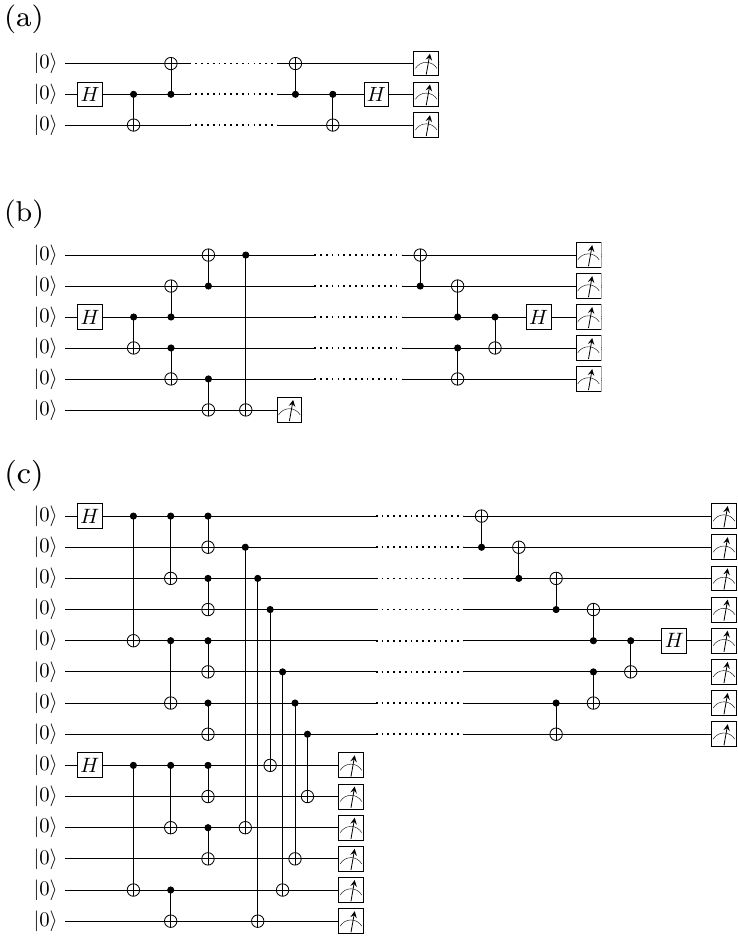}
    \caption{Fault-tolerant cat-state preparation circuits used for the fold-transversal magic state cultivation protocols with different fault distances. (a) $f=3$. (b) $f=5$. We post-select on the trivial measurement outcome of the bottom qubit. (c) $f=7$. We post-select on the all-$0$ or all-$1$ measurement outcomes of the bottom $6$ qubits. In all cases, at the dotted lines, controlled gates are applied between the cat states and the data qubits. After that, the cat states are unencoded.}
    \label{fig:cat}
\end{figure}

\subsection{Simulation setup}
\label{app_subsec:cultivation_setup}
We now describe the simulation setup and relevant implementation details.
In our simulation, ideal non-Clifford circuits are expressed using the elementary gates $H$, $\CNOT$, $T$, $T^\dagger$, and $CCZ$, together with $|0\rangle$ state initializations and $Z$-basis measurements. 
Note that the Pauli-proxy Clifford circuits contain Clifford errors generated by the error-propagation step in our algorithm. Thus, the Clifford circuits are expressed using the elementary gates $H$, $\CNOT$, $S$, $S^\dagger$, $CZ$, and Pauli $X$ and $Z$ gates, along with $|0\rangle$ initializations and $Z$-basis measurements.

For the noise model, we use the following standard circuit-level Pauli error model. After every ideal $k$-qubit unitary gate, we apply a $k$-qubit depolarizing channel with total error probability $p$. The channel is
\begin{equation}
    \mathcal{E}^{(k)}_p(\rho)
    =(1-p)\rho
    +\frac{p}{4^k-1}
      \sum_{P\in\{I,X,Y,Z \}^{\otimes k}\setminus\{I^{\otimes k}\}}
      P\rho P,
\end{equation}
where $\rho$ is a density operator.
No error is applied with probability $1-p$, while each of the $4^k-1$ nonidentity Pauli operators is applied with probability $p/(4^k-1)$.
For initialization, each $|0\rangle$ preparation is followed by an $X$ error with probability $p$. Also, immediately before each $Z$-basis measurement, an $X$ error is applied with probability $p$.
We do not include idling noise in our simulation.

Regarding the Pauli-stabilizer measurement schedule, in our simulations, Pauli stabilizers are measured sequentially by reusing a single ancilla qubit. We adopt this sequential schedule solely to ensure that the simulated protocols form a consistent family. Using one ancilla qubit per stabilizer allows the measurements to be parallelized, which might be more practical.
However, with this choice, obtaining enough samples from the state-vector simulator to reduce the error bars in Fig.~\ref{fig:numerics_statevector} sufficiently to clearly demonstrate agreement with Clifford-stabilizer simulation would be computationally hard.
In Sec.~\ref{app_subsec:cultivation_data}, we provide data showing the difference of the two schedules, although the difference is not important for the purpose of this work.

We next describe the implementation of FEC. For decoding, we use \texttt{PyMatching}~\cite{Higgott2025sparseblossom} to compute corrections based on the syndromes obtained by the fictitious round of stabilizer measurements. 
Then, we apply the corrections as errors in the circuits. These applied errors are further propagated through the final fictitious $\bar{H}_{XY}$ measurement circuit. 
Related to this, in our simulation, we use \texttt{stim.TableauSimulator} to run the Pauli-proxy Clifford circuits because we dynamically change a Clifford circuit when applying a Pauli correction for FEC in the middle of the circuit.
Even if a recovery operation is a Pauli, after propagating the Pauli through the subsequent non-Clifford circuit and mapping that circuit to a Clifford circuit, the resulting Clifford circuit is different depending on what Pauli correction we have applied. 

When evaluating a logical error, we need to check whether the final fictitious logical measurement commutes or anticommutes with the current Clifford-stabilizer group. 
To do this, we use \texttt{TableauSimulator.peek\_observable\_expectation}. 
The commutation relation of a protocol measurement on the data qubits with the data tableau is equivalent to that of an ancilla readout with the entire tableau. Thus, if \texttt{TableauSimulator.peek\_observable\_expectation} for the ancilla readout in the Pauli-proxy Clifford circuit returns $0$, the protocol measurement anticommutes with the current Clifford-stabilizer group. 

Finally, we note a convention used in computing the acceptance rates reported throughout this work.
In our simulation, each Monte Carlo shot simulates the whole circuit including the cat-state preparation, discarding the shot even when the verification measurements for the cat-state preparation detect errors.
In practice, the cat states can be prepared in parallel with the data protocol, so the reported acceptance rates computed in this way may overestimate the practical space-time cost. 
However, this distinction does not affect the objective of our numerical simulation, which is to demonstrate the efficiency and exactness of our simulation method for large logical magic state preparation circuits, rather than to obtain an optimized estimate of protocol performance.

\subsection{Additional data}
\label{app_subsec:cultivation_data}
We now provide additional data on the Pauli-stabilizer measurement schedules discussed above. 
Here, we quantify the effect of the sequential and parallel stabilizer-measurement schedules on the logical error rates and the acceptance rates for reference.
In Fig.~\ref{fig:numerics_parallel}, we compare the sequential and parallel schedules for the $f=3$ protocol, calculated via Clifford-stabilizer simulation.
\begin{figure}[t]
    \centering
    \includegraphics[width=0.9\linewidth]{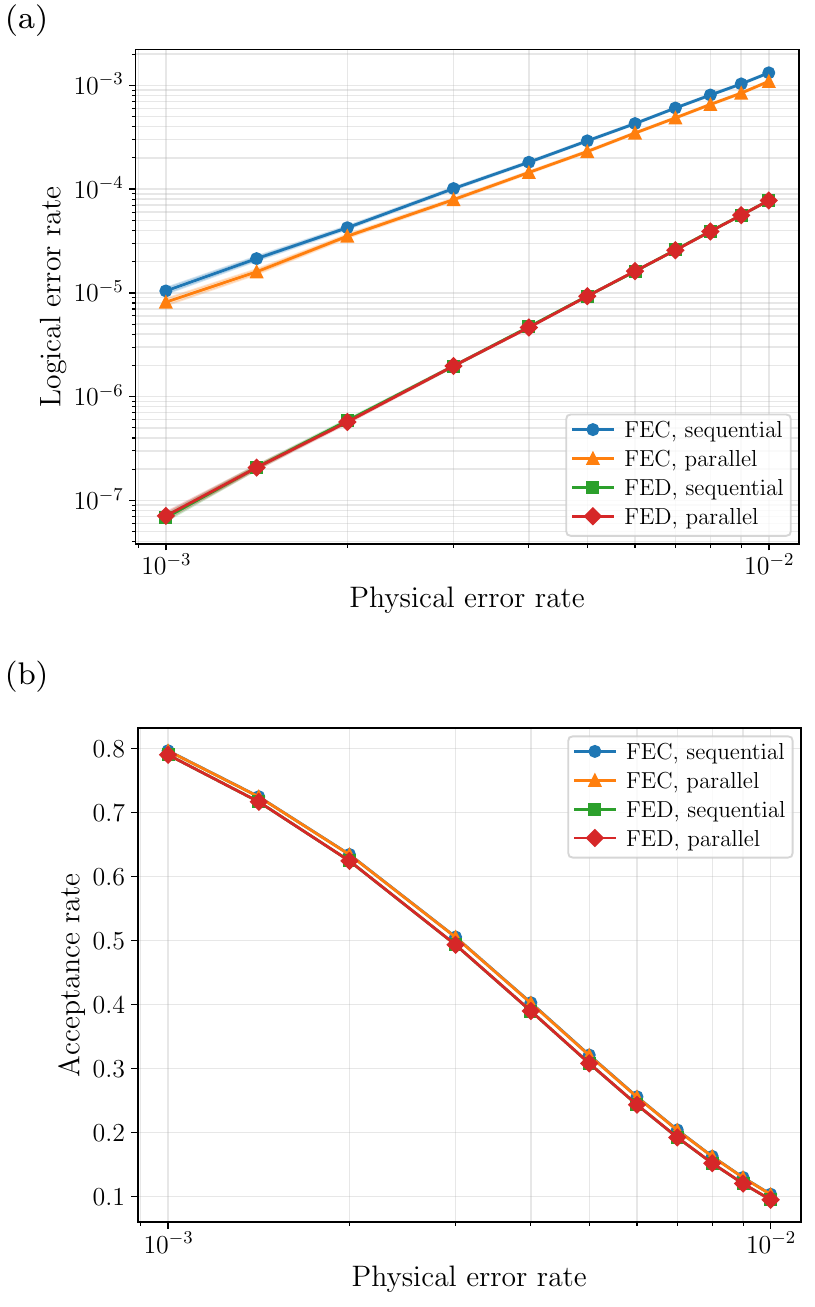}
    \caption{Comparison of sequential and parallel Pauli-stabilizer measurement schedules for the fold-transversal magic state cultivation of $f=3$. The results for the sequential schedule are the same as those shown in Figs.~\ref{fig:numerics_ec} and~\ref{fig:numerics_ed}. For the parallel schedule, the same number of Monte Carlo samples is used as for the corresponding sequential schedule. (a) Logical error rate. (b) Acceptance rate.}
    \label{fig:numerics_parallel}
\end{figure}
The results show that, with FED, the two schedules yield nearly identical values. In contrast, with FEC, the sequential schedule yields a slightly higher logical error rate than the parallel schedule, even in the absence of idling noise. 
This difference arises from the number of errors occurring during the final noisy stabilizer-measurement round that are post-selected. We note that the sequential schedule first measures all $X$ stabilizers, then measures all $Z$ stabilizers. 
In the case of FED, an error, including a hook error occurring during the final noisy stabilizer-measurement round is post-selected regardless of the measurement schedule because it is detected by the fictitious round even if it is not detected by the noisy round itself. 
On the other hand, in the case of FEC with the sequential schedule, some errors, especially hook errors occurring during the final noisy $Z$ stabilizer-measurement round, are not detected by the noisy round itself. They are only detected by the fictitious round, so they are not post-selected and instead corrected. With the parallel schedule, hook errors from both noisy $X$ and $Z$ stabilizer measurements are detected by the noisy round itself, so they are post-selected even in the case of FEC. 
Thus, measuring the stabilizers in parallel using one ancilla qubit per stabilizer can achieve lower logical error rates than those reported in Fig.~\ref{fig:numerics_ec}. We emphasize again, however, that the purpose of our simulation is not to establish the best performance of the protocols. Rather, it is intended to demonstrate that our simulation method can efficiently and exactly simulate large logical magic state preparation circuits.

\clearpage
\section{Supplementary proofs}
\label{app:proofs}
In this appendix we give the proofs of two claims from the main text. 

\subsection{
  \texorpdfstring{
    Eq.~\eqref{eq:dressings_commute_each}
    $\Longrightarrow$
    Eq.~\eqref{eq:commute_anti_condition}
  }{
    Eq. (\ref*{eq:dressings_commute_each})
    => Eq. (\ref*{eq:commute_anti_condition})
  }
}
Let $C$ be a Clifford factor and $D$ a Pauli dressing. For an arbitrary single-qubit Pauli $P$, we write $DP=sPD$, where $s\in\{\pm1\}$. Since $D$ commutes with
the dressing $D_C(P)=P(CPC^\dagger)$,
\begin{equation}
  D(CPC^\dagger)=s(CPC^\dagger)D.
\end{equation}
Conjugating by $C^\dagger$ gives
\begin{equation}
  (C^\dagger DC)P=sP(C^\dagger DC).
\end{equation}
Thus, $D$ and $C^\dagger DC$ have the same commutation relation with
every Pauli $P$ and therefore can differ only by a phase.
Since conjugation preserves $D^2$, this phase can only be $\pm1$:
\begin{equation}
  C^\dagger DC=\pm D.
\end{equation}
Equivalently,
\begin{equation}
[C,D]=\pm I.
\end{equation}

\subsection{
  \texorpdfstring{
    Eq.~\eqref{eq:dressings_commute_each}
    $\Longrightarrow$
    Eq.~\eqref{clifford-clifford-dress}
  }{
    Eq. (\ref*{eq:dressings_commute_each})
    => Eq. (\ref*{clifford-clifford-dress})
  }
}

For two Clifford factors $C$ and $C'$, compare the action of $CC'$ and $C'C$ on an arbitrary single-qubit Pauli $P$.
\begin{align}
  CC'P(CC')^\dagger &=C(P D_{C'}(P))C^\dagger  \\
  &=(CPC^\dagger) (CD_{C'}(P)C^\dagger)\\
  &=P D_C(P)\,(\pm D_{C'}(P)),
\end{align}
whereas
\begin{align}
  C'CP(C'C)^\dagger&=C'(P D_C(P))C'^\dagger  \\
    &=(C'PC'^\dagger)(C' D_C(P)C'^\dagger) \\
  &=P D_{C'}(P)\,(\pm D_C(P)).
\end{align}
Since all Pauli dressings mutually commute due to Eq.~\eqref{eq:dressings_commute_each}, these two results differ only by a sign:
\begin{equation}
    CC'P(CC')^\dagger=\pm C'CP(C'C)^\dagger.
\end{equation}
Now we have
\begin{align}
    [C,C']C'CP(C'C)^\dagger[C,C']^\dagger&=CC'P(CC')^\dagger\\
    &=\pm C'CP(C'C)^\dagger.
\end{align}
Here, the operators $C'CP(C'C)^\dagger$ generate the Pauli group modulo overall phases as $P$ ranges over all single-qubit Paulis.
Thus, $[C,C']$ either commutes or anticommutes with every Pauli, and hence $[C,C']$ is a Pauli operator up to a phase.
Furthermore, since Eq.~\eqref{eq:dressings_commute_each} implies Eq.~\eqref{eq:commute_anti_condition} as proven above, both $C$ and $C'$ either commute or anticommute with every Pauli dressing, and thus $[C,C']$ commutes with every Pauli dressing. 

\end{document}